\newcommand{\pd}{\partial}
\newcommand{\bC}{{\mathbb C}}
 \newcommand{\bH}{{\mathbb H}}
\newcommand{\bP}{{\mathbb P}}
\newcommand{\bR}{{\mathbb R}}
\newcommand{\bZ}{{\mathbb Z}}
\newcommand{\cO}{{\mathcal O}}
\newcommand{\half}{\frac{1}{2}}
\newcommand{\dbar}{\bar{\partial}}
\newtheorem{theorem}{Theorem}[section]
\newtheorem{theorem/definition}{Theorem/Definition}[section]
\newtheorem{prop}{Proposition}[section]
\newtheorem{lm}{Lemma}[section]
\theoremstyle{remark}
\theoremstyle{definition}
\newcommand{\be}{\begin{equation}}
\newcommand{\ee}{\end{equation}}
\newcommand{\bea}{\begin{eqnarray}}
\newcommand{\ben}{\begin{eqnarray*}}
\newcommand{\een}{\end{eqnarray*}}
\newcommand{\eea}{\end{eqnarray}}
\newcommand{\bet}{\begin{equation}
\begin{split}}
\newcommand{\eet}{\end{split}
\end{equation}}
\definecolor{yellow}{rgb}{1,1,0}
\definecolor{orange}{rgb}{1,.7,0}
\definecolor{red}{rgb}{1,0,0}
\definecolor{white}{rgb}{1,1,1}
\definecolor{A}{rgb}{.75,1,.75}
\begin{document}

\title
{On Geometry and Symmetry of Kepler Systems. I.}

\author{Jian Zhou}
\address{Department of Mathematical Sciences\\
Tsinghua University\\Beijing, 100084, China}
\email{jzhou@math.tsinghua.edu.cn}

\begin{abstract}
We study the Kepler metrics on Kepler manifolds from the point of view of Sasakian geometry 
and Hessian geometry.
This establishes a link between the problem of classical gravity
and the modern geometric methods in the study of AdS/CFT correspondence
in string theory.
\end{abstract}
\maketitle


\section{Introduction}


This is the first of a series of papers in which we make a revisit to the geometry and symmetry of
Kepler problem with the vantage point of view of quantum gravity, supergravity,
and string theory.

Time and again in history,
geometry and symmetry have been used to study problems in gravity,
in particular,
the two-body problem, aka the Kepler problem.
This tradition goes back to the ancient Greeks
who regarded the circles and spheres as the geometric objects with perfect symmetries
and used them to build their models of the universe.
Geometric objects with discrete symmetries, such as the five Platonic solids,
were assigned by Plato to earth, air, water, fire, and a heavenly fifth element.
In {\em Mysterium Cosmographicum}, published in 1596,
Kepler proposed a model of the solar system by relating
the five extraterrestrial planets known at that time to the five Platonic solids.
This work attracted the attention of Tycho Brahe who offered Kepler
a job as an assistant and hence also access to his empirical data,
based on which Kepler formulated his second law in 1602, first law in 1605,
and third law in 1618.
With the introduction of elliptic orbits,
the Kepler system seemed to lose the connection with the circular symmetry.
In Newton's {\em Principia} published in 1687,
Euclidean geometry was used to establish the connection
between the inverse square law of the universal gravity and Kepler's laws.
With the advance of calculus,
the Kepler system became one of the first examples of integrable systems,
i.e., those solvable by quadrature,
and geometry and symmetry seemed to lose its relevance to the Kepler problem.

An interesting attempt in the nineteenth century to resurrect geometry and symmetry 
in the study of the Kepler system in the nineteenth century
is Hamilton's work \cite{Ham} on hodograph published in 1846.
He showed that the velocity vector of the Kepler system moves on a circle.
Only in the twentieth century,
geometry and symmetry gradually regained their positions in the study of gravity.
First of all, Einstein's general relativity in 1915 is a geometric
theory of gravity;
secondly,
N\"other's theorem published in 1917
revealed the connection between symmetry and conservation laws.
Progresses were made first in the quantum Kepler system,
e.g., the quantum mechanical system of the hydrogen atom,
where hidden symmetry was first discovered.
In 1926, one year before Schr\"odinger  equation was discovered,
Pauli \cite{Pauli} solved the problem of finding the spectrum of the hydrogen atom
using an $o(4) = o(3) \oplus o(3)$-symmetry generated by the angular-momentum
operators
and the Laplace-Runge-Lenz operators.
In 1935,
Fock \cite{Fock} gave a geometric interpretation of the appearance of $o(4)$
by  working in the momentum space by a Fourier transform
followed by an embedding of $\bR^3$ into $S^3$ by stereographic projection.
After thirty years, in the 1960s,
the $o(4)$-symmetry of the quantum Kepler system was extended
to the $o(4,1)$- and $o(4,2)$-symmetry,
see e.g.
Bander and Itzykson \cite{Ban-Itz} and
Gy\"orgyi \cite{Gyo} and the references therein.
In Bander-Itzykson \cite{Ban-Itz} the results were generalized to
quantum Kepler system in arbitrary dimension.
Such studies of the symmetry of quantum Kepler system
also leads to the study of classical Kepler system,
for example in a 1965 paper in which Bacry \cite{Bac} tried to
construct the $o(4,1)$-symmetry
of the quantum Kepler system he started with the classical
Kepler system.
In another direction,
attentions were shifted from
infinitesimal symmetry described by Lie algebras
to global symmetry described by Lie group action.
In the 1966 paper  Bacry-Ruegg-Souriau \cite{Bac-Rue-Sou},
the global study of the hidden symmetry of the Kepler system was initiated.
Quantum aspects of the Kepler problem was discussed in a
book by Englefield \cite{Eng} published in 1972.

Since the 1970s modern differential geometry was applied to Kepler problem.
First Riemannian geometry was applied by
Moser \cite{Moser} to give a classical mechanics version of Fock's result.
He applied the idea of regularization of many-body system developed by
e.g. Levi-Civita \cite{LC}, Sundman \cite{Sun}, Siegel \cite{Sie}, Kustaanheimo-Stiefel \cite{KS}
to relate the solutions of negative energy of the Kepler system
to geodesic flow on the 3-sphere, and
to identify the phase space of orbits of negative energy of the Kepler
system to $K=T^+S^3 = T^*S^3 - S^3$.
Moser's results were generalized to the case of nonnegative energy by Belbruno \cite{Bel}
and Osipov \cite{Osi}.
For an exposition,
see Milnor \cite{Mil}.
See also Ligon-Schaaf \cite{Lig-Sch} and Heckman-de Laat \cite{Hec-deL}.

The manifold $K$ was called the {\em Kepler manifold} by Souriau who as a pioneer
of symplectic geometry contributed to the introduction or
the development of many important concepts,
such as the moment map, the coadjoint action and the coadjoint orbits,
geometric quantization, and he gave a classification of the homogeneous symplectic manifolds,
known as the Kirillov-Kostant-Souriau theorem.
Many of these developments can be found in his book \cite{Souriau97} first published in 1970.
He applied these developments in symplectic geometry to the Kepler problem in two papers,
Souriau \cite{Souriau74} and Souriau \cite{Souriau83},
where $K$ was shown to be a coadjoint orbit of $SO(4, 2)$,
and a complex structure of $K$ was revealed: $K$ is diffeomorphic to the conifold in $\bC^4$
with the conifold point removed.
The work of Moser and Souriau were nicely presented in a section in the 1977 book
by Guillemin and Sternberg \cite{Gui-Ste77}.
In a remarkable book published in 1990 \cite{Gui-Ste90},
these authors revisited the Kepler problem by adding more insights.
They established the connections of the Kepler problem to other theories in
mathematics and physics,
such as Howe dual pairs, orbit method of representations, Penrose's twistor theory, etc.
More precisely,
they showed that the Kepler manifold
can be realized either as a coadjoint orbit of the first factor
or as a reduction with respect to the second one in the Howe pairs
$(SO(2,4),SL(2,\bR))$ in $Sp(4,\bR)$  and $(U(2,2), U(1))$ in $Sp(8,\bR)$,
The orbit approach leads to representation of the Poincar\'e group,
while Marsden-Weinstein symplectic reduction
in stages identifies $K$
with the space of forward null geodesics on the conformal completion  of the Minkowski space,
and hence it is natural to discuss different models of the universe
whose symmetries are subgroups of $SO(2,4)$
and also Kostant's model based on the group $SO(4,4)$.
For later work in this direction,
see e.g. Keane-Barrett \cite{Kea-Bar}, Keane-Barrett-Simons \cite{Kea-Bar-Sim}
and Keane \cite{Kea}.
In 2004 Cordani \cite{Cor}
published a comprehensive book on Kepler problem,
which treated both the classical and quantum aspects, together with
perturbation theory.
For more recent work on Kepler problem
see e.g. Cariglia \cite{Car}.

In the 1960s there have appeared some generalizations of the Kepler problem
called the MICZ-Kepler problem \cite{MIC, Z}.
Meng discovered the generalizations to higher dimensions
and their connections to Euclidean Jordan algebra, symmetric cones of tube type,
and minimal representations.
For a survey,
see Meng \cite{Meng}.
Even though we will not discuss them in this paper,
it is interesting to generalize our results in this direction. 

From the above brief sketch of some aspects in the development of the Kepler problem,
it is clear that most of the researches in the literature
focus on the classical mechanics of the Kepler system
or quantum mechanics of the quantum Kepler system.
The purpose of this series of papers is to study the geometry and symmetry
of the classical Kepler system
by some techniques from general relativity, quantum field theory,
and string theory.
Our motivation is to test some important ideas in string theory
in the classical setting of the Kepler system.
They include: connection with the theory of integrable hierarchy,
Calabi-Yau spaces and mirror symmetry, AdS/CFT correspondence.

Since its early days, string theory was 
proposed as a candidate theory for the unification of gravity 
and the standard model of other fundamental forces.
The explosive development of string theory in the past thirty years or so
has brought forward a remarkable supply of new ideas and new techniques.
Our modest goal of this work is to apply a sample of ideas and techniques
developed in string theory to revisit the Kepler problem
in the hope of while
testing the modern developments to a classical problem,
new insights can be gained for both of them.
As it turns out,
such consideration leads us to many connections of the Kepler system with many different
mathematical objects overlooked in the literature.
In this paper,
we will make connections to Sasakian geometry and Hessian geometry.
In later parts of the series,
we will relate to Lie sphere geometry and integrable systems.

Our point of departure from
the geometry of the Kepler problem established
by Moser and Souriau in the 1970s
is to shift attentions to the K\"ahler metrics on the Kepler manifolds.
It was observed by Rawnsley \cite{Rawnsley} in 1977 that the complex structure 
and the symplectic structure on the Kepler manifolds are compatible,
leading to K\"ahler metrics on the Kepler manifolds.
We will refer to such metrics as {\em Kepler metrics}.

In this paper we will present two types of results for Kepler metrics:
Relationship with Sasakian geometry and relationship with Hessian geometry.
First,
we will show that the Kepler metrics 
are Sasaki metric on the cotangent bundles of the round spheres,
and they are related to Sasakian metrics on the unit conormal bundle of the round
spheres.
Secondly,
we will focus on Kepler metrics on Kepler $n$-manifold with $n=2$ and $3$.
In these cases,
we will use the symmetry of the metrics to present
explicit constructions of these metrics by Calabi Ansatz.
This will lead us a generalization of 
some work of Guillemin \cite{Gui}, Abreu \cite{Abreu}, Donaldson \cite{Donaldson}
to the noncompact case.
With the introduction of symplectic coordinates
we will establish a connection of Kepler metrics 
for $n=2,3$ with Hessian geometry.
We also present a treatment of K\"ahler Ricci-flat metrics on $n$-conifolds,
resolved $n$-conifolds in the same fashion.
We will propose a connection between such metrics using some special K\"ahler Ricci flow.

We arrange the rest of this paper as follows.
In \S \ref{sec:Reg} we recall the Levi-Civita regularization 
and the Moser regularization of hte Kepler problem,
and introduce the Kepler manifold $K_n$.
In \S \ref{sec:Complex} we recall the complex structures on $K_n$ introduced by Souriau. 
Some detailed discussions in the cases of $n=2$ and $3$ are also presented.
In \S \ref{sec:Kepler-Metric} we first recall  the Kepler metrics and their K\"ahler potentials,
then we  show that they are Sasaki metrics on the conormal bundles of the round spheres,
and relate them to Sasaki metrics on  the unit  conormal bundles of the round spheres,
which we show  are Sasakian.
In \S \ref{sec:KRF-Con-Deformed} we recall some explicit
constructions of K\"ahler Ricci-flat metrics on the $n$-conifolds
and the deformed $n$-conifolds.
Starting in \S \ref{sec:Hessian}
we will relate some explicit constructions of K\"ahler metrics with symmetries
with Hessian geometry.
In \S \ref{sec:Hessian} we treat the case of $U(n)$-symmetric K\"ahler metrics
on $\bC^n-\{0\}$.
We specialize to the case of $U(n)$-symmetric K\"ahler Ricci-flat metrics 
in \S \ref{sec:U(n)-symmetric},
and we further specialize to the case of Kepler metric on $K_2$ and Eguchi-Hanson
spaces in \S \ref{sec:E-H}.
Kepler metric on $K_3$, K\"ahler Ricci-flat metrics on 3-conifold and resolved 
3-conifold are studied from the point of view of K\"ahler metrics with 
symmetries and Hessian geometry in \S \ref{sec:CA-1} and \S \ref{sec:CA-2}
in various ways.
Finally,
in \S \ref{sec:Summary},
we summarize our results
and propose some generalizations.

\section{Regularizations of Kepler System and Kepler Manifolds}

\label{sec:Reg}

A first key step towards the global study of the geometry and symmetry
of the Kepler system is {\em regularization}.
This was first introduced by Levi-Civita \cite{LC} more than a hundred years
ago in his research on restricted three-body problem.
The idea was used by Sundman \cite{Sun} a few years later to solve
the three-body problem by power series.
The idea of regularization was further developed by
Siegel \cite{Sie} and Kustaanheimo-Stiefel \cite{KS},
and it was used by Moser \cite{Moser} to study the Kepler problem
in arbitrary dimensions.

In this Section we recall the Levi-Civita regularization of the two-dimensional
Kepler problem
and the Moser regularization of $n$-dimensional Kepler problem.
For details,
see e.g. Moser \cite{Moser}, Souriau \cite{Souriau74},
Guillemin-Sternberg \cite{Gui-Ste90} and Cordani \cite{Cor},
and the references therein.

\subsection{Levi-Civita regularization of two-dimensional Kepler system}

Let us recall the Levi-Civita regularization of the two-dimensional Kepler system:
\begin{align}
\frac{d^2 x}{dt^2} & = - \frac{x}{(x^2+y^2)^{3/2}}, & \frac{d^2y}{dt^2} & =  -\frac{y}{(x^2+y^2)^{3/2}}.
\end{align}
One first rewrites it as a Hamiltonian system:
\begin{align}
\frac{dx}{dt} & = p = \{H, x\}, & \frac{dy}{dt} & = q = \{H, q\}, \\
\frac{dp}{dt} & = - \frac{x}{\rho^3} = \{H, p\}, &
\frac{dq}{dt} & = - \frac{y}{\rho^3} = \{H, q\},
\end{align}
where $H = \half (p^2+q^2) - \frac{1}{\rho}$,
$\rho = (x^2+y^2)^{1/2}$,
and the Poisson bracket is defined by
\be
\{f, g\} : = \frac{\pd f}{\pd p} \frac{\pd g}{\pd x} - \frac{\pd f}{\pd x}\frac{\pd g}{\pd p}
+ \frac{\pd f}{\pd q} \frac{\pd g}{\pd y} - \frac{\pd f}{\pd y}\frac{\pd g}{\pd q}.
\ee
In other words,
the phase space of the two-dimensional Kepler problem is
\be
\{(x,y, p, q) \in \bR^4\;\;|\;\; (x, y) \neq (0,0)\}
= (\bR^2 - \{0\}) \times \bR^2,
\ee
which is endowed with a symplectic form
\be
\omega = dp \wedge dx + dq \wedge dy.
\ee

To remove the singularity of the system,
one first introduces the fictitious time $\tau$ such that
\be
\frac{d\tau}{dt} = \frac{1}{\rho},
\ee
and make the following change of variables:
\begin{align} \label{eqn:LC1}
x + i y & = (\xi + i \eta)^2, & p - i q = \frac{\omega - i \chi}{2(\xi + i \eta)},
\end{align}
then the two-dimensional system becomes:
\begin{align}
\frac{d}{d\tau} (\xi + i \eta) & = \frac{1}{4} (\omega + i \chi),  \\
\frac{d}{d\tau} (\omega +i\chi) & = \frac{1}{4} \frac{\omega^2+\chi^2}{\xi - i \eta}
- \frac{2}{\xi - i \eta}.
\end{align}
Next, note that
\be
H = \half |p+iq|^2 - \frac{1}{|x+iy|}
= \frac{1}{4} \frac{|\omega + i \chi|^2}{|\xi + i \eta|^2}
- \frac{1}{|\xi+i\eta|^2}.
\ee
Hence by conservation of energy,
one can restrict to the energy surface $H = E$,
where one has
\be
|\omega+i\chi|^2 = 8 (1 + E |\xi+i\eta|^2),
\ee
and so the equation of motion becomes:
\begin{align}
\frac{d}{d\tau} (\xi + i \eta) & = \frac{1}{4} (\omega + i \xi),  \\
\frac{d}{d\tau} (\omega + i \chi) & = 2E \cdot (\xi + i \eta).
\end{align}
Now we restrict to the case of $E < 0$
and convert the above three equations into the following form:
\begin{align}
& \frac{d}{d\tilde{\tau}} (\tilde{\xi} + i \tilde{\eta})
= (\tilde{\omega} + i \tilde{\xi}),  \\
& \frac{d}{d\tilde{\tau}} (\tilde{\omega} + i \tilde{\chi})
 = 2E \cdot (\tilde{\xi} + i \tilde{\eta}), \\
& |\tilde{\xi}+i\tilde{\eta}|^2 + |\tilde{\omega}+i\tilde{\chi}|^2 = -E.
\end{align}
by a further change of variable:
\begin{align}
\tilde{\tau} & = \sqrt{\frac{-E}{2}}\tau, &
\tilde{\xi}+ i \tilde{\eta} & = -E (\xi + i \eta), &
\tilde{\omega} + i \tilde{\chi} & = \sqrt{\frac{-E}{8}} (\omega + i \chi).
\end{align}

\subsection{The Moser regularization}

Moser \cite{Moser} extended the inverse map of the stereographic projection
$\bR^n \to \bR^{n+1}$ defined by
\be \label{eqn:Moser1}
\vec{y} = (y_1, \dots, y_n) \mapsto (x_0 = \frac{|\vec{y}|^2-1}{|\vec{y}|^2+1},
\frac{2y_1}{|\vec{y}|^2+1}, \dots, \frac{2y_n}{|\vec{y}|^2+1}),
\ee
to a map $T^*\bR^n \to T^*\bR^{n+1}$
\be
(\vec{y}, \vec{\eta}) \mapsto (\vec{x}, \vec{\xi}),
\ee
called the Moser map, where $\vec{\xi}=(\xi_0, \dots, \xi_n)$ is defined by:
\begin{align} \label{eqn:Moser2}
\xi_0 & = \vec{\eta} \cdot \vec{y}, &
\xi_j & = \frac{|\vec{y}|^2+1}{2} \cdot \eta_j - (\vec{\eta} \cdot \vec{y}) \cdot y_j, \;\;\;\;
j=1, \dots, n.
\end{align}
One can check that
\begin{align}
\vec{x} \cdot \vec{x} & = 1, & \vec{x} \cdot \vec{\xi} & = 0.
\end{align}
Furthermore,
\bea
&& \vec{\xi} \cdot d \vec{x} = \vec{\eta} \cdot d \vec{y}, \label{eqn:Moser} \\
&& \eta_j = (1-x_0) \xi_j + \xi_0 x_j, \\
&& |\vec{\xi}| = \frac{|\vec{y}|^2+1}{2} \cdot |\vec{\eta}|.
\eea
The Moser map defines an inclusion of $(\bR^n -\{0\}) \times \bR^n$
into the Kepler manifold:
\be
K_n:=\{(\vec{x}, \vec{\xi}) \in \bR^{n+1} \times (\bR^{n+1}-\{0\})\;|\;
\vec{x} \cdot \vec{x}= 1, \;\; \vec{x} \cdot \vec{\xi} = 0 \}.
\ee
This space is naturally identified with $T^+S^n := T^*S^n - S^n$.

One can check that $\Phi = \half |\vec{\xi}|^2|\vec{x}|^2$,
restricted to $T^*S^n$,
generates the geodesic flow on $T^*S^n$.
Using the Moser map and the introduction of the fictitious time $\tau$,
the energy surface $\Phi = \half$ corresponds to $H = - \frac{1}{2}$.
Using the Lie scaling
\begin{align}
q & \mapsto \lambda^2 q, & p & \mapsto \lambda^{-1} p, & t & \mapsto \lambda^3 t,
\end{align}
the energy surface $\Phi = \half \lambda^2$ corresponds to $H = - \frac{1}{2\lambda^2}$.

\section{Complex Structures on Kepler Manifolds}

\label{sec:Complex}

The reason that in last Section we single out the two-dimensional Kepler problem
and Levi-Civita regularization before presenting the more general Moser regularization
is because it naturally leads to a construction of a complex structure on $K_2$.
Complex structures by a different construction
on general $K_n$ were discovered by Souriau \cite{Souriau74}.
We will recall both of these constructions in this Section
and show that they give a hypercomplex structure on $K_2$,
in particular,
it is doubly covered by $\bC^2-\{0\}$.
We also recall the complex structure on $K_3$.

\subsection{Complex structures on $K_2$}
\label{sec:K2}

Let us now point out some algebraic geometry implicit in the process of Levi-Civita regularization.
This leads us to the discussions of complex structures on $K_n$ in the next subsection.

The original phase space of the two-dimensional Kepler problem is $(\bR^2 - \{(0,0)\} \times \bR^2$,
with coordinates $(x,y, p,q)$.
Using complex coordinates $(x+iy, p-iq)$ on this space,
one can identify it with $\bC^* \times \bC$.
The Levi-Civita regularization suggests to introduce $z_1, z_2$ such that:
\begin{align} \label{eqn:LC2}
x+iy &= z_1^2, & p-iq &= \frac{z_2}{z_1}.
\end{align}
One can understand $(z_1,z_2)$ as linear coordinates on $\bC^2$
and $(x + iy, p-iq)$ as local coordinates on $\cO_{\bP^1}(-2)$.
Indeed, we have the following diagram:
\be \label{eqn:Diagram}
\xymatrix{
 \cO_{\bP^1}(-1) \ar[d] \ar[r]^{2:1} &  \cO_{\bP^1}(-2) \ar[d] \\
\bC^2 \ar[r] & \bC^2/\bZ_2 }
\ee
Let  $(\alpha_1, \beta_1)$, $(\alpha_2, \beta_2)$ be local coordinates on $\cO_{\bP^1}(-1)$
such that
\begin{align*}
\alpha_2 & = \frac{1}{\alpha_1}, & \beta_2 & = \alpha_1\beta_1,
\end{align*}
and $(\alpha_1, \gamma_1)$, $(\alpha_2, \gamma_2)$ be local coordinates on $\cO_{\bP^1}(-2)$
such that
\begin{align*}
\alpha_2 & = \frac{1}{\alpha_1}, & \gamma_2 & = \alpha_1^2\gamma_1,
\end{align*}
then the upper horizontal map is given by:
\begin{align}
\gamma_1 &= \beta_1^2, & \gamma_2 & = \beta_2^2,
\end{align}
and the left vertical map is given by
\begin{align}
z_1 & = \beta_1 = \alpha_2\beta_2, & z_2 & = \alpha_1\beta_1 = \beta_2.
\end{align}
One can then see that
\begin{align}
\alpha_1 & = \frac{z_2}{z_1}, & \gamma_1 & = z_1^2,
\end{align}
and so one can get
\begin{align}
x+iy & = \gamma_1, & p-iq &  = \alpha_1,
\end{align}
and with this identification one identifies
$\bC^* \times \bC$ as a coordinate chart on the Kepler manifold $K_2= \cO_{\bP^1}(-2) - \bP^1$.
For those points in $K_2$ not in the image of inclusion of $\bC^* \times \bC$,
$\alpha_2 = 0$ and so $\alpha_1 = \infty$,
and the regularization procedure enlarge the original phase space by including some
points with infinite momentum,
and for the solution space, add some orbits that correspond to collision solutions.

The Kepler manifold $K_2$ is diffeomorphic to $T^+S^2 = T^*S^2 - S^2$.
The above discussion suggests us to regard $S^2$ as $\bP^1$,
then one can identify $K_2$ as the holomorphic cotangent bundle of $\bP^1$
minus the zero section,
i.e. $\cO_{\bP^1}(-2) - \bP^1$.
Another way to understand
this space is to examine the vertical map
on the right in \eqref{eqn:Diagram}.
The quotient space $\bC^2/\bZ_2$ can be understood as an affine variety in $\bC^3$,
defined by the equation
\be
u v - w^2 = 0,
\ee
and the lower horizontal map in \eqref{eqn:Diagram} is given by
\be
(z_1, z_2) \mapsto (u = z_1^2, v= z_2^2, w = z_1z_2).
\ee
Define new coordinates $(w_1, w_2, w_3)$ on $\bC^3$ by:
\begin{align}
u & = w_0 + i w_1, & v &= w_0 - i w_1, & w_2 & = i w,
\end{align}
then one gets the equation of the two-dimensional conifold $C_2$
\be
w_0^2 + w_1^2 + w_2^2 = 0.
\ee
The vertical map on the right in \eqref{eqn:Diagram}
is defined by:
\begin{align}
u & = \gamma_1 = \alpha_2^2\gamma_2, &
v & = \alpha_1\gamma_1 = \alpha_2 \gamma_2, &
w & = \alpha_1^2\gamma_1 = \gamma_2.
\end{align}
Therefore, $K_2$ can also be identified with $C_2^*$, the two-dimensional conifold
with the conifold point removed.

\subsection{Complex structures on $K_n$} \label{sec:Kn}

The introduction of the complex structures on $K_2$ in last subsection
relies on the identification of $S^2$ with $\bP^1$
and makes use of the complex structure on $S^2$,
and hence it cannot be generalized to $K_n$.
Nevertheless,
the diffeomorphism from $K_2$ to $C_2^*:=C_2 - \{0\}$ suggests a
diffeomorphism of $K_n$ with $C_n^*:=C_n-\{0\}$ 
first discovered by Sourian \cite{Souriau74},
where  $C_n$ is the $n$-conifold in $\bC^{n+1}$ defined by the equation:
\be
w_0^2+ w_1^2 + \cdots + w_n^2 = 0.
\ee
For $(w_0, w_1, \dots, w_n) \in C_n -\{0\}$,
write $w_j = u_j + i v_j$, $u_j, v_j \in \bR$, $j=0, 1, \dots, n$.
Then we have
\ben
&& u_0^2+ u_1^2 + \cdots + u_n^2 = v_0^2+ v_1^2 + \cdots + v_n^2, \\
&& u_0v_0 + u_1v_1 + \cdots + u_nv_n = 0.
\een
Since $(w_0, w_1, \dots, w_n) \neq 0$,
none of $\vec{u} = (u_0, u_1, \dots, u_n)$ and $\vec{v} = (v_0, v_1, \dots, v_n)$ is zero.
A diffeomorphism  $\varphi: C_n -\{0\} \to K_n$ can be defined as follows:
\be
\vec{w} = (w_0, w_1, \dots, w_n) \mapsto (\vec{p}, \vec{q}) = (\frac{\vec{x}}{|\vec{x}|}, \vec{y}),
\ee
with its inverse map given by:
\be
(\vec{p}, \vec{q}) \mapsto \vec{w} = |\vec{q}| \cdot \vec{p} + i \vec{q}.
\ee

\subsection{Hypercomplex structure on $K_2$}

In the last two subsections we have constructed two different complex structures on $K_2$.
Let us write them down explicitly in the local coordinates $(x, y, p,q)$.
For $J_1$ defined in \S \ref{sec:K2}, it is clear that
\begin{align*}
J_1 \frac{\pd}{\pd x} & = \frac{\pd}{\pd y}, &
J_1 \frac{\pd}{\pd y} & = - \frac{\pd}{\pd x}, &
J_1 \frac{\pd}{\pd p} & = - \frac{\pd}{\pd q}, &
J_1 \frac{\pd}{\pd q} & = \frac{\pd}{\pd p}.
\end{align*}
Using the diffeomorphism $\varphi$ defined in \S \ref{sec:Kn},
we get a complex structure $J_2$ .
We first use the Moser map   to get
\begin{align*}
\xi_0 & = px+qy, & \xi_1 & = \frac{p^2+q^2+1}{2}x - (px+qx)p, &
\xi_2 & = \frac{p^2+q^2+1}{2}y- (px+qy)q, \\
x_0 & = \frac{p^2+q^2-1}{p^2+q^2+1}, &
x_1 & = \frac{2p}{p^2+q^2+1}, &
x_2 & = \frac{2q}{p^2+q^2+1}.
\end{align*}
and\emph{} we have
\be
|\xi| = \frac{p^2+q^2+1}{2} \sqrt{x^2+y^2},
\ee
so we get
\ben
&& w_0 = \frac{\sqrt{x^2+y^2}}{2}(p^2+q^2-1) + i(px+qy), \\
&& w_1 = \sqrt{x^2+y^2}p + i \biggl( \frac{p^2+q^2+1}{2}x- (px+qy)p \biggr), \\
&& w_2 = \sqrt{x^2+y^2}q + i \biggl( \frac{p^2+q^2+1}{2}y- (px+qy)q \biggr).
\een
We will take the open subset of $C_n-\{0\}$ on which $w_1, w_2$ can be used local coordinates.
Let $w_j = u_j + i v_j$, $u_j, v_j \in \bR$, $j=0,1, 2$.
Then we have
\begin{align*}
J_2 \frac{\pd}{\pd u_1} & = \frac{\pd}{\pd v_1}, &
J_2 \frac{\pd}{\pd v_1} & = - \frac{\pd}{\pd u_1}, \\
J_2 \frac{\pd}{\pd u_2} & = \frac{\pd}{\pd v_2}, &
J_2 \frac{\pd}{\pd v_2} & = - \frac{\pd}{\pd u_2}.
\end{align*}
One can check that
\begin{align*}
J_2 \frac{\pd}{\pd x} & = \frac{px-qy}{(x^2+y^2)^{1/2}}\frac{\pd}{\pd x}
+ \frac{yp+xq}{(x^2+y^2)^{1/2}}\frac{\pd}{\pd y}
-\frac{(p^2+q^2+1)}{2(x^2+y^2)^{1/2}} \frac{\pd}{\pd p}, \\
J_2 \frac{\pd}{\pd y} & = \frac{yp+xq}{(x^2+y^2)^{1/2}} \frac{\pd}{\pd x}
- \frac{px-qy}{(x^2+y^2)^{1/2}}\frac{\pd}{\pd y}
- \frac{p^2+q^2+1}{2(x^2+y^2)^{1/2}} \frac{\pd}{\pd q}, \\
J_2 \frac{\pd}{\pd p} & = 2(x^2+y^2)^{1/2} \frac{\pd}{\pd x}
-\frac{px-qy}{(x^2+y^2)^{1/2}} \frac{\pd}{\pd p}
- \frac{yp+xq}{(x^2+y^2)^{1/2}} \frac{\pd}{\pd q}, \\
J_2 \frac{\pd}{\pd q} & = 2(x^2+y^2)^{1/2} \frac{\pd}{\pd y}
-\frac{yp+xq}{(x^2+y^2)^{1/2}}\frac{\pd}{\pd p}
+ \frac{px-qy}{(x^2+y^2)^{1/2}}\frac{\pd}{\pd q},
\end{align*}
One can define a third almost complex structure $J_3$ by
\be
J_3 = J_1J_2.
\ee
Indeed one can check that
\begin{align*}
J_3 \frac{\pd}{\pd x} & = \frac{px-qy}{(x^2+y^2)^{1/2}}\frac{\pd}{\pd y}
- \frac{yp+xq}{(x^2+y^2)^{1/2}}\frac{\pd}{\pd x}
+ \frac{(p^2+q^2+1)}{2(x^2+y^2)^{1/2}} \frac{\pd}{\pd q}
= - J_2 \frac{\pd}{\pd y} = - J_2J_1 \frac{\pd}{\pd x}, \\
J_3 \frac{\pd}{\pd y} & = \frac{yp+xq}{(x^2+y^2)^{1/2}} \frac{\pd}{\pd y}
+ \frac{px-qy}{(x^2+y^2)^{1/2}}\frac{\pd}{\pd x}
- \frac{p^2+q^2+1}{2(x^2+y^2)^{1/2}} \frac{\pd}{\pd p}
= J_2 \frac{\pd}{\pd x} = - J_2J_1 \frac{\pd}{\pd y}, \\
J_3 \frac{\pd}{\pd p} & = 2(x^2+y^2)^{1/2} \frac{\pd}{\pd y}
+ \frac{px-qy}{(x^2+y^2)^{1/2}} \frac{\pd}{\pd q}
- \frac{yp+xq}{(x^2+y^2)^{1/2}} \frac{\pd}{\pd p} = J_2\frac{\pd}{\pd q}
= -J_2J_1\frac{\pd}{\pd p}, \\
J_3 \frac{\pd}{\pd q} & = - 2(x^2+y^2)^{1/2} \frac{\pd}{\pd x}
+ \frac{yp+xq}{(x^2+y^2)^{1/2}}\frac{\pd}{\pd q}
+ \frac{px-qy}{(x^2+y^2)^{1/2}}\frac{\pd}{\pd p} = - J_2 \frac{\pd}{\pd p}
= - J_2J_1\frac{\pd}{\pd q},
\end{align*}
and from these identities it is easy to see that
\begin{align}
J_3 &= -J_2J_1, & J_3^2 & = - 1.
\end{align}
By the following Lemma,
$J_1, J_2, J_3$ make $K_2$ a hypercomplex manifold.

\begin{lm}
Suppose that $J_1$ and $J_2$ are two integrable almost complex structures on a manifold $M$,
and if $J_3 = J_1J_2$ satisfies $J_3 = - J_2J_1$ and $J_3^2 = -1$,
then $J_3$ is also integrable.
\end{lm}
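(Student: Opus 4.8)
The plan is to invoke the Newlander--Nirenberg theorem, so that integrability of $J_3$ is equivalent to the vanishing of the Nijenhuis tensor
\[
N_{J_3}(X,Y) = [J_3X,J_3Y] - J_3[J_3X,Y] - J_3[X,J_3Y] - [X,Y],
\]
and then to reduce $N_{J_3}$ to the Nijenhuis tensors of $J_1$ and $J_2$, which vanish by hypothesis. Throughout I would freely use the algebraic consequences of $J_3=J_1J_2=-J_2J_1$ and $J_1^2=J_2^2=-1$, namely $J_2J_3=J_1$, $J_3J_2=-J_1$, $J_3J_1=J_2$ and $J_1J_3=-J_2$.

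The first step is to expand $[J_3X,J_3Y]=[J_1J_2X,J_1J_2Y]$ by applying $N_{J_1}(J_2X,J_2Y)=0$ and then $N_{J_2}(X,Y)=0$ to the resulting bracket $[J_2X,J_2Y]$; this rewrites $[J_3X,J_3Y]$ as $[X,Y]$ plus first--order brackets. Next I would rewrite $[J_3X,Y]$ and $[X,J_3Y]$ by applying $N_{J_2}=0$ to the pairs $(J_3X,Y)$ and $(X,J_3Y)$ together with $J_2J_3=J_1$. Substituting everything into the definition of $N_{J_3}$, the term $[X,Y]$ cancels and the brackets of the form $J_1[J_3X,J_2Y]$ and $J_1[J_2X,J_3Y]$ cancel in pairs; one more appeal to $N_{J_1}=0$ and $N_{J_2}=0$ collapses the surviving first--order expressions $J_1[J_1X,Y]+J_1[X,J_1Y]$ and $J_2[J_2X,Y]+J_2[X,J_2Y]$, leaving the compact identity
\[
N_{J_3}(X,Y) = [J_2X,J_2Y] - [J_1X,J_1Y] - J_3[J_1X,J_2Y] - J_3[J_2X,J_1Y].
\]

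The finish is a symmetry observation. The derivation just sketched is invariant under interchanging $J_1$ and $J_2$ (the hypotheses $N_{J_1}=N_{J_2}=0$, $J_1^2=J_2^2=-1$ and the anticommutation are all symmetric), and under that interchange the structure whose Nijenhuis tensor is being computed becomes $J_2J_1$; hence the same computation yields
\[
N_{J_2J_1}(X,Y) = [J_1X,J_1Y] - [J_2X,J_2Y] - (J_2J_1)[J_2X,J_1Y] - (J_2J_1)[J_1X,J_2Y] = -\,N_{J_3}(X,Y),
\]
where the last equality uses $J_2J_1=-J_3$. On the other hand $J_2J_1=-J_3$, and the Nijenhuis tensor is unchanged under $J\mapsto -J$ (immediate from its formula), so $N_{J_2J_1}=N_{-J_3}=N_{J_3}$. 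Therefore $N_{J_3}=-N_{J_3}$, so $N_{J_3}\equiv 0$ and $J_3$ is integrable.

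The only genuinely laborious part is the bracket bookkeeping in the middle step: one must apply $N_{J_1}=0$ and $N_{J_2}=0$ to precisely the right arguments so that each ``mixed'' bracket $[J_iX,J_jY]$ with $i\neq j$ either cancels or assembles into the displayed identity. The individual manipulations are elementary, and the closing sign--symmetry trick spares one from having to show directly that the compact identity vanishes. As an alternative one could argue on the complexified tangent bundle: since $J_2$ interchanges $T^{1,0}_{J_1}$ and $T^{0,1}_{J_1}$, one identifies $T^{1,0}_{J_3}=(1+J_2)T^{0,1}_{J_1}$ and checks involutivity of this subbundle using integrability of $J_1$ and $J_2$, which amounts to the same computation in another guise.
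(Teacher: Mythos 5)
Your proof is correct. It shares the paper's overall skeleton --- Newlander--Nirenberg plus a final parity argument forcing $N_{J_3}=-N_{J_3}$ --- but the middle is organized differently. The paper never writes down a residual formula for $N_{J_3}$; instead it proves the two invariance identities $N_{J_3}(X,Y)=N_{J_3}(J_1X,J_1Y)$ and $N_{J_3}(X,Y)=N_{J_3}(J_2X,J_2Y)$ by expanding with $N_{J_1}=N_{J_2}=0$, composes them to get $N_{J_3}(X,Y)=N_{J_3}(J_3X,J_3Y)$ (using that $N_{J_3}$ is tensorial, so the sign in $J_2J_1=-J_3$ is harmless), and concludes from the general identity $N_J(JX,JY)=-N_J(X,Y)$. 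You instead reduce $N_{J_3}$ to the explicit residue $[J_2X,J_2Y]-[J_1X,J_1Y]-J_3[J_1X,J_2Y]-J_3[J_2X,J_1Y]$ --- which does follow from the expansions you describe, using $J_2J_3=J_1$ and $J_3J_2=-J_1$ --- and then kill it by the $J_1\leftrightarrow J_2$ swap combined with $N_{-J}=N_J$ and $J_2J_1=-J_3$; the swap is legitimate because all the hypotheses are symmetric in $J_1,J_2$. The two routes cost about the same amount of bracket bookkeeping. Yours has the merit of exhibiting the obstruction explicitly, so the antisymmetry that forces it to vanish is visible at a glance; the paper's invariance identities avoid ever writing the residue but require composing two separate computations. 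Both are valid, and your closing remark about checking involutivity of $(1+J_2)T^{0,1}_{J_1}$ is indeed the same computation in complexified form.
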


\begin{proof}
By Newlander-Nirenberg theorem,
an almost complex structure $J$ is integrable iff its Nijenhuis tensor vanishes,
i.e.,
\be
N_J(X, Y) = [X, Y] +J[JX, Y] +J[X, JY] - [JX, Jy] = 0
\ee
for any vector fields $X, Y$ on $M$.
By conditions we have
\ben
&& N_{J_1}(X, Y) = [X, Y] +J_1[J_1X,Y]+J_1[X,J_1Y] - [J_1X,J_1Y] = 0, \\
&& N_{J_2}(X, Y) = [X, Y] +J_2[J_2X,Y]+J_2[X,J_2Y] - [J_2X,J_2Y] = 0.
\een
One can derive that
\be \label{eqn:NJ1}
N_{J_3}(X, Y) = N_{J_3}(J_1X, J_2Y)
\ee
by the following computations:
\ben
N_{J_3}(X, Y)
& = & [X, Y] +J_3[J_3X,Y]+J_3[X,J_3Y] - [J_3X,J_3Y] \\
& = & [X, Y] +J_1J_2[J_1J_2X,Y]+J_1J_2[X,J_1J_2Y] - [J_1J_2X,J_1J_2Y]  \\
& = & [X, Y] +J_1J_2[J_1J_2X,Y]+J_1J_2[X,J_1J_2Y]  \\
& - & ([J_2X,J_2Y]+J_1[J_1J_2X,J_2Y]+J_1[J_2X, J_1J_2Y]) \\
& = & [X, Y] - [J_2X,J_2Y] +J_1J_2[J_1J_2X,Y]+J_1J_2[X,J_1J_2Y]  \\
& + & J_1[J_2J_1X,J_2Y]+J_1[J_2X, J_2J_1Y] \\
& = & [X, Y] - [J_2X,J_2Y] +J_1J_2[J_1J_2X,Y]+J_1J_2[X,J_1J_2Y]  \\
& + & J_1([J_1X,Y]+J_2[J_2J_1X,Y]+J_2[J_1X,J_2Y]) \\
& + & J_1([X, J_1Y] + J_2[J_2X,J_1Y]+J_2[X,J_2J_1Y]) \\
& = & [X, Y] - [J_2X,J_2Y]
+ J_1([J_1X,Y]+J_2[J_1X,J_2Y]) \\
& + &  J_1([X, J_1Y] + J_2[J_2X,J_1Y]) \\
& = & [J_1X, J_1Y] - [J_2X, J_2Y]
+ J_1J_2[J_1X,J_2Y] + J_1J_2[J_2X,J_1Y] \\
& = & N_{J_3}(J_1X, J_1Y).
\een
In the same fashion,
one gets
\be \label{eqn:NJ2}
N_{J_3}(X, Y) = N_{J_3}(J_2X, J_2Y).
\ee
Combining \eqref{eqn:NJ1} and \eqref{eqn:NJ2},
one gets:
\be
N_{J_3}(X, Y) = N_{J_3}(J_3, J_3Y).
\ee
Note $N_{J_3}(J_3X, J_3Y) = - N_{J_3}(X, Y)$,
so we get
\be
N_{J_3}(X, Y) = 0.
\ee
\end{proof}

Let us now understand the hypercomplex structure on $K_2$ from
\eqref{eqn:LC2}.
This formula gives a $2:1$ covering map
from $\bR^4-\{0\}$ to $K_2$.
We will call this map the {\em Levi-Civita map}.
Each identification of $\bR^4$ with the space
$\bH$ of quaternion numbers gives $\bR^4$ a hypercomplex structure.

\begin{prop}
Let $\xi, \eta, \omega, \chi$ be linear coordinates on $\bR^4$,
consider the $2:1$ map from $\bR^4-\{0\} \to K_2$
defined in the $(x, y, p, q)$-coordinate patch by:
\begin{align}
x + i y & = (\xi + i \eta)^2, &
p - i q = \frac{\omega - i \chi}{\xi + i \eta}.
\end{align}
then the hypercomplex structure $J_1, J_2, J_3$ corresponds to
the following hypercomplex structure on $\bR^4$:
\begin{align}
J_1 \pd_\xi & = \pd_\eta, & J_1\pd_\eta & = - \pd_\xi, &
J_1\pd_\omega & = - \pd_\chi, & J_1 \pd_\chi & = \pd_\omega, \label{eqn:J1} \\
J_2 \pd_\xi & = -\pd_\omega, & J_2\pd_\eta & = - \pd_\chi, &
J_2\pd_\omega & = \pd_\xi, & J_2 \pd_\chi & = \pd_\eta, \label{eqn:J2} \\
J_3 \pd_\xi & = \pd_\chi, & J_3\pd_\eta & = - \pd_\omega, &
J_3\pd_\omega & = \pd_\eta, & J_2 \pd_\chi & = -\pd_\xi. \label{eqn:J3}
\end{align}

\end{prop}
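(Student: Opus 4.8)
The plan is to verify the claim by direct, but carefully organized, computation, pushing the known complex structures $J_1, J_2, J_3$ through the Levi-Civita map. Since the statement is purely local on the $(x,y,p,q)$-chart, and since the Levi-Civita map is a local diffeomorphism there (away from $\xi+i\eta=0$), it suffices to compute the differential of the map, invert it, and conjugate. First I would record the map explicitly in real coordinates: from $x+iy=(\xi+i\eta)^2$ we get $x=\xi^2-\eta^2$, $y=2\xi\eta$; from $p-iq=(\omega-i\chi)/(\xi+i\eta)=(\omega-i\chi)(\xi-i\eta)/(\xi^2+\eta^2)$ we get $p=(\omega\xi-\chi\eta)/(\xi^2+\eta^2)$ and $q=(\omega\eta+\chi\xi)/(\xi^2+\eta^2)$. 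Then I would compute the $4\times 4$ Jacobian $\partial(x,y,p,q)/\partial(\xi,\eta,\omega,\chi)$, which has a block-triangular structure (the $(x,y)$ rows depend only on $\xi,\eta$, while the $(p,q)$ rows depend on all four), making its inversion tractable.

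The core step is then the conjugation: for each coordinate vector field, say $\partial_\xi$, push it forward via the Jacobian to a combination of $\partial_x,\partial_y,\partial_p,\partial_q$, apply the given formula for $J_a$ (for $a=1$ the explicit matrix at the start of \S "Hypercomplex structure on $K_2$"; for $a=2,3$ the displayed formulas involving $(x^2+y^2)^{1/2}=|\xi+i\eta|^2$ and the coefficients $px-qy$, $yp+xq$, $p^2+q^2+1$), and finally pull back by the inverse Jacobian to read off the claimed expression in $\partial_\xi,\partial_\eta,\partial_\omega,\partial_\chi$. The substitutions $x^2+y^2=(\xi^2+\eta^2)^2$, $px-qy = \operatorname{Re}\big((p-iq)(x+iy)\big) = \operatorname{Re}\big((\omega-i\chi)(\xi+i\eta)\big)=\omega\xi+\chi\eta$ (and similarly $xq+yp=\operatorname{Im}(\dots)=\omega\eta-\chi\xi$), together with $p^2+q^2 = (\omega^2+\chi^2)/(\xi^2+\eta^2)$, should make all the messy denominators of the form $(x^2+y^2)^{1/2}$ collapse into polynomials in $\xi,\eta,\omega,\chi$; I expect the final answer to reduce to the constant-coefficient operators in \eqref{eqn:J1}--\eqref{eqn:J3}.

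A cleaner way to organize the $J_1$ part, which I would present first as a warm-up, is to use complex notation: $J_1$ is by construction the complex structure for which $x+iy$ and $p-iq$ are holomorphic, i.e. $\partial/\partial\bar z_1$ and $\partial/\partial\bar z_2$ with $z_1=\xi+i\eta$, $z_2$ chosen so that $p-iq$ is holomorphic in $(z_1,z_2)$. Because $x+iy=z_1^2$ and $p-iq=z_2/z_1$ are holomorphic functions of $(z_1,z_2)=(\xi+i\eta,\omega+i\chi)$ (note $p-iq = (\omega-i\chi)/(\xi+i\eta)$ is \emph{not} holomorphic in $\omega+i\chi$, so one must be slightly careful about which complex coordinate to use on the $\bR^4$ side — this is exactly the subtlety the Levi-Civita formula \eqref{eqn:LC2} versus \eqref{eqn:LC1} encodes), $J_1$ on $\bR^4$ is the standard one making $\xi+i\eta$ and $\omega+i\chi$ holomorphic, giving \eqref{eqn:J1} immediately. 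For $J_2$ and $J_3$ there is no such shortcut and one simply grinds the conjugation; but one consistency check I would run at the end is that the resulting $J_1,J_2,J_3$ on $\bR^4$ satisfy the quaternion relations $J_1^2=J_2^2=J_3^2=-1$, $J_1J_2=J_3=-J_2J_1$, etc., which both catches arithmetic errors and confirms the "hypercomplex structure on $\bR^4$" assertion.

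The main obstacle is bookkeeping: inverting the Jacobian and simplifying the intermediate expressions for $J_2\partial_x$, $J_2\partial_y$, etc., after pull-back, without sign or factor errors, since the formulas for $J_2$ carry square-root denominators and several distinct quadratic numerators. I do not expect any conceptual difficulty — the Newlander--Nirenberg issues are already handled by the earlier Lemma and by the fact that $J_1,J_2$ are known to be integrable on $K_2$, so integrability on $\bR^4-\{0\}$ is automatic once the structures are identified as (constant-coefficient, hence manifestly integrable) linear complex structures. Thus the proof is essentially a verified change of variables, and I would present it as: (1) write the map and its differential; (2) handle $J_1$ via holomorphicity; (3) conjugate $J_2$ (and hence $J_3=J_1J_2$) through the differential, using the three substitution identities above to eliminate the radicals; (4) note the quaternion relations as a check.
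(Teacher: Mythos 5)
Your strategy is the same as the paper's: write the Levi-Civita map in real coordinates, compute the pushforwards of $\pd_\xi,\pd_\eta,\pd_\omega,\pd_\chi$ under the differential, and conjugate the explicit formulas for $J_1,J_2$ through it, with \eqref{eqn:J3} then following from $J_3=J_1J_2$. (The paper economizes slightly: it never inverts the Jacobian, but applies $J_2$ to the simple vectors $\pd_\omega,\pd_\chi$ and recognizes the answers as the already-computed $\pd_\xi,\pd_\eta$; the remaining identities in \eqref{eqn:J2} then come from $J_2^2=-1$.) Two details in your sketch need repair before the computation will actually close up.

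First, the $J_1$ shortcut has a sign problem. Since $p-iq=(\omega-i\chi)/(\xi+i\eta)$, the second holomorphic coordinate on $\bR^4$ must be $z_2=\omega-i\chi$, not $\omega+i\chi$; with that choice \eqref{eqn:J1} (in particular $J_1\pd_\omega=-\pd_\chi$, $J_1\pd_\chi=\pd_\omega$) drops out immediately. Taken literally, your identification $z_2=\omega+i\chi$ would give $J_1\pd_\omega=+\pd_\chi$ and contradict the proposition; you flag the subtlety in a parenthesis but then assert the wrong resolution. Second, your substitution identities are off by a conjugation: one has $\Re\big((p-iq)(x+iy)\big)=px+qy=\omega\xi+\chi\eta$ and $\Im\big((p-iq)(x+iy)\big)=py-qx=\omega\eta-\chi\xi$, whereas the coefficients that actually occur in the formulas for $J_2$ are $px-qy$ and $py+qx$, i.e.\ the real and imaginary parts of $(p+iq)(x+iy)=(\omega+i\chi)(\xi+i\eta)^3/(\xi^2+\eta^2)$, which do \emph{not} reduce to the bilinear expressions you wrote. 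The radicals still disappear (since $(x^2+y^2)^{1/2}=\xi^2+\eta^2$) and the computation does go through exactly as in the paper, but with your identities the intermediate expressions will not match. Everything else --- the quaternion-relation consistency check and the remark that integrability is automatic once the structures are identified as constant-coefficient --- is fine.
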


\begin{proof}
First we have
\begin{align*}
x & = \xi^2 - \eta^2, & y & = 2\xi \eta, &
p &= \frac{\xi \omega -\eta\chi}{\xi^2+\eta^2}, &
q & = \frac{\xi \chi + \eta \omega}{\xi^2+\eta^2}.
\end{align*}
It follows that
\ben
&& \pd_\xi = 2\xi \pd_x + 2\eta \pd_y
+ \frac{-\omega\xi^2+\omega \eta^2+2\xi\eta\chi}{(\xi^2+\eta^2)^2} \pd_p
-\frac{\chi \xi^2-\chi \eta^2+2\xi\eta \omega}{(\xi^2+\eta^2)^2} \pd_q, \\
&& \pd_\eta = -2\eta \pd_x + 2\xi \pd_y
- \frac{\chi \xi^2-\chi \eta^2+2\xi\eta \omega}{(\xi^2+\eta^2)^2} \pd_p
- \frac{-\omega\xi^2+\omega \eta^2+2\xi\eta\chi}{(\xi^2+\eta^2)^2} \pd_q, \\
&& \pd_\omega = \frac{\xi}{\xi^2+\eta^2} \pd_p +\frac{\eta}{\xi^2+\eta^2} \pd_q, \\
&& \pd_\chi = \frac{-\eta}{\xi^2+\eta^2} \pd_p +\frac{\xi}{\xi^2+\eta^2} \pd_q.
\een
It is then straightforward to check \eqref{eqn:J1}.
To check \eqref{eqn:J2}, it suffices to note:
\ben
&& J_2 \pd_\omega = \frac{\xi}{\xi^2+\eta^2}J_2 \pd_p
+\frac{\eta}{\xi^2+\eta^2} J_2\pd_q \\
& = & \frac{\xi}{\xi^2+\eta^2} \biggl(2(x^2+y^2)^{1/2} \frac{\pd}{\pd x}
-\frac{px-qy}{(x^2+y^2)^{1/2}} \frac{\pd}{\pd p}
- \frac{py+qx}{(x^2+y^2)^{1/2}} \frac{\pd}{\pd q}
\biggr) \\
& + & \frac{\eta}{\xi^2+\eta^2} \biggl(2(x^2+y^2)^{1/2} \frac{\pd}{\pd y}
- \frac{yp+xq}{(x^2+y^2)^{1/2}}\frac{\pd}{\pd p}
+ \frac{xp-yq}{(x^2+y^2)^{1/2}}\frac{\pd}{\pd q} \biggr) \\
& = & 2\xi\pd_x + 2\eta \pd_y
+ \frac{-\omega\xi^2+\omega\eta^2+2\xi\eta\chi}{(\xi^2+\eta^2)^2}\pd_p
- \frac{\chi\xi^2-\chi\eta^2+2\xi\eta\omega}{(\xi^2+\eta^2)^2} \pd_q \\
& = & \pd_\xi,
\een

\ben
&& J_2\pd_\chi = \frac{-\eta}{\xi^2+\eta^2} J_2\pd_p
+\frac{\xi}{\xi^2+\eta^2} J_2\pd_q \\
& = & \frac{-\eta}{\xi^2+\eta^2} \biggl(2(x^2+y^2)^{1/2} \frac{\pd}{\pd x}
- \frac{px-qy}{(x^2+y^2)^{1/2}} \frac{\pd}{\pd p}
- \frac{py+qx}{(x^2+y^2)^{1/2}} \frac{\pd}{\pd q}
\biggr) \\
& + & \frac{\xi}{\xi^2+\eta^2} \biggl(2(x^2+y^2)^{1/2} \frac{\pd}{\pd y}
- \frac{py+qx}{(x^2+y^2)^{1/2}}\frac{\pd}{\pd p}
+ \frac{px-qy}{(x^2+y^2)^{1/2}}\frac{\pd}{\pd q} \biggr) \\
& = & -2\eta \pd_x + 2\xi \pd_y
- \frac{wx^2-wy^2+2xyz}{2(x^2+y^2)^2} \pd_P
- \frac{-zx^2+zy^2+2xyw}{2(x^2+y^2)^2} \pd_Q \\
& = & \pd_\eta.
\een
Finally \eqref{eqn:J3} follows from \eqref{eqn:J1} and \eqref{eqn:J2}.
\end{proof}

\subsection{Kepler manifold $K_3$ as a complex manifold}

In this subsection we recall some well-known resolutions
of  the conifold singularity of $C_3$.
Since $K_3 \cong C_3 -\{0\}$,
one can use the resolutions to describe $K_3$.

Blow up $\bC^4$ at the origin, and consider the strict transform
of the quadric conifold $C_3$.
The singular point by the smooth quadric in $\bP^3$
\be
\{[w_0:w_1:w_2:w_3]\in \bP^3\;\;|\;\;w_0^2+w_1^2+w_2^2+w_3^2 = 0\},
\ee
which is a copy of $\bP^1 \times \bP^1$.
The total space of strict transform is isomorphic to the total space
$\cO_{\bP^1 \times \bP^1}(-1, -1)$,
therefore one gets an isomorphism
\be
K_3 \cong \cO_{\bP^1 \times \bP^1}(-1, -1) - \bP^1 \times \bP^1.
\ee
For later use, we need to explicit write down such an isomorphism.
First let
\begin{align} \label{eqn:z-in-w}
z_1 & = w_0 + i w_1, & z_2 & = w_0 - i w_1, &
z_3 & = i w_2+w_3, & z_4 & = i w_2 - w_3.
\end{align}
Then one has
\be \label{eqn:z-conifold}
z_1z_2 = z_3z_4.
\ee
On $K_3$, all of $z_1, \dots, z_n$ are not zero,
and one can define four local coordinated patches with local coordinates as follows:
\begin{align}
\alpha_{1,1} & = \frac{z_3}{z_1}, & \alpha_{1,2} & = \frac{z_4}{z_1},
& \beta_1 & = z_1 \neq 0, \\
\alpha_{2,1} & = \frac{z_4}{z_2}, & \alpha_{2,2} & = \frac{z_3}{z_2},
& \beta_2 & = z_2 \neq 0, \\
\alpha_{3,1} & = \frac{z_1}{z_3}, & \alpha_{3,2} & = \frac{z_2}{z_3},
& \beta_3 & = z_3 \neq 0, \\
\alpha_{4,1} & = \frac{z_2}{z_4}, & \alpha_{4,2} & = \frac{z_1}{z_4},
& \beta_4 & = z_4 \neq 0.
\end{align}
These exactly correspond to four local coordinate patches on $\cO_{\bP^1\times \bP^1}(-1, -1)$.
This resolution is not a minimal one.
There are two different minimal resolutions obtained from
this one blowing down along each copy of $\bP^1$ in $\bP^1 \times \bP^1$,
and they are said to be related to each other by flop.
Correspondingly,
there are two different ways to get an isomorphism of the form:
\be
K_3 \cong \cO_{\bP^1}(-1) \oplus \cO_{\bP^1}(-1) - \bP^1.
\ee
Indeed,
one can define the following local coordinate patches on $K_3$:
\begin{align}
\alpha_{1} & = \frac{z_3}{z_1}, & \beta_{1,1} & = z_1 \neq 0,
& \beta_{1,2} & = z_4, \\
\alpha_{2} & = \frac{z_4}{z_2}, & \beta_{2,1} & = z_3,
& \beta_{2,2} & = z_2 \neq 0,
\end{align}
they correspond to two local coordinate patches on $\cO_{\bP^1}(-1) \oplus \cO_{\bP^1}(-1)$,
or one can take alternatively:
\begin{align}
\alpha_{3} & = \frac{z_1}{z_3}, & \beta_{3,2} & = z_2,
& \beta_{3,2} & = z_3 \neq 0, \\
\alpha_{4} & = \frac{z_2}{z_4},
& \beta_{4,1} & = z_4 \neq 0, & \beta_{4,2} & = z_1.
\end{align}

For later use,
we need the following:

\begin{prop} \label{prop:Norm}
On $K_3$ the following identities hold:
\be
\sum_{j=0}^3 |w_j|^2= \half \sum_{j=1}^4 |z_j|^2,
\ee
for $k=1,\dots,4$,
\be
(1+|\alpha_k|^2) (|\beta_{k,1}|^2+|\beta_{k,2}|^2) =
(1+|\alpha_{k,1}|^2) (1+ |\alpha_{k,2}|^2) |\beta_k|^2
= \sum_{j=1}^4 |z_j|^2.
\ee
\end{prop}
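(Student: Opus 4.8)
The plan is to verify both identities by direct substitution, exploiting the conifold relation \eqref{eqn:z-conifold} and the definitions \eqref{eqn:z-in-w}. For the first identity, I would write each $w_j$ in terms of the $z_j$ by inverting \eqref{eqn:z-in-w}: one gets $w_0 = \half(z_1+z_2)$, $w_1 = \frac{1}{2i}(z_1-z_2)$, $w_2 = \frac{1}{2i}(z_3+z_4)$, $w_3 = \half(z_3-z_4)$. Then
\bet
|w_0|^2 + |w_1|^2 = \tfrac14\bigl(|z_1+z_2|^2 + |z_1-z_2|^2\bigr) = \tfrac12\bigl(|z_1|^2+|z_2|^2\bigr),
\eet
using the parallelogram law, and similarly $|w_2|^2+|w_3|^2 = \half(|z_3|^2+|z_4|^2)$. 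Adding the two gives $\sum_{j=0}^3|w_j|^2 = \half\sum_{j=1}^4|z_j|^2$, with no need for the conifold relation at this stage.

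For the second family of identities, I would fix $k$ and expand each factor. Take $k=1$ as the representative case: $\alpha_1 = z_3/z_1$, $\alpha_{1,1} = z_3/z_1$, $\alpha_{1,2} = z_4/z_1$, $\beta_1 = z_1$, $\beta_{1,1} = z_1$, $\beta_{1,2} = z_4$. Then
\bet
(1+|\alpha_1|^2)(|\beta_{1,1}|^2+|\beta_{1,2}|^2)
= \Bigl(1 + \tfrac{|z_3|^2}{|z_1|^2}\Bigr)\bigl(|z_1|^2 + |z_4|^2\bigr).
\eet
Expanding, this equals $|z_1|^2 + |z_4|^2 + |z_3|^2 + \frac{|z_3|^2|z_4|^2}{|z_1|^2}$. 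Here is where \eqref{eqn:z-conifold} enters: since $z_1z_2 = z_3z_4$, we have $|z_3|^2|z_4|^2 = |z_1|^2|z_2|^2$, so the last term is $|z_2|^2$, and the whole expression collapses to $\sum_{j=1}^4|z_j|^2$. For the middle expression, $(1+|\alpha_{1,1}|^2)(1+|\alpha_{1,2}|^2)|\beta_1|^2 = (1 + \frac{|z_3|^2}{|z_1|^2})(1+\frac{|z_4|^2}{|z_1|^2})|z_1|^2 = |z_1|^2 + |z_3|^2 + |z_4|^2 + \frac{|z_3|^2|z_4|^2}{|z_1|^2}$, which again becomes $\sum_j |z_j|^2$ by the same use of the conifold relation. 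The remaining cases $k=2,3,4$ are handled identically after reading off the relevant $\alpha$'s and $\beta$'s from the coordinate patch definitions; in each case one crossed term is converted via $z_1z_2=z_3z_4$ into the missing $|z_j|^2$.

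The only subtlety — and the single point that needs care rather than routine algebra — is bookkeeping: one must check that in each of the four patches the pair of $\beta$-variables appearing in the $\cO_{\bP^1}(-1)\oplus\cO_{\bP^1}(-1)$ factorization and the single $\beta_k$ in the $\cO_{\bP^1\times\bP^1}(-1,-1)$ factorization are the ones dictated by the definitions above, so that the correct crossed term $\frac{|z_a|^2|z_b|^2}{|z_c|^2}$ arises and is exactly the one the conifold relation rewrites as $|z_d|^2$ with $\{a,b,c,d\}=\{1,2,3,4\}$. Since every $z_j$ is nonzero on $K_3$, all the divisions are legitimate. Once the case $k=1$ is written out, the other three are genuinely symmetric and can be stated as "similarly."
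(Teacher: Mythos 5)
Your proof is correct and takes essentially the same route as the paper's: the first identity is a direct computation (the paper expands real and imaginary parts where you invoke the parallelogram law, but the content is identical), and the second is exactly the ``follows easily from $z_1z_2=z_3z_4$'' step that the paper leaves implicit, which you have simply written out. No gaps.
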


\begin{proof}
Write $w_j$ and $z_j$ in terms of their real parts and imaginary parts:
$w_j = u_j + \sqrt{-1} v_j$,
$z_j = x_j + \sqrt{-1} y_j$.
By \eqref{eqn:z-in-w},
\begin{align*}
x_1 & = u_0 - v_1, & y_1 & = v_0 + u_1, & x_2 & = u_0 + v_1, & y_2 & = v_0 - u_1, \\
x_3 & = - v_2 + u_3, & y_3 & = u_2 + v_3, & x_4 & = - v_2 - u_3, & y_4 & = u_2 - v_3,
\end{align*}
therefore,
\ben
\sum_{j=1}^4 |z_j|^2 & = & \sum_{j=1}^4 (x_j^2+y_j^2)
= \sum_{j=0}^3 (2u_j^2+2v_j^2) = 2 \sum_{j=0}^3 |w_j|^2.
\een
This proves the first formula.
The second formula follows easily from \eqref{eqn:z-conifold}.
\end{proof}

\section{Kepler Metrics on Kepler Manifolds}

\label{sec:Kepler-Metric}

In this Section we first recall that the symplectic structures
on the Kepler manifolds are compatible with their complex structures,
hence one obtains K\"ahler structures on the Kepler manifolds.
We will call these metrics the {\em Kepler metrics}
and show that they are Sasaki metrics on the conormal bundles of the round spheres.
We will also relate the Kepler metrics to Sasaki metrics
on  the unit  conormal bundles of the round spheres,
which we show  are Sasakian.

\subsection{Symplectic structure on $K_2$}

The symplectic form on the original phase space $(\bR^2-\{(0,0)\}) \times \bR^2$
is
\be
\omega = d p \wedge dx + dq \wedge dy.
\ee
In the complex coordinate $(\alpha_1, \gamma_1)$
it is the real part of the holomorphic volume form
\be
\Omega = d \alpha_1 \wedge d \gamma_1.
\ee
This form is defined everywhere on $\cO_{\bP^1}(-2)$ and is nowhere vanishing,
indeed, in the other coordinate patch with local coordinate $(\alpha_2, \gamma_2)$,
one has
\be
\Omega = - d \alpha_2 \wedge d \gamma_2.
\ee
By taking the real part of $\Omega$,
one gets a symplectic form, still denoted by $\omega$,
on the whole space of $K_2$.
By an easy computation we get

\begin{prop}
The pullback to $\bR^4$ of symplectic form $\omega$ on $K_2$ is
\be
\varphi_{LC}^*\omega = 2d\omega \wedge d \xi + 2 d\chi \wedge d \eta.
\ee
\end{prop}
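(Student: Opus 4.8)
The plan is to compute the pullback directly using the Levi-Civita map $\varphi_{LC}$ given by $x+iy = (\xi+i\eta)^2$, $p-iq = \frac{\omega-i\chi}{\xi+i\eta}$, and the formula $\omega = dp\wedge dx + dq\wedge dy$. First I would extract the real expressions already recorded in the proof of the preceding Proposition, namely
\begin{align*}
x & = \xi^2-\eta^2, & y & = 2\xi\eta, &
p & = \frac{\xi\omega-\eta\chi}{\xi^2+\eta^2}, &
q & = \frac{\xi\chi+\eta\omega}{\xi^2+\eta^2},
\end{align*}
so that $dx = 2\xi\,d\xi - 2\eta\,d\eta$ and $dy = 2\eta\,d\xi + 2\xi\,d\eta$. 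The one-forms $dp$ and $dq$ are messier because of the denominator $\xi^2+\eta^2$, so the cleanest route is to work with complex combinations: set $\zeta = \xi+i\eta$ and $\mu = \omega+i\chi$, so that $x+iy = \zeta^2$ and $p-iq = \bar\mu/\zeta$ (using $p-iq = \frac{\omega-i\chi}{\xi+i\eta}$, i.e. $p-iq = \bar\mu/\zeta$... careful: $\omega - i\chi = \overline{\omega+i\chi} = \bar\mu$). Then $dx\wedge dy$-type real forms are recovered from wedges of $d\zeta,d\bar\zeta,d\mu,d\bar\mu$.

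The key computational step is the identity $dp\wedge dx + dq\wedge dy = \Img\big(d(p-iq)\wedge d(x+iy)\big)$, which one checks by expanding $d(p-iq)\wedge d(x+iy) = (dp - i\,dq)\wedge(dx + i\,dy) = (dp\wedge dx + dq\wedge dy) + i(dp\wedge dy - dq\wedge dx)$; the real part is the imaginary... let me instead write $\omega = \Img\big( d(x+iy)\wedge d(p-iq)\big)$ after checking signs. With $x+iy = \zeta^2$ we get $d(x+iy) = 2\zeta\,d\zeta$, and with $p - iq = \bar\mu/\zeta$ we get $d(p-iq) = \frac{d\bar\mu}{\zeta} - \frac{\bar\mu\,d\zeta}{\zeta^2}$. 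Hence
\[
d(x+iy)\wedge d(p-iq) = 2\zeta\,d\zeta \wedge\Big(\frac{d\bar\mu}{\zeta} - \frac{\bar\mu\,d\zeta}{\zeta^2}\Big) = 2\,d\zeta\wedge d\bar\mu,
\]
since $d\zeta\wedge d\zeta = 0$. So the pullback of $\Omega = d\alpha_1\wedge d\gamma_1$ (up to the identification $\alpha_1 = p-iq$, $\gamma_1 = x+iy$, hence $\Omega$ pulls back to $-2\,d\zeta\wedge d\bar\mu$ after swapping the order) is, up to sign, $2\,d\zeta\wedge d\bar\mu = 2(d\xi+i\,d\eta)\wedge(d\omega - i\,d\chi)$. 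Taking the appropriate real or imaginary part gives $2(d\xi\wedge d\omega + d\eta\wedge d\chi) + 2i(\cdots)$; matching against $\omega = dp\wedge dx + dq\wedge dy$ and reordering yields $\varphi_{LC}^*\omega = 2\,d\omega\wedge d\xi + 2\,d\chi\wedge d\eta$.

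I expect the only real obstacle to be bookkeeping of signs and of which part (real versus imaginary) of the complexified expression reproduces $\omega$: one must pin down the orientation conventions so that $\Img$ versus $\Re$ and the ordering $d\omega\wedge d\xi$ versus $d\xi\wedge d\omega$ come out consistent with the statement. A safe way to eliminate this ambiguity is to also carry out the computation once in real coordinates for a single monomial — e.g. verify the coefficient of $d\omega\wedge d\xi$ by expanding $dp\wedge dx$ directly using $dx = 2\xi\,d\xi - 2\eta\,d\eta$ and the $d\omega$-component of $dp$, which is $\frac{\xi}{\xi^2+\eta^2}d\omega$, giving a contribution $\frac{\xi}{\xi^2+\eta^2}\cdot 2\xi\, d\omega\wedge d\xi$, and similarly from $dq\wedge dy$ a contribution $\frac{\eta}{\xi^2+\eta^2}\cdot 2\eta\,d\omega\wedge d\xi$; these sum to $\frac{2(\xi^2+\eta^2)}{\xi^2+\eta^2}d\omega\wedge d\xi = 2\,d\omega\wedge d\xi$, confirming the coefficient. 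The remaining terms ($d\chi\wedge d\eta$, and the vanishing of $d\omega\wedge d\eta$, $d\chi\wedge d\xi$, $d\xi\wedge d\eta$, $d\omega\wedge d\chi$) follow by the same elementary expansion, which the complex-variable computation packages cleanly.
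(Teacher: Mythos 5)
Your computation is correct: writing $\omega=\Re\bigl(d(p-iq)\wedge d(x+iy)\bigr)$ and using the holomorphy of the Levi--Civita substitution to get $d(p-iq)\wedge d(x+iy)=2\,d\bar\mu\wedge d\zeta$ with $\zeta=\xi+i\eta$, $\mu=\omega+i\chi$ yields exactly $2\,d\omega\wedge d\xi+2\,d\chi\wedge d\eta$, and your real-coordinate check of the $d\omega\wedge d\xi$ coefficient correctly pins down the sign. This is essentially the "easy computation" the paper alludes to (it gives no details), using the same identification $\omega=\Re(d\alpha_1\wedge d\gamma_1)$ set up in the surrounding text.
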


\subsection{Hyperk\"ahler metric on Kepler manifold $K_2$}

On $K_2$ define $g$ by
\be \label{def:g-from-omega}
g(X, Y) = \omega(X, J_2Y).
\ee
In the local coordinate patch with local coordinates $(x, y, p, q)$ we have:
\ben
g & = & \frac{p^2+q^2+1}{2(x^2+y^2)^{1/2}}dx^2+\frac{2(px-qy)}{(x^2+y^2)^{1/2}}dxdp
+ \frac{2(py+qx)}{(x^2+y^2)^{1/2}}dxdq \\
& + & \frac{p^2+q^2+1}{2(x^2+y^2)^{1/2}}dy^2
+ \frac{2(py+qx)}{(x^2+y^2)^{1/2}}dydp
- \frac{2(px-qy)}{(x^2+y^2)^{1/2}}dydq \\
& + & 2 (x^2+y^2)^{1/2} dp^2
+ 2 (x^2+y^2)^{1/2} dq^2.
\een
This is a Riemannian metric compatible with $J_2$,
hence it defines a K\"ahler metric in this patch,
hence a K\"ahler metric on the whole $K_2$ since this coordinate
patch is dense in $K_2$.

\begin{prop}
The Riemannian metric $g$ is a hyperk\"ahler metric with respect to $J_2, J_3, J_1$.
\end{prop}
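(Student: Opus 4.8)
The plan is to show that the single metric $g$ defined by $g(X,Y)=\omega(X,J_2Y)$ in \eqref{def:g-from-omega} is simultaneously K\"ahler with respect to all three of $J_1,J_2,J_3$, and that the three associated K\"ahler forms are closed; hyperk\"ahlerness then follows from the standard fact (essentially a consequence of the earlier Lemma together with the Hitchin--Karlhede--Lindstr\"om--Ro\v{c}ek criterion) that a Riemannian metric compatible with a hypercomplex structure whose three fundamental $2$-forms are all closed is hyperk\"ahler. The compatibility with $J_2$ and the closedness of $\omega_2:=\omega$ are already established in the text preceding the statement, so the real content is to handle $J_1$ and $J_3$.

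First I would verify that $g$ is compatible with $J_1$, i.e.\ that $g(J_1X,J_1Y)=g(X,Y)$, and similarly for $J_3$. The cleanest route is to pull everything back to $\bR^4$ via the Levi-Civita map $\varphi_{LC}$, where all the structures are explicit: by the Proposition just proved, $\varphi_{LC}^*J_1,\varphi_{LC}^*J_2,\varphi_{LC}^*J_3$ are the constant-coefficient hypercomplex structure given by \eqref{eqn:J1}--\eqref{eqn:J3}, and by the preceding Proposition $\varphi_{LC}^*\omega = 2\,d\omega\wedge d\xi + 2\,d\chi\wedge d\eta$. So I would compute $\varphi_{LC}^*g$ directly as the bilinear form $(\varphi_{LC}^*\omega)(\,\cdot\,,(\varphi_{LC}^*J_2)\,\cdot\,)$; since $\varphi_{LC}$ is a local diffeomorphism (a $2:1$ cover), $g$ is hyperk\"ahler iff $\varphi_{LC}^*g$ is. Plugging the constant $J_2$ of \eqref{eqn:J2} into $2\,d\omega\wedge d\xi+2\,d\chi\wedge d\eta$ should yield, up to an overall constant, the flat Euclidean metric $2(d\xi^2+d\eta^2+d\omega^2+d\chi^2)$ on $\bR^4-\{0\}$. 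Once that identification is in hand, the whole statement collapses: the flat metric on $\bR^4=\bH$ is manifestly hyperk\"ahler for the standard (constant) quaternionic triple, and \eqref{eqn:J1}--\eqref{eqn:J3} exhibit exactly such a triple (one checks $J_1J_2=-J_2J_1$, $J_iJ_j=J_k$ cyclically, each $J_i$ constant hence integrable, each $g(J_i\cdot,J_i\cdot)=g$, and each fundamental form constant hence closed).

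The steps in order: (1) record $\varphi_{LC}^*J_2$ in the $(\xi,\eta,\omega,\chi)$ basis from \eqref{eqn:J2}; (2) contract it with $\varphi_{LC}^*\omega$ from the preceding Proposition to get $\varphi_{LC}^*g$, and observe it equals $c(d\xi^2+d\eta^2+d\omega^2+d\chi^2)$ for the appropriate constant $c$ (I expect $c=2$); (3) note that this flat metric is Hermitian for $\varphi_{LC}^*J_1$ and $\varphi_{LC}^*J_3$ as well, since those are also orthogonal constant endomorphisms squaring to $-1$ in this basis; (4) note that the three fundamental forms $\omega_i(\cdot,\cdot)=g(J_i\cdot,\cdot)$ are constant-coefficient $2$-forms on $\bR^4$, hence closed; (5) invoke the criterion that a hypercomplex metric with all three K\"ahler forms closed is hyperk\"ahler, and descend along the local diffeomorphism $\varphi_{LC}$ to conclude the same for $(K_2,g,J_1,J_2,J_3)$. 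I would also remark that the ordering ``$J_2,J_3,J_1$'' in the statement is just a relabelling: any cyclic triple with the quaternion relations works, and \eqref{eqn:J1}--\eqref{eqn:J3} satisfy $J_2J_3=J_1$ etc.\ after accounting for the sign $J_3=-J_2J_1$.

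The main obstacle I anticipate is purely computational bookkeeping in step (2): one must be careful that the constant matrix $\varphi_{LC}^*J_2$ from \eqref{eqn:J2} is genuinely $\omega$-compatible with $\varphi_{LC}^*\omega=2(d\omega\wedge d\xi+d\chi\wedge d\eta)$ and that the resulting symmetric form is positive-definite rather than merely nondegenerate — a sign slip here would be fatal. A secondary subtlety is that $g$ was originally defined only on the dense chart $(x,y,p,q)$ and extended by density; working on $\bR^4-\{0\}$ via $\varphi_{LC}$ sidesteps this, since there $\varphi_{LC}^*g$ extends manifestly to all of $\bR^4-\{0\}$ as a flat metric, and then one transports the conclusion back. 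No analytic difficulty remains once the flat-model identification is made; the Newlander--Nirenberg step is already absorbed into the earlier Lemma and into the constancy of the $J_i$ on $\bR^4$.
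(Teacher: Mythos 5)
Your proposal is correct, but it takes a different route from the paper. The paper works entirely in the $(x,y,p,q)$ chart on $K_2$: it writes down the three fundamental forms $\omega_{J_a}(X,Y)=g(J_aX,Y)$ explicitly, checks $d\omega_{J_1}=0$ by hand, and observes that $\omega_{J_2}+i\,\omega_{J_3}=d(p-iq)\wedge d(x+iy)=\Omega$ is the holomorphic symplectic form, hence closed; hyperk\"ahlerness then follows from the same closedness criterion you invoke. You instead transport everything to $\bR^4-\{0\}$ by the Levi-Civita covering, where the two preceding Propositions make $J_1,J_2,J_3$ constant and $\varphi_{LC}^*\omega=2(d\omega\wedge d\xi+d\chi\wedge d\eta)$, and your step (2) does come out as claimed: $\varphi_{LC}^*g=2(d\xi^2+d\eta^2+d\omega^2+d\chi^2)$, positive definite, so the problem reduces to the standard flat quaternionic model on $\bH$ and descends along the local isometry. (The paper itself records this flat pullback, but only after the proof, as a remark.) Your approach buys conceptual economy -- parallelism of the $J_i$ is trivial for constant structures on a flat space, so you do not even need the Hitchin--Karlhede--Lindstr\"om--Ro\v{c}ek lemma, and you get flatness of $g$ as a bonus -- while the paper's direct computation stays intrinsic to $K_2$ and produces the holomorphic symplectic form $\Omega$ explicitly, which is of independent use. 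There is no circularity in your use of the two earlier Propositions, and your closing remark that $(J_2,J_3,J_1)$ is a cyclic quaternionic triple ($J_2J_3=J_1$, etc., given $J_3=J_1J_2=-J_2J_1$) correctly explains the ordering in the statement.
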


\begin{proof}
For $a=1,2,3$, let
\be
\omega_{J_a}(X, Y) = g(J_a X, Y).
\ee
In the $(x,y,p,q)$ coordinate patch,
\ben
\omega_{J_1} & = & \frac{p^2+q^2+1}{2(x^2+y^2)^{1/2}} dx \wedge dy
+ \frac{py+qx}{(x^2+y^2)^{1/2}}dx \wedge dp
- \frac{px-qy}{(x^2+y^2)^{1/2}} dx \wedge dq \\
& - & \frac{px-qy}{(x^2+y^2)^{1/2}} dy \wedge dp
- \frac{py+qx}{(x^2+y^2)^{1/2}}dy \wedge dq
- 2 (x^2+y^2)^{1/2} dp \wedge dq, \\
\omega_{J_2} & = & \omega = dp \wedge dx + dq \wedge dy, \\
\omega_{J_3} & = & dx \wedge dq - d y \wedge dp.
\een
One can easily see that
\bea
&& d \omega_1 = 0, \\
&& \omega_{J_2} + i \omega_{J_3} = d(p-iq) \wedge d (x+iy) = \Omega.
\eea
This finishes the proof.
\end{proof}

It is easy to see that:
\ben
\omega_{J_1\emph{}} & = & \frac{i}{2} \biggl(\frac{|\alpha_1|^2+1}{2|\gamma_1|} d \gamma_1 \wedge d \bar{\gamma}_1
+ \frac{\alpha_1\bar{\gamma}_1}{|\gamma_1|} d \gamma_1 \wedge d \bar{\alpha}_1
- \frac{\bar{\alpha}_1\gamma_1}{|\gamma_1|} d \bar{\gamma}_1 \wedge d \alpha_1
+ 2|\gamma_1| d \alpha_1 \wedge d \bar{\alpha}_1 \biggr) \\
& = & i \pd \dbar \big(|\gamma_1|(|\alpha_1|^2+1) \big).
\een
Note we have
\be
|\gamma_1|(|\alpha_1|^2+1) = |\gamma_2|(|\alpha_2|^2+1),
\ee
this defines a global function on $K_2$,
which  is the K\"ahler potential for the K\"ahler form $\omega_{J_1}$.

We now pull back everything to $\bR^4-\{0\}$ by the Levi-Civita map.
We take the complex coordinates with respect to $J_2$ to be
\begin{align}
z_1 & = \omega + i \xi, & z_2 & = \chi + i \eta.
\end{align}
Then the K\"ahler form becomes
\be
i(dz_1 \wedge d \bar{z}_1 + dz_2 \wedge d \bar{z}_2),
\ee
and the K\"ahler potential becomes
\be
|z_1|^2+|z_2|^2.
\ee

\subsection{K\"ahler metrics and K\"ahler potentials on Kepler Manifolds $K_n$}

\label{sec:Kepler metrics}

Recall that the Moser map satisfies the identity \eqref{eqn:Moser1},
so after taking exterior differentials on both sides of this identity one has
\be
\sum_{j=0}^n d x_j \wedge d \xi_j = \sum_{j=1}^n d y_j \wedge d \eta_j.
\ee
This defines a symplectic structure on $K_n$.
As first noted by Rawnsley \cite{Rawnsley},
$\omega$ is the K\"ahler form of a Riemannnian metric
with respect to the complex structure $J=J_2$ using the diffeomorphism $K_n \cong C_n^*$,
and
\be \label{eqn:Rawnsley}
\omega = 2 i \pd \dbar |\vec{u}|,
\ee
where $|\vec{u}| = (u_0^2+\cdots + u_n^2)^{1/2}$.
These facts can be checked as follows.
We will first work on $\bC^{n+1}$ and then restrict to $C_n^*$.
On $\bC^{n+1}$ one has
\be
\omega = \sum_{j=0}^n d d x_j \wedge d \xi_j
= \sum_{j=0}^n d \frac{u_j}{(\sum_{k=0}^n u_k^2)^{1/2}} \wedge dv_j,
\ee
and so \eqref{eqn:Rawnsley} can be checked directly,
and using the fact that
\begin{align}
J \pd_{u_j} & = \pd_{v_j}, & J \pd_{v_j} & =- \pd_{u_j}, \;\;\; j=0, \dots, n,
\end{align}
one defines $g$ by \eqref{def:g-from-omega}:
\be
\begin{split}
g& = \frac{1}{(\sum_{k=0}^n u_k^2)^{1/2}} \sum_{j=0}^n du_j^2
- \frac{1}{(\sum_{k=0}^n u_k^2)^{3/2}} \biggl(\sum_{j=0}^n u_j du_j \biggr)^2 \\
& + \frac{1}{(\sum_{k=0}^n u_k^2)^{1/2}} \sum_{j=0}^n dv_j^2
- \frac{1}{(\sum_{k=0}^n u_k^2)^{3/2}} \biggl(\sum_{j=0}^n u_j dv_j\biggr)^2.
\end{split}
\ee
This is not positive definite on $\bC^{n+1}$,
and  we have to to restrict $g$ to $C_n^*$.
Let $u = |\vec{u}| = (u_0^2+ u_1^2 + \cdots + u_n^2)^{1/2}$,
and set
\begin{align}
u_j & = u \hat{u}_j, & v_j & = u \hat{v}_j, \;\; j=0, 1, \dots,n,
\end{align}
then we have:
\bea
&& \sum_{j=0}^n \hat{u}_j^2 = \sum_{j=0}^n \hat{v}_j^2 = 1, \\
&& \sum_{j=0}^n \hat{u}_j \hat{v}_j = 0.
\eea
and also
\ben
\sum_{j=0}^n d u_j^2
& = & \sum_{j=0}^n (d (u\hat{u}_j))^2
= \sum_{j=0}^n (\hat{u}_j du + u d \hat{u}_j)^2 \\
& = & \sum_{j=0}^n \hat{u}_j^2 du^2 + 2 u du \sum_{j=0}^n \hat{u}_j d\hat{u}_j
+ u^2 \sum_{j=0}^n d \hat{u}_j^2 \\
& = & du^2 +  u^2 \sum_{j=0}^n d \hat{u}_j^2,
\een
\ben
\sum_{j=0}^n dv_j^2
& = & \sum_{j=0}^n (d (u\hat{v}_j))^2
=  \sum_{j=0}^n (\hat{v}_j du + u d \hat{v}_j)^2 \\
& = & \sum_{j=0}^n \hat{v}_j^2 du^2
+ 2 u du \sum_{j=0}^n \hat{v}_j d \hat{v}_j
+ u^2 \sum_{j=0}^n d \hat{v}_j^2 \\
& = & du^2 +  u^2 \sum_{j=0}^n d \hat{v}_j^2,
\een

\ben
\sum_{j=0}^n u_j du_j & = & u du, \\
\sum_{j=0}^n u_j dv_j & = & \sum_{j=0}^n u\hat{u}_j d (u\hat{v}_j)
=  \sum_{j=0}^n u\hat{u}_j \hat{v_j} d u
+ \sum_{j=0}^n u^2 \hat{u}_j d \hat{v}_j  \\
& = & u^2 \sum_{j=0}^n \hat{u}_j d \hat{v}_j.
\een
It follows that
\ben
g &=& \frac{1}{u}( du^2 +  u^2 \sum_{j=0}^n d \hat{u}_j^2)-
\frac{1}{u^3} (u du)^2 \\
& + & \frac{1}{u} \sum_{j=0}^n ( du^2 +  u^2 \sum_{j=0}^n d \hat{v}_j^2)
- \frac{1}{u^3} \biggl(  u^2 \sum_{j=0}^n \hat{u}_j d \hat{v}_j \biggr)^2 \\
& = & \frac{1}{u} du^2 +  u \biggl( \sum_{j=0}^n d \hat{u}_j^2
+ \sum_{j=0}^n d \hat{v}_j^2
- \biggl( \sum_{j=0}^n \hat{u}_j d \hat{v}_j \biggr)^2 \biggr).
\een
In the following two subsections we will show that
\be
h=\sum_{j=0}^n d \hat{u}_j^2
+ \sum_{j=0}^n d \hat{v}_j^2
- \biggl( \sum_{j=0}^n \hat{u}_j d \hat{v}_j \biggr)^2
= \sum_{j=0}^n d \hat{u}_j^2
+ \sum_{j=0}^n d \hat{v}_j^2
- \biggl( \sum_{j=0}^n \hat{v}_j d \hat{u}_j \biggr)^2
\ee
is the Sasaki metric on the unit conormal bundle of $S^n$.

Let $u= \frac{r^2}{\sqrt{2}}$,
since we have
\be
g = \frac{1}{\sqrt{2}} \big( dr^2 + r^2h\big)
\ee
is a K\"ahler metric,
$h$ is Sasakian (cf. Sparks \cite{Sparks}).

We note that in the same way one can  define K\"ahler metrics on deformed conifolds.

\subsection{Reformulation of $h$ by the Moser map}

We now understand $h$
by the Moser map.
We take $x_j = \hat{u}_j$ and $\xi_j = \hat{v}_j$,
\be \label{eqn:h}
\begin{split}
h = & \biggl(d \frac{|\vec{y}|^2-1}{|\vec{y}|^2+1} \biggr)^2
+ \sum_{j=1}^n \biggl( d\frac{2y^j}{|\vec{y}|^2+1} \biggr)^2 \\
+ & (d(\vec{\eta} \cdot \vec{y}))^2
+\sum_{j=1}^n
\biggl(d(\frac{|\vec{y}|^2+1}{2} \cdot \eta_j
- (\vec{\eta} \cdot \vec{y}) \cdot y_j)\biggr)^2 \\
- & \biggl(\frac{|\vec{y}|^2-1}{|\vec{y}|^2+1} d(\vec{\eta} \cdot \vec{y})
+\sum_{j=1}^n \frac{2y^j}{|\vec{y}|^2+1}
d (\frac{|\vec{y}|^2+1}{2} \cdot \eta_j - (\vec{\eta} \cdot \vec{y}) \cdot y_j) \biggr)^2.
\end{split}
\ee
The first line on the right-hand side  of \eqref{eqn:h} can be simplified to be
\ben
I & = &  \frac{4 \sum_{j=1}^n dy_j^2}{(\sum_{j=1}^n y_j^2+1)^2},
\een
The third line on the right-hand side of \eqref{eqn:h} can be simplified as follows:
\ben
III & = & - \biggl(\frac{|\vec{y}|^2-1}{|\vec{y}|^2+1} d(\vec{\eta} \cdot \vec{y}) \\
& + &\sum_{j=1}^n \frac{2y^j}{|\vec{y}|^2+1}
\biggl(\frac{|\vec{y}|^2+1}{2} \cdot d \eta_j
+ \half \eta_j d|\vec{y}|^2
-y_j d(\vec{\eta} \cdot \vec{y})
- (\vec{\eta} \cdot \vec{y}) \cdot d y_j)\biggr)\biggr)^2\\
& = & - \biggl(\frac{|\vec{y}|^2-1}{|\vec{y}|^2+1} d(\vec{\eta} \cdot \vec{y}) \\
& + & \sum_{j=1}^n y_j d\eta_j
+ \frac{\vec{y} \cdot \vec{\eta}}{|\vec{y}|^2+1} d|\vec{y}|^2
-\frac{2|\vec{y}|^2}{|\vec{y}|^2+1} d(\vec{\eta} \cdot \vec{y})
- \frac{(\vec{\eta} \cdot \vec{y})}{|\vec{y}|^2+1} \cdot d |\vec{y}|^2)\biggr)\biggr)^2 \\
& = & - \biggl(
- d(\vec{\eta} \cdot \vec{y}) +\sum_{j=1}^n y_j d\eta_j  \biggr)^2
= - \biggl( \sum_{j=1}^n \eta_j dy_j  \biggr)^2.
\een
The second line on the right-hand side of \eqref{eqn:h} is more complicated
and will be treated in steps as follows:
First we expand the square in the second term:
\ben
II & = & (d(\vec{\eta} \cdot \vec{y}))^2
+\sum_{j=1}^n
\biggl(\frac{|\vec{y}|^2+1}{2} \cdot d \eta_j
+ \half \eta_j d|\vec{y}|^2
-y_j d(\vec{\eta} \cdot \vec{y})
- (\vec{\eta} \cdot \vec{y}) \cdot d y_j)\biggr)^2 \\
& = & (d(\vec{\eta} \cdot \vec{y}))^2
+ \frac{(1+|\vec{y}|^2)^2}{4}\sum_{j=1}^n d \eta_j^2
+ \frac{1}{4} |\eta|^2 (d|\vec{y}|^2)^2
+|\vec{y}|^2 ( d(\vec{\eta} \cdot \vec{y}))^2 \\
& + &  (\vec{\eta} \cdot \vec{y})^2 \cdot \sum_{j=1}^n d y_j^2
+ \frac{|\vec{y}|^2+1}{4} \cdot d |\eta|^2  d|\vec{y}|^2
- (|\vec{y}|^2+1) \cdot \sum_{j=1}^n y_j d \eta_j
\cdot d(\vec{\eta} \cdot \vec{y}) \\
& - & (|\vec{y}|^2+1) \cdot
 (\vec{\eta} \cdot \vec{y}) \cdot \sum_{j=1}^n d \eta_j d y_j
- (\vec{\eta} \cdot \vec{y}) d|\vec{y}|^2 d(\vec{\eta} \cdot \vec{y}) \\
& - & (\vec{\eta} \cdot \vec{y}) \cdot
\sum_{j=1}^n \eta_j d y_j \cdot d|\vec{y}|^2
+ (\vec{\eta} \cdot \vec{y}) \cdot
d(\vec{\eta} \cdot \vec{y}) d |\vec{y}|^2.
\een
Next we simplify the above expression as follows:
\ben
II & = & \frac{(1+|\vec{y}|^2)^2}{4}\sum_{j=1}^n d \eta_j^2
+ \frac{1}{4} |\eta|^2 (d|\vec{y}|^2)^2
+(1+|\vec{y}|^2) ( d(\vec{\eta} \cdot \vec{y}))^2 \\
& + &  (\vec{\eta} \cdot \vec{y})^2 \cdot \sum_{j=1}^n d y_j^2
+ \frac{|\vec{y}|^2+1}{4} \cdot d |\eta|^2  d|\vec{y}|^2
- (|\vec{y}|^2+1) \cdot \sum_{j=1}^n y_j d \eta_j
\cdot d(\vec{\eta} \cdot \vec{y}) \\
& - & (|\vec{y}|^2+1) \cdot
 (\vec{\eta} \cdot \vec{y}) \cdot \sum_{j=1}^n d \eta_j d y_j
- (\vec{\eta} \cdot \vec{y}) \cdot
\sum_{j=1}^n \eta_j d y_j \cdot d|\vec{y}|^2 \\
& = & A+ B + C,
\een
where we have split the terms according to their types:
\ben
A & = &  \frac{(1+|\vec{y}|^2)^2}{4}\sum_{j=1}^n d \eta_j^2,
\een
\ben
B & = &  \frac{|\vec{y}|^2+1}{4} \cdot d |\eta|^2  d|\vec{y}|^2
+ (|\vec{y}|^2+1) \cdot \sum_{j=1}^n \eta_j d y_j
\cdot \sum_{k=1}^n y_k d\eta_k  \\
& - & (|\vec{y}|^2+1) \cdot
 (\vec{\eta} \cdot \vec{y}) \cdot \sum_{j=1}^n d \eta_j d y_j \\
&= & (|\vec{y}|^2+1) \sum_{j=1}^n d\eta_j \cdot
\biggl(\sum_{k=1}^n (\eta_j y_k + \eta_k y_j) d y_k
- (\vec{\eta}\cdot \vec{\eta} ) d y_j \biggr).
\een
\ben
C & = & \frac{1}{4} |\eta|^2 (d|\vec{y}|^2)^2
+  (\vec{\eta} \cdot \vec{y})^2 \cdot \sum_{j=1}^n d y_j^2 \\
& + & (|\vec{y}|^2+1) \cdot \biggl( \sum_{j=1}^n \eta_j d y_j \biggr)^2
- (\vec{\eta} \cdot \vec{y}) \cdot
\sum_{j=1}^n \eta_j d y_j \cdot d|\vec{y}|^2  \\
& = & \sum_{j=1}^n \biggl( - (\vec{\eta} \cdot \vec{y}) dy_j
+ \sum_{k=1}^n (y_k \eta_j + y_j \eta_k) dy_k\biggr)^2
+  \biggl(\sum_{j=1}^n \eta_j d y_j \biggr)^2.
\een
Combining the above terms together,
we get:
\ben
h & = &  \frac{4 \sum_{j=1}^n dy_j^2}{(\sum_{j=1}^n y_j^2+1)^2}
+ \frac{(1+|\vec{y}|^2)^2}{4}\sum_{j=1}^n d \eta_j^2 \\
& + & (|\vec{y}|^2+1) \sum_{j=1}^n d\eta_j \cdot
\biggl(\sum_{k=1}^n (\eta_j y_k + \eta_k y_j) d y_k
- (\vec{\eta}\cdot \vec{\eta} ) d y_j \biggr) \\
& + & \sum_{j=1}^n \biggl( - (\vec{\eta} \cdot \vec{y}) dy_j
+ \sum_{k=1}^n (y_k \eta_j + y_j \eta_k) dy_k\biggr)^2 \\
& = & \frac{4 \sum_{j=1}^n dy_j^2}{(\sum_{j=1}^n y_j^2+1)^2} \\
& + & \frac{(1+|\vec{y}|^2)^2}{4}\sum_{j=1}^n
\biggl( d \eta_j
+ \frac{2}{(|\vec{y}|^2+1)}
\biggl(\sum_{k=1}^n (\eta_j y_k + \eta_k y_j) d y_k
- (\vec{\eta}\cdot \vec{\eta} ) d y_j \biggr) \biggr)^2.
\een
To summarize we have proved the following:

\begin{prop} \label{prop:h-D}
On $T^*S^n$,
the expression $h$ defined in \eqref{eqn:h} can be expressed in local coordinates
$\{y_i, \eta_i\}$ as follows:
\be
h = \frac{4 \sum_{j=1}^n dy_j^2}{(\sum_{j=1}^n y_j^2+1)^2}
+ \frac{(1+|\vec{y}|^2)^2}{4}\sum_{j=1}^n  D\eta_j,
\ee
where
\be
D\eta_j = d \eta_j
+ \frac{2}{(|\vec{y}|^2+1)}
\biggl(\sum_{k=1}^n (\eta_j y_k + \eta_k y_j) d y_k
- (\vec{\eta}\cdot \vec{\eta} ) d y_j \biggr).
\ee
\end{prop}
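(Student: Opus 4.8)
The plan is to verify the identity by the direct computation that is essentially carried out in the displays preceding the statement, organized around the three lines of \eqref{eqn:h}. Write $h=I+II+III$, where $I$, $II$, $III$ are respectively the first, second and third lines on the right-hand side of \eqref{eqn:h}; abbreviating by $x_j$ the components of the inverse stereographic projection \eqref{eqn:Moser1} and by $\xi_j$ the components \eqref{eqn:Moser2}, we have $I=\sum_{j=0}^{n}(dx_j)^2$, $II=\sum_{j=0}^{n}(d\xi_j)^2$, and $III=-\bigl(\sum_{j=0}^{n}x_j\,d\xi_j\bigr)^2$. The only tools are the product rule together with the elementary identities $d(\vec{\eta}\cdot\vec{y})=\vec{\eta}\cdot d\vec{y}+\vec{y}\cdot d\vec{\eta}$ and $d|\vec{y}|^2=2\,\vec{y}\cdot d\vec{y}$, and the relations $\vec{x}\cdot\vec{x}=1$, $\vec{x}\cdot\vec{\xi}=0$, \eqref{eqn:Moser}.

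First I would dispose of $I$ and $III$, which are quick. Differentiating $x_0=\frac{|\vec{y}|^2-1}{|\vec{y}|^2+1}$ and $x_j=\frac{2y_j}{|\vec{y}|^2+1}$ and collecting, the cross terms cancel and $I$ collapses to the conformally flat expression $\frac{4\sum_{j=1}^{n}dy_j^2}{(|\vec{y}|^2+1)^2}$, which is the classical fact that inverse stereographic projection is conformal. For $III$, differentiating $\vec{x}\cdot\vec{\xi}=0$ gives $\sum_j x_j\,d\xi_j=-\sum_j\xi_j\,dx_j$, and by \eqref{eqn:Moser} the right-hand side equals $-\sum_{j=1}^{n}\eta_j\,dy_j$; hence $III=-\bigl(\sum_{j=1}^{n}\eta_j\,dy_j\bigr)^2$.

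The \emph{substantive} step is $II$. Here I would expand $\bigl(d(\vec{\eta}\cdot\vec{y})\bigr)^2+\sum_{j=1}^{n}\bigl(\half\,\eta_j\,d|\vec{y}|^2+\frac{|\vec{y}|^2+1}{2}\,d\eta_j-y_j\,d(\vec{\eta}\cdot\vec{y})-(\vec{\eta}\cdot\vec{y})\,dy_j\bigr)^2$, substitute $d(\vec{\eta}\cdot\vec{y})=\vec{\eta}\cdot d\vec{y}+\vec{y}\cdot d\vec{\eta}$ throughout, and sort the resulting quadratic monomials by type. After the cancellations, the pure $d\eta_j\,d\eta_k$ part reduces to $A=\frac{(1+|\vec{y}|^2)^2}{4}\sum_{j}d\eta_j^2$; the pure $dy_j\,dy_k$ part can be regrouped as $C=\sum_{j}\bigl(-(\vec{\eta}\cdot\vec{y})\,dy_j+\sum_{k}(\eta_j y_k+\eta_k y_j)\,dy_k\bigr)^2+\bigl(\sum_{j}\eta_j\,dy_j\bigr)^2$; and the mixed $d\eta_j\,dy_k$ part is $B=(1+|\vec{y}|^2)\sum_{j}d\eta_j\cdot\bigl(\sum_{k}(\eta_j y_k+\eta_k y_j)\,dy_k-(\vec{\eta}\cdot\vec{y})\,dy_j\bigr)$. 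Adding $III$ then cancels the stray term $\bigl(\sum_{j}\eta_j\,dy_j\bigr)^2$ inside $C$, and $A+B+\bigl(C-(\sum_j\eta_j\,dy_j)^2\bigr)$ assembles into a single perfect square: writing $\Theta_j=\sum_{k}(\eta_j y_k+\eta_k y_j)\,dy_k-(\vec{\eta}\cdot\vec{y})\,dy_j$, the term $A$ furnishes the $d\eta_j^2$ contributions, the regrouped $C$ furnishes $\sum_j\Theta_j^2$, and $B$ furnishes exactly the cross terms $(1+|\vec{y}|^2)\sum_j d\eta_j\cdot\Theta_j$, so the total equals $\frac{(1+|\vec{y}|^2)^2}{4}\sum_{j}\bigl(d\eta_j+\frac{2}{1+|\vec{y}|^2}\,\Theta_j\bigr)^2=\frac{(1+|\vec{y}|^2)^2}{4}\sum_j(D\eta_j)^2$. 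Adding back $I$ gives the asserted formula.

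The only real difficulty is the bookkeeping inside $II$: one must keep track of signs and of the powers of $(1+|\vec{y}|^2)$ through the expansion and the substitution for $d(\vec{\eta}\cdot\vec{y})$, and one must regroup the pure-$dy$ part $C$ so that exactly one copy of $\bigl(\sum_j\eta_j\,dy_j\bigr)^2$ is split off to cancel against $III$, leaving a residue that completes the square against $A$ and $B$. Once the splitting $h=I+II+III$ is set up in this way, the remaining manipulations are mechanical.
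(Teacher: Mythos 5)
Your proposal is correct and follows essentially the same route as the paper: the same splitting $h=I+II+III$, the same reduction of $I$ to the conformally flat form and of $III$ to $-\bigl(\sum_j\eta_j\,dy_j\bigr)^2$, the same sorting of $II$ into the pieces $A$, $B$, $C$, and the same completion of the square after cancelling one copy of $\bigl(\sum_j\eta_j\,dy_j\bigr)^2$ against $III$. Your derivation of $III$ by differentiating $\vec{x}\cdot\vec{\xi}=0$ and invoking \eqref{eqn:Moser} is a slightly cleaner shortcut than the paper's direct expansion, but it is not a different argument; note also that your final expression correctly carries the square $(D\eta_j)^2$, which is evidently what the proposition's displayed formula intends.
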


\subsection{Sasaki metric}

To under $h$ in last Proposition let us recall the {\em Sasaki metric} on the tangent bundle of a Riemannian manifold
$(M^n, g)$.
Suppose   in a local coordinate patch $U$ with local
coordinates $\{x^i\}$ the Riemannian metric of $M$ is given by the quadratic form
\be
ds^2 = g_{jk}(x)dx^jdx^k.
\ee
Denote by $\nabla$ the Levi-Civita connection
of $(M, g)$,
\be
\nabla \frac{\pd}{\pd x^i} = \Gamma_{ij}^k dx^j \otimes \frac{d}{dx^k}.
\ee
Then
\be
\nabla (v^i \frac{\pd}{\pd x^i} ) = d v^i \otimes \frac{\pd}{\pd x^i}
+ v^i \nabla \frac{\pd}{\pd x^i}
= Dv^i \otimes \frac{\pd}{\pd x^i},
\ee
where $Dv^i$ means the covariant differential of $v^j$, i.e.
\be
Dv^i = dv^i + \Gamma^i_{jk} v^j dx^k.
\ee
Similarly,
we have
\be
\nabla (p_idx^i ) = d p_i \otimes dx^i + p_i \nabla dx^i
= Dp_j \otimes d x^j,
\ee
where $Dp_i$ means the covariant differential of $p_j$, i.e.
\be
Dp_j = dp_j - \Gamma^i_{jk} p_i dx^k.
\ee

The Sasaki metric of $TM$ is given in $\pi^{-1}(U)$ by the quadratic form
\be
d\sigma^2 = g_{jk}(x)dx^jdx^k + g_{jk}(x)Dv^jDv^k.
\ee
It can also be written as:
\be
\begin{split}
d\sigma^2 & = g_{ij}(x)dx^idx^j + g_{ij}(x) \Gamma^i_{ab}\Gamma^j_{cd}v^a v^cdx^b dx^d \\
& + 2 g_{ij}(x) \Gamma^i_{ab} v^a dx^b dv^j
+ g_{ij}(x) dv^i dv^j.
\end{split}
\ee
Using the Riemannian metric one can define an isomorphism
$\sharp: T^*M \to TM$ by:
\be
p_j dx^j \mapsto v^i \frac{\pd}{\pd x^i} = p_j g^{ij} \frac{\pd}{\pd x^i}.
\ee
Using this isomorphism one can pull back the Sasaki metric
to $T^*M$.
Since we have
\be
v^i = g^{ij} p_j,
\ee
it follows that
\ben
d\sigma^2 & = &
g_{ij}dx^idx^j + g_{ij} \Gamma^i_{ab}\Gamma^j_{cd}g^{ak}p_k g^{cl}p_ldx^b dx^d \\
& + & 2 g_{ij} \Gamma^i_{ab} g^{ak}p_k dx^b d(g^{jl}p_l)
+ g_{ij} d(g^{ik}p_k) d(g^{jl}p_l) \\
& = &
g_{ij}dx^idx^j + g_{ij} \Gamma^i_{ab}\Gamma^j_{cd}g^{ak}p_k g^{cl}p_ldx^b dx^d \\
& + & 2 g_{ij} \Gamma^i_{ab} g^{ak}p_k dx^b g^{jl} d p_l
+ 2 g_{ij} \Gamma^i_{ab} g^{ak}p_k dx^b p_l dg^{jl} \\
& + & g_{ij} p_k p_l dg^{ik} dg^{jl}
+ 2 g_{ij} g^{jl} dg^{ik}p_k dp_l
+ g_{ij} g^{ik} g^{jl} d p_k  d p_l\\
& = & g_{ij}dx^idx^j
+ g_{ij} (\Gamma^i_{ab} g^{ak} dx^b + d g^{ik})
(\Gamma^j_{cd} g^{cl} dx^d + dg^{jl}) p_k p_l \\
& + &  2 (\Gamma^i_{ab} g^{ak}dx^b + dg^{ik} ) p_k dp_i
+ g^{kl} d p_k d p_l.
\een
Now note we have the following computations:
\ben
&& \Gamma^i_{ab} g^{ak} dx^b  + dg^{ik} \\
& = & \half g^{ij} (g_{aj,b} + g_{bj,a} - g_{ab,j}) g^{ak} dx^b
- g^{ij} \cdot g_{jl, b} dx^b\cdot g^{lk} \\
& = & \half g^{ij} g_{aj,b} g^{ak} dx^b
+ \half g^{ij} g_{bj,a} g^{ak} dx^b
- \half g^{ij} g_{ab,j} g^{ak} dx^b
- g^{ij} \cdot g_{jl, b} dx^b\cdot g^{lk} \\
& = & - \half g^{ij} g_{aj,b} g^{ak} dx^b
+ \half g^{ij} g_{bj,a} g^{ak} dx^b
- \half g^{ij} g_{ab,j} g^{ak} dx^b  \\
& = & - g^{ij} \Gamma^k_{jb} dx^b.
\een
It  follows that
\be
d\sigma^2 = g_{jk}dx^jdx^k + g^{kl} D p_k D p_l.
\ee
Indeed, we have:
\ben
d\sigma^2 & = & g_{ij}dx^idx^j
+ g_{ij} g^{is} \Gamma^k_{sb} dx^b
\cdot g^{jt} \Gamma^k_{tc} dx^c  p_k p_l \\
& + &  2 (\Gamma^i_{ab} g^{ak}dx^b + dg^{ik} ) p_k dp_i
+ g^{kl} d p_k d p_l \\
& = & g_{ij}dx^idx^j + g^{kl} \Gamma^i_{ka} p_i \Gamma^j_{lb} p_j dx^a dx^b
 2 g^{ij} \Gamma^k_{jb} dx^b p_k dp_i
+ g^{kl} d p_k d p_l \\
& = & g_{ij} dx^i dx^j + g^{ij} Dp_iD p_j.
\een

Now we let
\be
g_{ij} = \frac{4}{\big( 1+ |\vec{x}|^2)^2} \delta_{i, j},
\;\;\;\;\;\; i,j=1, \dots, n,
\ee
where $|\vec{x}| = \sum_{j=1}^n (x^j)^2$.
Then one has
\be
D p_j = d p_j  - \frac{ 2 \vec{x} \cdot \vec{p}}{1+|\vec{x}|^2} dx^j
+ \sum_{k } \frac{2(x^kp_j + x^j p_k)}{1+|\vec{x}|^2} dx^k.
\ee
where $\vec{x} \cdot \vec{p} =\sum_i x^ip_i$.

By comparing with Proposition \label{prop:h}
we have

\begin{prop} \label{prop:h2}
On $T^*S^n$,
the expression $h$ defined in \eqref{eqn:h}
is the Sasaki metric for the round metric on $S^n$.
\end{prop}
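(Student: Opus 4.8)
The plan is to specialize the general formula for the Sasaki metric on a cotangent bundle, derived in the preceding paragraphs in the form
\be
d\sigma^2 = g_{jk}(x)\,dx^jdx^k + g^{kl}(x)\,Dp_k\,Dp_l, \qquad Dp_j = dp_j - \Gamma^i_{jk}p_i\,dx^k,
\ee
to the case $M = S^n$ carrying its round metric, written in the stereographic coordinates furnished by the Moser map, and then to compare the result term by term with the expression for $h$ obtained in Proposition \ref{prop:h-D}.

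First I would record that, under the inverse stereographic projection \eqref{eqn:Moser1}--\eqref{eqn:Moser2} with base coordinates $\vec x := \vec y$ and fiber coordinates $p_j := \eta_j$, the round metric on $S^n \subset \bR^{n+1}$ pulls back to
\be
g_{ij} = \frac{4}{(1+|\vec x|^2)^2}\,\delta_{ij},
\ee
which is precisely the term labelled $I$ in the reformulation of $h$ carried out just above; its inverse is $g^{ij} = \tfrac14(1+|\vec x|^2)^2\,\delta^{ij}$. Next I would compute the Christoffel symbols: since $g_{ij} = e^{2\phi}\delta_{ij}$ with $\phi = \log 2 - \log(1+|\vec x|^2)$ is conformally flat, one has $\Gamma^k_{ij} = \delta^k_i\,\pd_j\phi + \delta^k_j\,\pd_i\phi - \delta_{ij}\,\pd_k\phi$, and from $\pd_k\phi = -2x^k/(1+|\vec x|^2)$ this reproduces exactly the covariant differential
\be
Dp_j = dp_j - \frac{2\,\vec x\cdot\vec p}{1+|\vec x|^2}\,dx^j + \sum_k \frac{2(x^kp_j + x^jp_k)}{1+|\vec x|^2}\,dx^k
\ee
displayed just before the statement. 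Substituting into the Sasaki formula gives
\be
d\sigma^2 = \frac{4\sum_j dx_j^2}{(1+|\vec x|^2)^2} + \frac{(1+|\vec x|^2)^2}{4}\sum_j (Dp_j)^2,
\ee
which, after renaming $(x^j,p_j) = (y_j,\eta_j)$, is literally the formula for $h$ in Proposition \ref{prop:h-D} (with $Dp_j$ matching $D\eta_j$). Hence $h = d\sigma^2$.

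The main obstacle, though a mild one, is the identification in the first step: one must verify that the particular stereographic chart built into the Moser map realizes the round metric as $\frac{4}{(1+|\vec y|^2)^2}\delta_{ij}$ and that $\eta_j$ is genuinely the canonical fiber coordinate dual to $y_j$ on $T^*S^n$ (this follows from $\vec\xi\cdot d\vec x = \vec\eta\cdot d\vec y$, identity \eqref{eqn:Moser}), so that the conformal-flatness computation of $\Gamma^k_{ij}$ applies verbatim and the correction terms in $Dp_j$ coincide with those in $D\eta_j$. Once this is granted, the proof reduces to the bookkeeping already performed in Proposition \ref{prop:h-D} and in the derivation of the Sasaki metric, plus the observation that the two displayed quadratic forms agree identically.
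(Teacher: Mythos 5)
Your proof is correct and follows essentially the same route as the paper: specialize the general cotangent-bundle Sasaki formula $d\sigma^2 = g_{jk}dx^jdx^k + g^{kl}Dp_kDp_l$ to the round metric $g_{ij}=\frac{4}{(1+|\vec x|^2)^2}\delta_{ij}$ in the stereographic chart of the Moser map and compare with Proposition \ref{prop:h-D}. You supply a bit more detail than the paper (the conformal-flatness computation of $\Gamma^k_{ij}$ and the check that $\eta_j$ is the canonical fiber coordinate via $\vec\xi\cdot d\vec x=\vec\eta\cdot d\vec y$), and your form of $Dp_j$ with the term $-\frac{2\,\vec x\cdot\vec p}{1+|\vec x|^2}dx^j$ is the correct one, matching the corrected version of $D\eta_j$ (the factor $(\vec\eta\cdot\vec\eta)$ printed in Proposition \ref{prop:h-D} is a typo for $\vec\eta\cdot\vec y$).
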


\subsection{Ricci curvature of the Kepler metric $g$ on $K_n$}

In the above we have shown that the $g$ on $K_n$ is K\"ahler,
now we show it is also Ricci-flat.
Recall
\be
\omega = 2 i \pd \dbar u,
\ee
where
$$u = (u_0^2 + \cdots + u_n^2)^{1/2}
= \frac{1}{\sqrt{2}} (|w_0|^2+|w_1|^2+ \cdots + |w_n|^2)^{1/2}
= \frac{1}{\sqrt{2}} |\vec{w}|.$$
Since we have
\begin{align}
\pd |\vec{w}| & = \frac{1}{2|\vec{w}|} \sum_{j=0}^n \bar{w}_j dw_j, &
\dbar |\vec{w}| & = \frac{1}{2|\vec{w}|} \sum_{k=0}^n w_k d\bar{w}_k,
\end{align}
after taking differential again,
one gets
\be
\omega = \frac{\sqrt{2}i}{2|\vec{w}|} \sum_{j=0}^n dw_j\wedge d\bar{w}_j
- \frac{\sqrt{2}i}{4|\vec{w}|^3}
\sum_{j=0}^n \bar{w}_j dw_j \wedge \sum_{k=0}^n w_k d\bar{w}_k.
\ee
On the local coordinate patch on $C_n^*$ with $w_1,\dots, w_n$ as local coordinates,
we have
\be
w_0 = \pm i (w_1^2+ \cdots + w_n^2)^{1/2},
\ee
and so we have
\begin{align}
\pd w_0 & = - \frac{\sum_{j=1}^n w_j dw_j}{w_0}, &
\dbar\bar{w}_0 & = - \frac{\sum_{k=1}^n \bar{w}_k d\bar{w}_k}{\bar{w}_0}.
\end{align}
Now the K\"ahler form is given by:
\ben
\omega
& = & \frac{\sqrt{2}i}{2|\vec{w}|} \sum_{j=1}^n dw_j\wedge d\bar{w}_j
+ \frac{\sqrt{2}i}{2|\vec{w}|\cdot|w_0|^2} \sum_{j=1}^n w_j dw_j \wedge
\sum_{k=1}^n \bar{w}_k d\bar{w}_k \\
& - & \frac{\sqrt{2}i}{4|\vec{w}|^3}
\biggl( \sum_{j=1}^n \bar{w}_j dw_j - \frac{\bar{w}_0\sum_{j=1}^n w_j dw_j}{z_0} \biggr)
\wedge \biggl( \sum_{k=1}^n z_k d\bar{w}_k
- \frac{w_0\sum_{k=1}^n \bar{w}_k d\bar{w}_k}{\bar{w}_0} \biggr) \\
& = & \frac{\sqrt{2}i}{2|\vec{w}|} \sum_{j=1}^n dw_j\wedge d\bar{w}_j
+ \frac{\sqrt{2}i}{2|\vec{w}|\cdot|w_0|^2}
\sum_{j=1}^n w_j dw_j \wedge \sum_{k=1}^n \bar{w}_k d\bar{w}_k \\
& - & \frac{\sqrt{2}i}{4|\vec{w}|^3}
\sum_{j=1}^n (w_0\bar{w}_j-\bar{w}_0w_j) dw_j
\wedge \sum_{k=1}^n (\bar{w}_0w_k-w_0\bar{w}_k) d\bar{w}_k.
\een
Write $\omega = \omega_{i\bar{j}} dw_i \wedge dw_j$,
one can get:
\ben
\det (\omega_{i\bar{j}}) & = & \biggl(\frac{\sqrt{2}i}{2} \biggr)^n
\frac{1}{2 |\sum_{j=1}^n w_j^2| \cdot (|\sum_{j=1}^n w_j^2|+\sum_{j=1}^n |w_j|^2)^{(n-2)/2}},
\een
using the following easily proved identity:
\be \label{eqn:Determinant}
\begin{split}
& \det ( \delta_{i,j}+w_i\bar{w}_j +¡¡x z_i\bar{z}_j )_{i,j=1, \dots£¬n} \\
= & (1+ \sum_{j=1}^n |w_j|^2) (1 + x \sum_{k=1}^n |z_k|^2)
- x \biggl| \sum_{j=1}^n z_j \bar{w}_j \biggr|^2.
\end{split}
\ee
The Ricci form of $\omega$ is
\be
\rho = - i\pd \dbar \log \det (\omega_{i\bar{j}})
= - \frac{n-2}{2} \pd \dbar \log \biggl(|\sum_{j=1}^n w_j^2|+\sum_{j=1}^n |w_j|^2\biggr).
\ee
In particular,
when $n=2$, $\rho = 0$.

\section{K\"ahler Ricci-Flat Metric on Conifolds and Deformed Conifolds}

\label{sec:KRF-Con-Deformed}

In last Section we have seen that
the Kepler metrics on the Kepler manifolds $K_n$ are in general not
K\"ahler Ricci-flat.
In this Section we show that it is possible to
construct natural K\"ahler Ricci-flat metrics on conifold
and the deformed conifold.

\subsection{Some K\"ahler Ricci-flat metrics on the conifold}
\label{sec:KRF-Con}

Let us consider a K\"ahler potential of the form:
\be
K = f(t)
\ee
on $C_n^*$, where $t$ is defined by:
\be
t = \sum_{j=0}^n |w_j|^2.
\ee
Since we have
\begin{align}
\pd t & = \sum_{j=1}^n \bar{w}_j d w_j - \frac{\bar{w}_0\sum_{j=1}^n w_j dw_j}{w_0}
= \frac{1}{w_0} \sum_{j=1}^n (w_0 \bar{w}_j - \bar{w}_0 w_j) d w_j, \\
\dbar t & = \sum_{j=1}^n w_j d \bar{w}_j  - \frac{w_0 \sum_{k=1}^n \bar{w}_k d\bar{w}_k}{\bar{w}_0}
= \frac{1}{\bar{w}_0} \sum_{j=1}^n (\bar{w}_0 w_j - w_0 \bar{w}_j) d \bar{w}_j.
\end{align}
and
\be
\pd \dbar t = \sum_{j=1}^n dw_j \wedge d \bar{w}_j
+  \frac{1}{|w_0|^2}  \sum_{j=1}^n w_j dw_j\wedge \sum_{k=1}^n \bar{w}_k d\bar{w}_k,
\ee
the K\"ahler form associated with $K$ is then
\ben
\omega_K : &=& i \pd \dbar K = i f'(t) \pd \dbar t + i f''(t) \pd t \wedge \dbar t \\
& = & i f'(t) \cdot \biggl(  \sum_{j=1}^n dw_j \wedge d \bar{w}_j
+  \frac{1}{|w_0|^2}  \sum_{j=1}^n w_j dw_j\wedge \sum_{k=1}^n \bar{w}_k d\bar{w}_k
\biggr) \\
& + & i f''(t) \cdot  \frac{1}{|w_0|^2} \sum_{j=1}^n (w_0 \bar{w}_j - \bar{w}_0 w_j) d w_j
\wedge \sum_{j=1}^n (\bar{w}_0 w_j - w_0 \bar{w}_j) d \bar{w}_j.
\een
By the formula \eqref{eqn:Determinant},
the determinant of the associated Hermitian matrix is:
\ben
&& i^n (f'(t))^n \cdot \biggl[ \biggl( 1 + \frac{1}{|w_0|^2} \sum_{j=1}^n |w_j|^2
\biggr) \cdot
\biggl(1 + \frac{f''(t)}{f'(t)}
\frac{1}{|w_0|^2} \sum_{j=1}^n |\bar{w}_0 w_j - w_0 \bar{w}_j|^2 \biggr) \\
& - & \frac{f''(t)}{f'(t)} \frac{1}{|w_0|^4}
\biggl|\sum_{j=1}^n w_j(\bar{w}_0 w_j - w_0 \bar{w}_j) \biggr|^2 \biggr] \\
& = & i^n (f'(t))^n \cdot  \frac{1}{|w_0|^2} \biggl[ \sum_{j=0}^n |w_j|^2 \cdot
\biggl(1 + \frac{f''(t)}{f'(t)} \cdot  2 \sum_{j=0}^n |w_j|^2 \biggr)
- \frac{f''(t)}{f'(t)} \biggl(\sum_{j=0}^n |w_j|^2\biggr)^2 \biggr] \\
& = & i^n   \cdot  \frac{1}{|w_0|^2} \biggl[ t(f'(t))^n + t^2 (f'(t))^{n-1}f''(t) \biggr].
\een
In the above we have used the following computations:
\bea
&& \sum_{j=1}^n |\bar{w}_0 w_j - w_0 \bar{w}_j|^2
= \sum_{j=1}^n (2|w_0|^2|w_j|^2- w_0^2\bar{w}_j^2 - \bar{w}_0^2 w_j^2)
= 2 |w_0|^2 \sum_{j=0}^n |w_j|^2, \label{eqn:Aux1} \\
&& \sum_{j=1}^n w_j(\bar{w}_0 w_j - w_0 \bar{w}_j)
= \bar{w}_0 \sum_{j=1}^n w_j^2 - w_0 \sum_{j=1}^n |w_j|^2
= - w_0 \sum_{j=0}^n |w_j|^2, \label{eqn:Aux2}
\eea
and in these equalities we use the equation
\be
\sum_{j=0}^n w_0^2 = 0.
\ee
The Ricci form of $\omega_K$ is
\be
\rho_K = -i \pd \dbar \log \biggl[ t(f'(t))^n + t^2 (f'(t))^{n-1}f''(t) \biggr].
\ee

To get K\"ahler Ricci-flat metric, it suffices to have
\be
t(f'(t))^n + t^2 (f'(t))^{n-1}f''(t) = c
\ee
Let $y(t) = t \phi'(t)$,
\be
y^{n-1} y' = ct^{n-2}
\ee
Integrating once,
\be
y^n = \frac{nc}{n-1} t^{n-1} + c_1.
\ee
Rewrite it as follows:
\be
f'(t) = \frac{1}{t}\biggl(\frac{nc}{n-1} t^{n-1} + c_1\biggr)^{1/n},
\ee
and so
\be
f(t) = \int
\frac{1}{t}
\biggl(\frac{nc}{n-1} t^{n-1} + c_1\biggr)^{1/n} dt.
\ee
In particular,
$f(t) = C t^{(n-1)/n}$ is a solution.

\subsection{Some K\"ahler Ricci-flat metrics on the deformed conifold}

It is remarkable that similar computations can be done on the deformed conifold
$\tilde{C}_n(a)$ defined by the equation
\be
\sum_{j=0}^n w_j^2 = a,
\ee
where $a \in \bC^* = \bC - \{0\}$.
We need to modify \eqref{eqn:Aux1} and \eqref{eqn:Aux2}:
\ben
&& \sum_{j=1}^n |\bar{w}_0 w_j - w_0 \bar{w}_j|^2
= \sum_{j=1}^n (2|w_0|^2|w_j|^2- w_0^2\bar{w}_j^2 - \bar{w}_0^2 w_j^2) \\
& = & 2 |w_0|^2 \sum_{j=0}^n |w_j|^2 - (a \bar{w}_0^2+\bar{a}w_0^2),\\
&& \sum_{j=1}^n w_j(\bar{w}_0 w_j - w_0 \bar{w}_j)
= \bar{w}_0 \sum_{j=1}^n w_j^2 - w_0 \sum_{j=1}^n |w_j|^2 \\
& = & a \bar{w}_0 - w_0 \sum_{j=0}^n |w_j|^2.
\een

\ben
&& i^n (f'(t))^n \cdot \biggl[ \biggl( 1 + \frac{1}{|w_0|^2} \sum_{j=1}^n |w_j|^2
\biggr) \cdot
\biggl(1 + \frac{f''(t)}{f'(t)}
\frac{1}{|w_0|^2} \sum_{j=1}^n |\bar{w}_0 w_j - w_0 \bar{w}_j|^2 \biggr) \\
& - & \frac{f''(t)}{f'(t)} \frac{1}{|w_0|^4}
\biggl|\sum_{j=1}^n w_j(\bar{w}_0 w_j - w_0 \bar{w}_j) \biggr|^2 \biggr] \\
& = & i^n (f'(t))^n \cdot  \frac{1}{|w_0|^2} \biggl[ \sum_{j=0}^n |w_j|^2 \cdot
\biggl(1 + \frac{f''(t)}{f'(t)} \cdot
\frac{1}{|w_0|^2} \biggl(2 |w_0|^2\sum_{j=0}^n |w_j|^2 - (a \bar{w}_0^2+\bar{a}w_0^2)\biggr) \biggr) \\
& - & \frac{f''(t)}{f'(t)} \frac{1}{|w_0|^4}
 \biggl|a \bar{w}_0 - w_0 \sum_{j=0}^n |w_j|^2\biggr|^2 \biggr] \\
& = & i^n (f'(t))^n \cdot  \frac{1}{|w_0|^2} \biggl[ t \cdot
\biggl(1 + \frac{f''(t)}{f'(t)} \cdot
\biggl(2 t - \frac{(a \bar{w}_0^2+\bar{a}w_0^2)}{|w_0|^2} \biggr) \biggr) \\
& - & \frac{f''(t)}{f'(t)} \frac{1}{|w_0|^4}
 \biggl(|a|^2 |w_0|^2 -t (a\bar{w}_0^2 + \bar{a}w_0^2)
 + t^2 |w_0|^2 \biggr) \biggr] \\
& = & i^n   \cdot  \frac{1}{|w_0|^2}
\biggl[ t(f'(t))^n + (t^2-|a|^2) (f'(t))^{n-1}f''(t) \biggr].
\een
Therefore,
to get a K\"ahler Ricci-flat metric, it suffices to have
\be
t(f'(t))^n + (t^2-|a|^2) (f'(t))^{n-1}f''(t) = c.
\ee
When $a=1$,
this was obtained by Stenzel \cite{Stenzel}.
Let us modify his solution for this case to general $a \in \bC^*$ as follows.
Let $t= |a| x$,
and $g(x) : = f(t)$.
Then since $g'(x) = |a| f'(t)$, $g''(x) = |a|^2 f''(t)$,
\be
x (g'(x))^n + (x^2-1) (g'(x))^{n-1} g''(x) = c|a|^{n-1}.
\ee
Now let $x = \cosh w$, and
$h(w):= g(x)$.
Then one has
\be
\frac{d}{dw} (h'(w))^n = n c|a|^{n-1} (\sinh w)^{n-1}.
\ee

\subsection{Sasakian-Einstein metric}

One can redo the calculations in Section \ref{sec:KRF-Con}
in the fashion of Section \ref{sec:Kepler metrics}.
I.e.,
one first works on $\bC^{n+1}$, then restricts to $C_n^*$.
Then
\ben
\omega_K
= \sqrt{-1} f'(t) \sum_{j=0}^n d w_j \wedge d \bar{w}_j
+ \sqrt{-1} f''(t) \sum_{j=0}^n \bar{w}_j dw_j \wedge \sum_{k=0}^n w_k d \bar{w}_k.
\een
The Riemannian metric is
\be
\begin{split}
g_K = & 2f'(t) \sum_{j=0}^n (du_j^2+dv_j^2)
+ 2 f''(t) \biggl(\sum_{j=0}^n(u_jdu_j+v_jdv_j)\biggr)^2 \\
+ & 2f''(t) \biggl(\sum_{j=0}^n(u_jdv_j-v_jdu_j)\biggr)^2.
\end{split}
\ee
By the same computations as in Section \ref{sec:Kepler metrics},
\be
\begin{split}
g_K = & 2 f'(t) \cdot \biggl(2du^2 + u^2 \sum_{j=0}^n (d\hat{u}_j^2 + d\hat{v}_j^2)\biggr) \\
+ & 8 f''(t) u^2 du^2 + 8 f''(t) u^4 \biggl( \sum_{j=0}^n \hat{u}_j d\hat{v}_j\biggr)^2.
\end{split}
\ee
Note $u^2= \half t$, so
\be
\begin{split}
g_K = & \biggl(\frac{f'(t)}{2t} + \frac{f''(t)}{2}\biggr) \cdot dt^2
+ t f'(t) \sum_{j=0}^n (d\hat{u}_j^2 + d\hat{v}_j^2)\\
+ & 2t^2 f''(t)  \biggl( \sum_{j=0}^n \hat{u}_j d\hat{v}_j\biggr)^2.
\end{split}
\ee
When $f(t) = C t^{(n-1)/n}$,
\be
\begin{split}
g_K = & C\frac{(n-1)^2}{2n^2} t^{-(n+1)/n} dt^2\\
+ & C\frac{n-1}{n} t^{(n-1)/n} \biggl( \sum_{j=0}^n (d\hat{u}_j^2 + d\hat{v}_j^2)
- \frac{2}{n} \biggl( \sum_{j=0}^n \hat{u}_j d\hat{v}_j\biggr)^2 \biggr).
\end{split}
\ee
Take $C= \frac{2n^2}{(n-1)^2}$ and $r = \sqrt{\frac{2n}{(n-1)}} t^{\frac{n-1}{2n}}$,
\be
g_K = dr^2 + r^2  \biggl( \sum_{j=0}^n (d\hat{u}_j^2 + d\hat{v}_j^2)
- \frac{2}{n} \biggl( \sum_{j=0}^n \hat{u}_j d\hat{v}_j\biggr)^2 \biggr).
\ee
As a corollary,
this shows that
$$\sum_{j=0}^n (d\hat{u}_j^2 + d\hat{v}_j^2)
- \frac{2}{n} \biggl( \sum_{j=0}^n \hat{u}_j d\hat{v}_j\biggr)^2$$
defines a Sasakian-Einstein metric \cite{Boy-Gal}
on the unit conormal bundle of the round sphere $S^n$.

\subsection{A special K\"ahler-Ricci flow}

\label{sec:KRFlow}

Consider K\"ahler-Ricci flow of the form
\be
\frac{d}{ds} \omega_K = \rho_K + \lambda \cdot \omega_K,
\ee
where $\lambda$ is some constant,
and
\ben
&& \omega_K = \sqrt{-1} \pd \dbar f_s(t), \\
&& \rho_K = -i \pd \dbar \log \biggl[ t(f_s'(t))^n + t^2 (f_s'(t))^{n-1}f_s''(t) \biggr],
\een
for some family $\{f_s(t)\}$ of functions in $t$,
parameterized by $s$.
One can consider the following flow in the space of K\"ahler potentials:
\be
\frac{d}{ds} f_s(t) = - \log \biggl[ t(f_s'(t))^n + t^2 (f_s'(t))^{n-1}f_s''(t) \biggr]
+\lambda f_s(t)+C_1.
\ee


\section{$U(n)$-Symmetric K\"ahler Metrics and Hessian Geometry}

\label{sec:Hessian}

Let us summarize what we have discussed so far in this paper:
(1) Regularizations of the Kepler systems lead to the symplectic manifolds $K_n$.
(2) $K_n$ have natural complex structures which can used to relate them to the $n$-conifolds
$C_n$.
(3)
The complex structures and symplectic structures on $K_n$ are compatible,
leading to the Kepler metrics.
(4) The Kepler metrics have K\"ahler potential $\frac{1}{\sqrt{2}} |\vec{w}|$,
and this is also the K\"ahler potential for some K\"ahler metrics the deformed conifolds.
(5) There are K\"ahler Ricci-flat metrics on the $n$-folds and the deformed conifolds
with K\"ahler potentials of the form $f(|\vec{w}|^2)$,
in particular $C|\vec{w}|^{2(n-1)/n}$ defines a K\"ahler Ricci-flat metric on $K_n$.
Furthermore, there are several ways to understand $K_3$:
(a) holomorphically equivalent to $\cO_{\bP^1}(-1) \oplus \cO_{\bP^1}(-1) - \bP^1$;
(b) holomorphically equivalent to $\cO_{\bP^1 \times \bP^1}(-1,-1) - \bP^1 \times \bP^1$;
(c) holmorphically equivalent to $C_3^*$.
Similarly, there are several ways to understand $K_2$:
(a) there is a $2:1$ covering by $\bC^2$;
(b) holomorphically equivalent to $\cO_{\bP^1}(-2) - \bP^1$;
(c) holmorphically equivalent to $C_2^*$.
In the preceding Sections we have
discussed the K\"ahler metrics on $K_2$ from
the third point of view,
and considered K\"ahler metrics on the two-dimensional deformed conifold.
In this Section we will take the first two points of view.
This will leads to a discussion of the K\"ahler metrics
on the two-dimensional resolved conifold.
We will take a more general point of view and follow the approach of construction of K\"ahler metrics
with $U(n)$-symmetry for $n \geq 2$
developed in Duan-Zhou \cite{Duan-Zhou1, Duan-Zhou2},
generalizing a construction by LeBrun \cite{LeBrun}.
This will lead us to the application of symplectic coordinates
and Hessian geometry
in the noncompact case.

\subsection{A simple example}

Consider the flat K\"ahler metric on $\bC$:
\be
\omega = \frac{\sqrt{-1}}{2} \pd\dbar u
= \frac{\sqrt{-1}}{2} dz \wedge d \bar{z},
\ee
where $u = |z|^2$.
Let $z = re^{i\theta}$,
then the Riemannian metric can be written in the following form:
\be
g = (dr)^2 + r^2 (d\theta)^2 = \frac{(dy)^2}{4y} + y \cdot (d\theta)^2,
\ee
where a new local coordinate $y$ is introduced instead of $r$:
\be
y = r^2
\ee
One also introduces:
\be
\psi = y (\ln y - 1),
\ee
and a dual coordinate $y^\vee$ by
\be
y^\vee = \frac{\pd \psi}{\pd y} = \ln y.
\ee
It is clear that:
\be
y = e^{y^\vee}.
\ee
The Legendre transform of $\psi$ is given by:
\be
\psi^\vee = yy^\vee - \psi = y \frac{\pd \psi}{\pd y} - \psi.
\ee
Then
\be
\psi^\vee = y = e^{y^\vee}.
\ee
Now the K\"ahler form and the Riemannian metric can be rewritten as:
\be
\omega = \half dy \wedge d\theta,
\ee
\be
g = \frac{1}{4} \frac{\pd^2\psi}{\pd y\pd y} (dy)^2
+  \frac{\pd^2 \psi^\vee}{\pd y^\vee\pd y^\vee} (d\theta)^2.
\ee
In the rest of this paper,
we will generalize this example in various ways
and use them to study the Kepler metrics.

\subsection{Symplectic coordinates}
On $\bC^n - \{0\}$ with linear coordinates $z_1, \dots, z_n$
we consider $U(n)$-symmetric K\"ahler metrics.
Take the K\"ahler potential to be a function of the form $\phi(u)$,
where $u=|z_1|^2 + \cdots + |z_n|^2$.
Then the K\"ahler form is the $(1,1)$-form given by:
\be \label{eqn:Kahler}
\omega = \sqrt{-1} \pd \dbar \phi(u)
= \sqrt{-1}
(\phi'(u) \sum_{i=1}^n dz_i \wedge d\bar{z}_i
+ \phi''(u) \sum_{i=1}^n \bar{z}_i dz_i \wedge \sum_{j=1}^n z_j d\bar{z}_j).
\ee
Let $z_i = r_i e^{\sqrt{-1} \theta_i}$,
one can rewrite $\omega$ as:
\be \label{eqn:Kahler-r-theta}
\omega
= 2 \phi'(u) \sum_{i=1}^n r_i dr_i\wedge d\theta_i
+ 2 \phi''(u) \sum_{i=1}^n r_i dr_i \wedge \sum_{j=1}^n r_j^2 d\theta_j.
\ee

Let $T^n$ be the torus subgroup of $U(n)$ consisting of diagonal unitary matrices.
Restricting the $U(n)$-action to $T^n$,
one gets a Hamiltonian action with moment map:
\be \label{eqn:Moment}
(y_1, \dots, y_n) = (|z_1|^2 \phi'(u), \dots, |z_n|^2\phi'(u)).
\ee
It is easy to see that
\be \label{eqn:Kahler-symplectic}
\omega = \sum_{i=1}^n d y_i \wedge d \theta_i.
\ee
Therefore,
$(y_i, \theta_i)$ are action-angle variables,
or following Abreu \cite{Abreu} and Donaldson
in the compact toric K\"ahler case,
they will be called the {\em symplectic coordinates}.

\subsection{K\"ahler potential in symplectic coordinates}
\label{sec:Kahler-potential}

Let
\be
y:= y_1 + \cdots + y_n.
\ee
It is clear that:
\be \label{eqn:y}
y = u \cdot \phi'(u).
\ee
This was introduced by LeBrun \cite{LeBrun}.
See also \cite{Duan-Zhou1, Duan-Zhou2}.
Integrating the above differential equation,
we get the K\"ahler potential $\phi$ as  a function of $u$:
\be
\phi = \int \frac{y(u)}{u} du.
\ee

Now under suitable conditions $u$ is a function of $y$,
and so is $\phi$.
We have:
\be
\frac{d \phi}{dy} = \frac{y}{u} \frac{du}{dy} = y \frac{d \log u}{dy},
\ee
and so after integration:
\be \label{eqn:phi-in-y}
\phi = \int y d \log u = y \log u - \int \log(u) dy.
\ee
This formula expresses the K\"ahler potential in terms of the symplectic coordinates.

\subsection{Riemannian metric in symplectic coordinates}
Now let us express the Riemannian metric also in terms of the symplectic coordinates.
The Hermitian metric is given by
\be
h = 2 \phi'(u) \sum_{i=1}^n dz_i \otimes d\bar{z}_i
+ 2 \phi''(u) \sum_{i=1}^n \bar{z}_i dz_i \otimes \sum_{j=1}^n z_j d\bar{z}_j.
\ee
and so the
Riemannian metric is
\be
\begin{split}
g =&  2\phi'(u) \sum_{i=1}^n ((dr_i)^2 + r_i^2 (d\theta_i)^2)
+ 2 \phi''(u) \sum_{i,j=1}^n r_i r_j (dr_i dr_j + r_ir_jd\theta_id\theta_j) \\
= & 2\phi'(u) \sum_{i=1}^n ((dr_i)^2 + r_i^2 (d\theta_i)^2)
+ 2 \phi''(u) \biggl\{\biggl(\sum_{i=1}^n r_i dr_i\biggr)^2
+ \biggl(\sum_{i=1}^n r_i^2 d\theta_i\biggr)^2 \biggr\}.
\end{split}
\ee

\begin{theorem} \label{thm:Metric}
In symplectic coordinates the Riemannian metric $g$ takes the following form:
\be
g = \sum_{i,j=1}^n (\frac{1}{2} G_{ij} dy_i dy_j +2 G^{ij}d\theta_id\theta_j),
\ee
where the coefficients $G_{ij}$ and $G^{ij}$ are given by:
\bea
G_{ij} & = & \frac{\delta_{ij}}{y_i} - \frac{1}{y} + \frac{1}{u} \frac{du}{dy}, \\
G^{ij} & = & y_i\delta_{ij}
+ \biggl(\frac{u}{y^2\frac{du}{dy}} - \frac{1}{y}\biggr) y_iy_j.
\eea
Furthermore, the matrices $(G_{ij})$ and $(G^{ij})$ are inverse to each other.
\end{theorem}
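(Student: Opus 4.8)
The plan is to prove the theorem by a direct change of variables: one starts from the expression for $g$ in the coordinates $(r_i,\theta_i)$ displayed immediately before the statement and rewrites it in the symplectic coordinates $(y_i,\theta_i)$ using \eqref{eqn:Moment} and \eqref{eqn:y}. The computation separates into three essentially independent pieces — the angular block (the $d\theta_i\,d\theta_j$ terms), the radial block (the $dy_i\,dy_j$ terms), and the verification that $(G_{ij})$ and $(G^{ij})$ are mutually inverse. Throughout I will abbreviate $\phi'=\phi'(u)$, $\phi''=\phi''(u)$, set $D:=\phi'+u\phi''$, and use that by \eqref{eqn:y} one has $\frac{dy}{du}=D$, hence $\frac{du}{dy}=1/D$; I will also use that $u$ depends only on $y=y_1+\cdots+y_n$, so that $du=\frac{du}{dy}\,dy$.

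For the angular block I would set $s_i=r_i^2$, so that $y_i=s_i\phi'$ and $u=\sum_k s_k$. Then $2\phi'\sum_i r_i^2(d\theta_i)^2=2\sum_i y_i(d\theta_i)^2$ and $2\phi''\bigl(\sum_i r_i^2\,d\theta_i\bigr)^2=\frac{2\phi''}{(\phi')^2}\bigl(\sum_i y_i\,d\theta_i\bigr)^2$, so it only remains to check the scalar identity $\frac{\phi''}{(\phi')^2}=\frac{u}{y^2\frac{du}{dy}}-\frac1y$, which is immediate from $y=u\phi'$ and $\frac{dy}{du}=D$. This gives exactly $2\sum_{i,j}G^{ij}\,d\theta_i\,d\theta_j$ with the stated $G^{ij}$.

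For the radial block, from $y_i=s_i\phi'$ and $du=\frac{du}{dy}\,dy$ one obtains $ds_i=\frac1{\phi'}\bigl(dy_i-s_i\phi''\frac{du}{dy}\,dy\bigr)$. Substituting $(dr_i)^2=\frac{(ds_i)^2}{4s_i}$ and $\sum_i r_i\,dr_i=\half\,du$ into $2\phi'\sum_i(dr_i)^2+2\phi''\bigl(\sum_i r_i\,dr_i\bigr)^2$, expanding the squares, and using $\sum_i s_i=u$, $\sum_i dy_i=dy$ and $\frac1{s_i}=\frac{\phi'}{y_i}$, the diagonal part collapses to $\half\sum_i\frac{(dy_i)^2}{y_i}$ while the remaining terms are all multiples of $(dy)^2$; their combined coefficient, after one more use of $\frac{dy}{du}=D$, telescopes to $\half\bigl(-\frac1y+\frac1u\frac{du}{dy}\bigr)$. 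This is precisely the off-diagonal entry of $\half G_{ij}$, giving the stated $G_{ij}$.

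Finally, for the inverse relation I would write $G_{ij}=\frac{\delta_{ij}}{y_i}+a$ and $G^{ij}=y_i\delta_{ij}+b\,y_iy_j$ with scalars $a=-\frac1y+\frac1u\frac{du}{dy}$ and $b=\frac{u}{y^2\frac{du}{dy}}-\frac1y$, and compute $\sum_k G_{ik}G^{kj}=\delta_{ij}+(a+b+ab\,y)\,y_j$; thus the inverse property is equivalent to $a+b+ab\,y=0$, i.e. $b=-a/(1+ay)$. Using the formulas above one finds $a=-\phi''/(\phi'D)$, $b=\phi''/(\phi')^2$ and $1+ay=\phi'/D$, and the identity follows at once. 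I expect the only delicate step to be the bookkeeping in the radial block — collecting the several $(dy)^2$ contributions and recognizing that they telescope through $\frac{dy}{du}=\phi'+u\phi''$; the angular block and the inverse check are short rank-one-update computations.
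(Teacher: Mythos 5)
Your proposal is correct and follows essentially the same route as the paper: the paper also substitutes $r_i=\sqrt{y_i/\phi'(u)}$ to write the radial part as $\frac12\sum_i y_i\bigl(\frac{dy_i}{y_i}-\frac{\phi''}{\phi'}du\bigr)^2+\frac12\phi''(du)^2$ (which is exactly your $s_i=r_i^2$ computation), converts $\frac{\phi''}{\phi'}du$ and $\frac{\phi''}{(\phi')^2}$ into $y$- and $u$-expressions via logarithmic differentiation of $y=u\phi'(u)$, and verifies the inverse relation by the same rank-one-update calculation $\sum_j G_{ij}G^{jk}=\delta_{ik}+(a+b+aby)y_k$. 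The only cosmetic difference is your explicit use of $D=\phi'+u\phi''=\frac{dy}{du}$ as the organizing quantity, which streamlines the telescoping of the $(dy)^2$ coefficient.
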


\begin{proof}
Note we have
\be
r_i = \sqrt{\frac{y_i}{\phi'(u)}}.
\ee
So the Riemannian metric can be written as
\ben
g & = & \frac{1}{2} \sum_{i=1}^n y_i
\biggl(\frac{dy_i}{y_i} - \frac{\phi''(u)}{\phi'(u)} du \biggr)^2
+ \frac{1}{2} \phi''(u) (d u )^2 \\
& + & 2\sum_{i=1}^n y_i (d\theta_i)^2
+ 2 \frac{\phi''(u)}{\phi'(u)^2} \biggl(\sum_{i=1}^n y_i d\theta_i\biggr)^2.
\een
We first rewrite $\frac{\phi''(u)}{\phi'(u)}du$ and $\frac{\phi''(u)}{\phi'(u)^2}$ as follows.
Taking logarithmic differential of \eqref{eqn:y} one can get:
\be
\frac{\phi''(u)}{\phi'(u)}du = \biggl(\frac{1}{y} - \frac{1}{u} \frac{du}{dy}\biggr)dy.
\ee
We also have:
\ben
&& \phi''(u) = \frac{1}{u} \cdot u\phi''(u) = \frac{1}{u} ((u\phi'(u))'-\phi'(u))
= \frac{1}{u} (\frac{dy}{du} - \frac{y}{u})
= \frac{1}{u \frac{du}{dy}} - \frac{y}{u^2},
\een
\ben
\frac{\phi''(u)}{\phi'(u)^2}
= \frac{u \cdot (u\phi''(u))}{(u\phi'(u))^2}
= \frac{u (\frac{dy}{du} - \frac{y}{u} )}{y^2} = \frac{u}{y^2\frac{du}{dy}} - \frac{1}{y}.
\een

\ben
g & = & \frac{1}{2} \sum_{i=1}^n y_i
\biggl(\frac{dy_i}{y_i} - \biggl(\frac{1}{y} - \frac{1}{u} \frac{du}{dy}\biggr)dy \biggr)^2
+ \frac{1}{2} \biggl(\frac{1}{u \frac{du}{dy}} - \frac{y}{u^2}\biggr)
\biggl( \frac{du}{dy}dy \biggr)^2 \\
& + & 2\sum_{i=1}^n y_i (d\theta_i)^2
+ 2 \biggl( \frac{u}{y^2\frac{du}{dy}} - \frac{1}{y} \biggr)
\biggl(\sum_{i=1}^n y_i d\theta_i\biggr)^2 \\
& = & \frac{1}{2} \sum_{i=1}^n \frac{(dy_i)^2}{y_i}
- \sum_{i=1}^n  dy_i \biggl(\frac{1}{y} - \frac{1}{u} \frac{du}{dy}\biggr)dy
+ \frac{1}{2} \sum_{i=1}^n y_i
\biggl(\frac{1}{y} - \frac{1}{u} \frac{du}{dy}\biggr)^2(dy)^2 \\
& + &  \frac{1}{2} \biggl(\frac{u\frac{du}{dy}}{y^2} - \frac{(\frac{du}{dy})^2}{y}\biggr)
(dy )^2 \\
& + & 2\sum_{i=1}^n y_i (d\theta_i)^2
+ 2 \biggl( \frac{u}{y^2\frac{du}{dy}} - \frac{1}{y} \biggr)
\biggl(\sum_{i=1}^n y_i d\theta_i\biggr)^2.
\een
Using the fact that $y=\sum_{i=1}^n y_i$,
one can make a further simplification:
\ben
g & = & \frac{1}{2} \sum_{i=1}^n \frac{(dy_i)^2}{y_i}
- \half \biggl(\frac{1}{y} - \frac{1}{u} \frac{du}{dy}\biggr) (dy)^2 \\
& + & 2\sum_{i=1}^n y_i (d\theta_i)^2
+ 2 \biggl( \frac{u}{y^2\frac{du}{dy}} - \frac{1}{y} \biggr)
\biggl(\sum_{i=1}^n y_i d\theta_i\biggr)^2.
\een
This proves the first statement.
The second statement can be proved as follows:
\ben
\sum_{j=1}^n G_{ij} G^{jk}
& = & \sum_{j=1}^n \biggl( \frac{\delta_{ij}}{y_i} - \frac{1}{y} + \frac{1}{u} \frac{du}{dy}\biggr)
\biggl( y_j\delta_{jk}
+ \biggl(\frac{u}{y^2\frac{du}{dy}} - \frac{1}{y}\biggr) y_jy_k
\biggr)  \\
& = & \delta_{ik} + \biggl(\frac{u}{y^2\frac{du}{dy}} - \frac{1}{y}\biggr) y_k
+ \biggl(- \frac{1}{y} + \frac{1}{u} \frac{du}{dy} \biggr) y_k \\
& + & \biggl(- \frac{1}{y} + \frac{1}{u} \frac{du}{dy}\biggr)
\cdot \biggl(\frac{u}{y^2\frac{du}{dy}} - \frac{1}{y}\biggr) \sum_{j=1}^n y_jy_k \\
& = & \delta_{ik}.
\een
\end{proof}

\subsection{The complex potential in symplectic coordinates}

It is clear that
\be
G_{ij} = \frac{\pd^2\psi}{\pd y_i\pd y_j}
\ee
for the function $\psi$ defined by:
\be \label{def:psi}
\psi = \sum_{i=1}^n y_i (\ln y_i - 1)
- y (\ln y -1) +  \int \log u(y) dy.
\ee
The function $\psi$ is called the {\em complex potential}
because
\be
\half \sum_{j=1}^n \frac{\pd^2\psi}{\pd y_i \pd y_j} dy_j
+ \sqrt{-1} d\theta_i = \frac{dz_i}{z_i}
\ee
is of type $(1, 0)$.
Indeed, we have
\ben
\frac{dz_i}{z_i}
& = &\frac{dr_i}{r_i} + \sqrt{-1} d\theta_i
= \half \biggl( \frac{dy_i}{y_i} - \frac{\phi''(u)}{\phi'(u)} du \biggr)
+ \sqrt{-1} d\theta_i \\
& = & \half \biggl( \frac{dy_i}{y_i} - \frac{dy}{y} + \frac{du}{u} \biggr)
+ \sqrt{-1} d\theta_i \\
& = & \half \sum_{j=1}^n \frac{\pd^2\psi}{\pd y_i \pd y_j} dy_j
+ \sqrt{-1} d\theta_i.
\een

\subsection{Legendre transform}
Introduce the dual local coordinates $y_i^\vee$ by
\be
y_i^\vee = \frac{\pd \psi}{\pd y_i},
\ee
and introduce a dual potential function:
\be
\psi^\vee = \sum_{i=1}^n y_i \frac{\pd \psi}{\pd y_i} - \psi
\ee
By \eqref{def:psi} we get:
\be
y_i^\vee
= \log y_i - \log y
+ \log u(y) = 2 \log r_i,
\ee
and
\be
\psi^\vee = y \ln u(y) -  \int \ln u(y) dy = \phi.
\ee

\begin{theorem} \label{thm:Metric2}
The Riemannian metric $g$ satisfies:
\be
g = \frac{1}{2} \sum_{i,j=1}^n \frac{\pd^2\psi}{\pd y_i\pd y_j} dy_i dy_j
+ 2 \sum_{i,j=1}^n \frac{\pd^2\psi^\vee}{\pd y_i^\vee\pd y_j^\vee} d\theta_i d\theta_j.
\ee
In particular,
\be \label{eqn:G-upper}
G^{ij} = \frac{\pd^2\psi^\vee}{\pd y_i^\vee \pd y_j^\vee}
= \frac{\pd^2\phi}{\pd y_i^\vee \pd y_j^\vee}.
\ee
\end{theorem}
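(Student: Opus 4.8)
The plan is to bootstrap off Theorem \ref{thm:Metric}, which already writes $g$ in symplectic coordinates as $\sum_{i,j}(\frac12 G_{ij}\,dy_idy_j + 2G^{ij}\,d\theta_id\theta_j)$ with $(G_{ij})$ and $(G^{ij})$ mutually inverse. The $dy_idy_j$ part is then immediate: in the subsection on the complex potential we verified $G_{ij} = \pd^2\psi/\pd y_i\pd y_j$, so the first summand of $g$ is already $\frac12\sum_{i,j}\frac{\pd^2\psi}{\pd y_i\pd y_j}dy_idy_j$ with nothing further to prove. Thus everything reduces to identifying the $d\theta_id\theta_j$-coefficient $G^{ij}$ with $\pd^2\psi^\vee/\pd y_i^\vee\pd y_j^\vee$.

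For that I would invoke the standard involutivity of the Legendre transform. Since $\psi^\vee = \sum_i y_i\,\pd\psi/\pd y_i - \psi$ and $y_i^\vee = \pd\psi/\pd y_i$, one has $\pd\psi^\vee/\pd y_i^\vee = y_i$, and differentiating again,
\be
\frac{\pd^2\psi^\vee}{\pd y_i^\vee\pd y_j^\vee} = \frac{\pd y_i}{\pd y_j^\vee}.
\ee
On the other hand $\pd y_i^\vee/\pd y_j = \pd^2\psi/\pd y_i\pd y_j = G_{ij}$, so the Jacobian matrix $(\pd y_i/\pd y_j^\vee)$, being the inverse of $(\pd y_i^\vee/\pd y_j) = (G_{ij})$, equals $(G^{ij})$ by the last assertion of Theorem \ref{thm:Metric}. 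Hence $\pd^2\psi^\vee/\pd y_i^\vee\pd y_j^\vee = G^{ij}$, and substituting this together with $G_{ij} = \pd^2\psi/\pd y_i\pd y_j$ into the formula of Theorem \ref{thm:Metric} gives the displayed expression for $g$. The "in particular" clause then follows at once from the computation already carried out in the Legendre-transform subsection that $\psi^\vee = \phi$, so $\pd^2\psi^\vee/\pd y_i^\vee\pd y_j^\vee = \pd^2\phi/\pd y_i^\vee\pd y_j^\vee = G^{ij}$.

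I do not anticipate a real obstacle; the theorem is essentially a repackaging of Theorem \ref{thm:Metric} once the Legendre-duality identity is in place, and the only thing needing care is the bookkeeping between the two coordinate systems $\{y_i\}$ and $\{y_i^\vee\}$ — in particular making sure that $y = \sum_i y_i$ and $u = u(y)$ enter only through $\psi$, so that the abstract statement "the Hessian of a Legendre transform is the inverse Hessian" applies verbatim here. A more computational alternative would be to differentiate $\psi^\vee = \phi = \int(y/u)\,du$ directly, using $y_i^\vee = 2\log r_i$ and $y_i = r_i^2\phi'(u)$ to re-express everything in the $y_i^\vee$ and checking the result against the explicit $G^{ij}$ of Theorem \ref{thm:Metric}; but the duality argument is shorter and I would present that one.
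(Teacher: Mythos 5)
Your proposal is correct and follows essentially the same route as the paper: both prove the Legendre-duality identity $\pd\psi^\vee/\pd y_j^\vee = y_j$, differentiate it to identify $\bigl(\pd^2\psi^\vee/\pd y_i^\vee\pd y_j^\vee\bigr)$ as the inverse of the Hessian $\bigl(\pd^2\psi/\pd y_i\pd y_j\bigr)=(G_{ij})$, and then appeal to the mutual inverseness of $(G_{ij})$ and $(G^{ij})$ from Theorem \ref{thm:Metric}. The only difference is that the paper spells out the chain-rule verification of $\pd\psi^\vee/\pd y_j^\vee = y_j$ where you cite it as standard, which is a matter of detail rather than of method.
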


\begin{proof}
We start with:
\ben
\psi = \sum_{j=1}^n y_j y_j^\vee - \psi^\vee
\een
and take $\frac{\pd}{\pd y_i}$ on both sides to get:
\ben
\frac{\pd \psi}{\pd y_i}
&  =  & y_i^\vee + \sum_{j=1}^n y_j \frac{\pd y_j^\vee}{\pd y_i}
- \sum_{j=1}^n \frac{\pd \psi^\vee}{\pd y_j^\vee} \frac{\pd y_j^\vee}{\pd y_i} \\
& = & \frac{\pd \psi}{\pd y_i} + \sum_{j=1}^n \biggl( y_j - \frac{\pd \psi^\vee}{\pd y_j^\vee}
\biggr) \cdot \frac{\pd^2 \psi}{\pd y_i\pd y_j}.
\een
Hence
\ben
\sum_{j=1}^n \biggl( y_j - \frac{\pd \psi^\vee}{\pd y_j^\vee}
\biggr) \cdot \frac{\pd^2 \psi}{\pd y_i\pd y_j} = 0.
\een
Since the matrix $(\frac{\pd^2\psi}{\pd y_i\pd y_j})_{i,j=1, \dots, n}$ is invertible,
we have:
\be
y_j = \frac{\pd \psi^\vee}{\pd y_j^\vee}.
\ee
Take $\frac{\pd}{\pd y_i^\vee}$ on both sides:
\ben
\frac{\pd y_j}{\pd y_i^\vee} = \frac{\pd^2 \psi^\vee}{\pd y_i^\vee\pd y_j^\vee}.
\een
In other words,
the matrix $(\frac{\pd^2\psi^\vee}{\pd y_i^\vee\pd y_j^\vee})_{i,j=1, \dots, n}$
is the matrix $(\frac{\pd y_j}{\pd y_i^\vee})_{i,j=1, \dots, n}$,
hence it is the inverse matrix of $(\frac{\pd y_i^\vee}{\pd y_j})_{i,j=1, \dots, n}
= (\frac{\pd^2\psi}{\pd y_i\pd y_j})_{i,j=1, \dots, n}$,
\end{proof}

\subsection{SYZ mirror construction}

Inspired by the Strominger-Yau-Zaslow \cite{SYZ} construction,
introduce:
\be
 g^\vee = \sum_{i,j=1}^n \biggl(
\frac{\pd^2 \psi^\vee}{\pd y_i^\vee\pd y_j^\vee}dy_i^\vee dy_j^\vee
+  \frac{\pd^2\psi}{\pd y_i\pd y_j} d\theta_i^\vee d\theta_j^\vee \biggr).
\ee
Since we have:
\be
\sum_{i,j=1}^n \frac{\pd^2\psi}{\pd y_i\pd y_j} dy_idy_j
= \sum_{i,j=1}^n \frac{\pd^2 \psi^\vee}{\pd y_i^\vee\pd y_j^\vee}dy_i^\vee dy_j^\vee,
\ee
therefore,
\be
 g^\vee = \sum_{i,j=1}^n \biggl(
\frac{\pd^2 \psi}{\pd y_i\pd y_j}dy_i dy_j
+  \frac{\pd^2\psi}{\pd y_i\pd y_j} d\theta_i^\vee d\theta_j^\vee \biggr).
\ee
This has K\"ahler potential $\psi$ and complex potential $\psi^\vee$.
And one can check that
\be
\omega^\vee = \sum_{i=1}^n d y_i^\vee \wedge d \theta^\vee_i.
\ee

\subsection{Ricci form in symplectic coordinates}

\label{sec:Ricci}

It is not hard to see that
\be \label{eqn:Volume}
\frac{\omega^n}{n!}
= \sqrt{-1}^n y'(u) \big(\frac{y(u)}{u}\big)^{n-1}
\prod_{i=1}^n dz_i \wedge d\bar{z}_i.
\ee
Indeed,
because of the $U(n)$-symmetry
one can restrict to the $z_1$-axis,
where the K\"ahler form can be written as:
\be \label{eqn:KahlerForm}
\omega = \sqrt{-1} (y'(u) dz_1 \wedge d\bar{z}_1
+ \frac{y(u)}{u} \sum_{i=2}^n dz_i \wedge d\bar{z}_i ).
\ee
Now if $\omega$ is nondegenerate,
by (\ref{eqn:Volume}),
the Ricci form is given by
\be \label{eqn:Ricci}
\rho = -\sqrt{-1} \pd \dbar \Phi(u),
\ee
where
\be \label{eqn:Phi}
\Phi = \log \big[ y'(u) \big(\frac{y(u)}{u}\big)^{n-1} \big].
\ee

Similar to \eqref{eqn:Kahler-r-theta},
one has:
\be \label{eqn:Ricci-r-theta}
\rho
= -2 \Phi'(u) \sum_{i=1}^n r_i dr_i\wedge d\theta_i
- 2 \Phi''(u) \sum_{i=1}^n r_i dr_i \wedge \sum_{j=1}^n r_j^2 d\theta_j.
\ee
Comparing \eqref{eqn:Ricci-r-theta} with \eqref{eqn:Kahler-r-theta}
it is clear that $\rho = \lambda \omega$
for some constant $\lambda$ if and only if
\be
\Phi'(u) = -\lambda \phi'(u).
\ee
This equation can be solved by quadrature \cite{Duan-Zhou1}:
\be \label{eqn:Integral}
\int \frac{y^{n-1} dy}{\frac{-\lambda}{n+1} y^{n+1} + y^n + C_1} = \ln u.
\ee

\begin{prop}
The Ricci form $\rho$ can be expressed in symplectic coordinates by the following formula:
\be \label{eqn:Ricci-Symplectic}
\rho = - \sum_{i=1}^n d \biggl( \frac{u \Phi'(u)}{y} \cdot y_i \biggr) \wedge d \theta_i.
\ee
\end{prop}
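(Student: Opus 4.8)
The plan is to exploit that the Ricci form is, up to a sign and a change of potential, of the very same shape as the Kähler form. By \eqref{eqn:Ricci}--\eqref{eqn:Phi} we have $\rho = \sqrt{-1}\,\pd \dbar\bigl(-\Phi(u)\bigr)$, which is obtained from the expression \eqref{eqn:Kahler} for $\omega = \sqrt{-1}\,\pd \dbar\phi(u)$ simply by replacing $\phi$ with $-\Phi$. So the strategy is: (i) run the passage \eqref{eqn:Kahler}$\to$\eqref{eqn:Kahler-r-theta}$\to$\eqref{eqn:Kahler-symplectic} with $-\Phi$ in place of $\phi$; (ii) re-express the resulting ``moment-map-like'' coefficients $-|z_i|^2\Phi'(u)$ in terms of the actual symplectic coordinates $y_i$.

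For step (i), I would first note that the identity
\[
\sqrt{-1}\,\pd \dbar \chi(u) = \sum_{i=1}^n d\bigl(|z_i|^2\chi'(u)\bigr)\wedge d\theta_i
\]
holds for \emph{any} smooth function $\chi$ of $u=|z_1|^2+\cdots+|z_n|^2$, with $z_i = r_i e^{\sqrt{-1}\theta_i}$; this is a purely algebraic pointwise computation (precisely the one turning \eqref{eqn:Kahler-r-theta} into \eqref{eqn:Kahler-symplectic}), and it needs no positivity or nondegeneracy of $\chi$. Taking $\chi = -\Phi$ yields
\[
\rho = -\sum_{i=1}^n d\bigl(|z_i|^2\Phi'(u)\bigr)\wedge d\theta_i .
\]
Alternatively one can reach the same point starting from \eqref{eqn:Ricci-r-theta} and combining its two terms in the same way the two terms of \eqref{eqn:Kahler-r-theta} were combined, but that merely repeats work already done.

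For step (ii), the moment map formula \eqref{eqn:Moment} gives $y_i = |z_i|^2\phi'(u)$, while \eqref{eqn:y} gives $y = u\,\phi'(u)$, so $\phi'(u) = y/u$ and hence $|z_i|^2 = u\,y_i/y$. Substituting into the last display,
\[
\rho = -\sum_{i=1}^n d\Bigl(\frac{u\,\Phi'(u)}{y}\,y_i\Bigr)\wedge d\theta_i ,
\]
which is \eqref{eqn:Ricci-Symplectic}; here $u$ is read as the function $u(y)$ implicitly defined by \eqref{eqn:y}, so the coefficient $u\Phi'(u)/y$ is a genuine function of the symplectic coordinates. I do not anticipate a real obstacle; the one point requiring a little care is that $-\Phi$ is generally not itself a Kähler potential, so step (i) should be phrased as an identity of $2$-forms (which is valid regardless) rather than as a statement about action-angle variables for a new metric.
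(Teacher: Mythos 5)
Your proposal is correct and is essentially the paper's argument in a cleaner packaging: the paper likewise starts from the polar-coordinate expression \eqref{eqn:Ricci-r-theta}, substitutes $r_i^2=y_i/\phi'(u)=uy_i/y$, and recombines terms to reach \eqref{eqn:Ricci-Symplectic}. Your observation that the recombination is the universal identity $\sqrt{-1}\,\pd\dbar\chi(u)=\sum_i d\bigl(|z_i|^2\chi'(u)\bigr)\wedge d\theta_i$, valid for any function $\chi$ of $u$ and hence applicable to $\chi=-\Phi$ without any nondegeneracy hypothesis, merely shortcuts the term-by-term simplification the paper carries out explicitly.
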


\begin{proof}
Similar to the computations for the Riemannian metric we have:
\ben
\rho
& = & -\Phi'(u) \sum_{i=1}^n \frac{y_i}{\phi'(u)}
\biggl( \frac{dy_i}{y_i} - \frac{\phi''(u)}{\phi'(u)} du
\biggr)\wedge d\theta_i \\
& - & \Phi''(u) \frac{y}{\phi'(u)} \biggl( \frac{dy}{y} - \frac{\phi''(u)}{\phi'(u)} du
\biggr)  \wedge \sum_{j=1}^n \frac{y_j}{\phi'(u)} d\theta_j \\
& = & -\Phi'(u) \sum_{i=1}^n \frac{uy_i}{y}
\biggl( \frac{dy_i}{y_i} - \biggl(\frac{1}{y} - \frac{1}{u} \frac{du}{dy}\biggr)dy
\biggr)\wedge d\theta_i \\
& - & \Phi''(u) u \biggl( \frac{dy}{y} - \biggl(\frac{1}{y} - \frac{1}{u} \frac{du}{dy}\biggr)dy
\biggr)  \wedge \sum_{j=1}^n \frac{u y_j}{y} d\theta_j \\
& =& - \frac{u\Phi'(u)}{y} \sum_{i=1}^n dy_i \wedge d \theta_i
+ \biggl( \frac{u\Phi'(u)}{y^2}- \frac{\Phi'(u) + u \Phi''(u)}{y} \frac{du}{dy}\biggr) dy
\wedge \sum_{i=1}^n d\theta_i \\
& = & - \sum_{i=1}^n d \biggl( \frac{u \Phi'(u)}{y} \cdot y_i \biggr) \wedge d \theta_i.
\een
\end{proof}

Note we have:
\ben
\frac{u \Phi'(u)}{y}
& = & \frac{u}{y} \frac{\frac{d\Phi(u)}{dy}}{\frac{du}{dy}}
= \frac{u}{y \frac{du}{dy}} \frac{d}{dy}
\log \big[ \frac{1}{\frac{du}{dy}} \big(\frac{y(u)}{u}\big)^{n-1} \big] \\
& = & - \frac{u\frac{du^2}{dy^2}}{y \big(\frac{du}{dy}\big)^2}
+ \frac{(n-1)u}{y^2 \frac{du}{dy}} - \frac{n-1}{y}.
\een

\subsection{A special K\"ahler-Ricci flow}

Consider K\"ahler-Ricci flow of the form
\be \label{eqn:KRF}
\frac{d}{dt} \omega = \rho + \lambda \cdot \omega,
\ee
where $\lambda$ is some constant.
Since in our case we have
\be
\omega = \sqrt{-1} \pd \dbar \phi(u),
\ee
\be
\rho = -\sqrt{-1} \pd \dbar \Phi(u),
\ee
one can consider the following flow in the space of K\"ahler potentials:
\emph{}\be
\frac{d}{dt} \phi = - (\Phi+\lambda\phi+C_1).
\ee
By \eqref{eqn:Phi} this is just:
\be
\frac{d}{dt} \phi
= - (\log \big[ (\phi'+u\phi'') \big(\phi'\big)^{n-1} \big] + \lambda \phi + C_1).
\ee

\subsection{Scalar curvature}

\label{sec:Scalar}

The scalar curvature $R$ of $\omega$ is given by:
\be
\rho \wedge \omega^{n-1} = R \omega^n.
\ee
By \eqref{eqn:Kahler-symplectic} and \eqref{eqn:Ricci-Symplectic},
one gets:
\be
R = - \frac{1}{n} \sum_{i=1}^n \frac{\pd}{\pd y_i} \biggl( \frac{u \Phi'(u)}{y} \cdot y_i \biggr).
\ee
From this one can derive other formula for the scalar curvature
in the literature.
First of all,
the right-hand side of this formula can be rewritten as follows:
\be
n R = - y \frac{d}{d y}  \biggl(\frac{u \Phi'(u)}{y} \biggr)
 - n \frac{u \Phi'(u)}{y}.
\ee
This was derived in \cite{Duan-Zhou2} in the following way.
Along the $z_1$-axis we have
\be \label{eqn:Ricci2}
\rho =
-\sqrt{-1} ((u\Phi'(u))' dz_1 \wedge d\bar{z}_1 + \Phi'(u)
\sum_{i=2}^n dz_i \wedge d\bar{z}_i).
\ee
By (\ref{eqn:KahlerForm}), (\ref{eqn:Volume}) and (\ref{eqn:Ricci2}),
we get
\ben (u \Phi')'
\big(\frac{y}{u}\big)^{n-1} + (n-1)\Phi'
\big(\frac{y}{u}\big)^{n-2}y' =-n R y' \big(\frac{y}{u}\big)^{n-1}.
\een
This equation can be integrated to get the following:

\begin{prop} (\cite{Duan-Zhou2})
A pseudo-K\"ahler form as in (\ref{eqn:Kahler}) has a constant
scalar curvature $R$ if and only if
\be \label{eqn:CScalar}
\int \frac{y^{n-1} dy}{- \frac{R}{n+1} y^{n+1} + y^n + C_1y + C_2}
= \ln u.
\ee
\end{prop}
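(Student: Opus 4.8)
The plan is to integrate twice the ordinary differential equation displayed just before the statement, treating $y$, $\phi$ and $\Phi$ throughout as functions of $u$ and writing $'=d/du$. For the \emph{first} integration I would multiply that equation by $u^{n-1}$; after grouping the extra factor of $u$ with $\Phi'$, its left-hand side becomes the exact derivative $\frac{d}{du}\bigl((u\Phi')\,y^{n-1}\bigr)$ and its right-hand side is $-R\,\frac{d}{du}(y^{n})$, so integrating once (with $R$ the given constant) yields
\be
y^{n-1}\,u\Phi'(u)=C_{1}-R\,y^{n},\qquad\text{equivalently}\qquad u\Phi'(u)=C_{1}\,y^{1-n}-R\,y .
\ee
The same first integral also drops straight out of the symplectic-coordinate identity $nR=-n\,\frac{u\Phi'}{y}-y\,\frac{d}{dy}\bigl(\frac{u\Phi'}{y}\bigr)$ of \S\ref{sec:Scalar}, since multiplying it by $y^{n-1}$ exhibits the right-hand side as $-\frac{d}{dy}\bigl(y^{n-1}u\Phi'\bigr)$.

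For the \emph{second} integration I would insert the explicit formula $\Phi=\log\bigl[y'(u)(y/u)^{n-1}\bigr]$ of \eqref{eqn:Phi} and change the independent variable to $t=\ln u$, denoting $d/dt=u\,d/du$ by a dot. One computes $\frac{u\,y''}{y'}=\frac{\ddot y}{\dot y}-1$ and $\frac{u\,y'}{y}=\frac{\dot y}{y}$, hence $u\Phi'(u)=\frac{\ddot y}{\dot y}+(n-1)\frac{\dot y}{y}-n$, so the first integral becomes the second-order ODE
\be
\frac{\ddot y}{\dot y}+(n-1)\frac{\dot y}{y}-n=C_{1}\,y^{1-n}-R\,y .
\ee
Multiplying through by $\dot y\,y^{n-1}$ turns every term into an exact $t$-derivative: the left side is $\frac{d}{dt}\bigl(\dot y\,y^{n-1}-y^{n}\bigr)$ and the right side is $\frac{d}{dt}\bigl(C_{1}y-\frac{R}{n+1}y^{n+1}\bigr)$, so a second integration gives $\dot y\,y^{n-1}=-\frac{R}{n+1}y^{n+1}+y^{n}+C_{1}y+C_{2}$. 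Since $\dot y=dy/d\ln u$ this separates as $\dfrac{y^{n-1}\,dy}{-\frac{R}{n+1}y^{n+1}+y^{n}+C_{1}y+C_{2}}=d\ln u$, and integrating produces the stated identity. The converse follows by differentiating the integral identity twice and retracing these two steps in reverse, which recovers the scalar-curvature equation with a constant right-hand side; each manipulation is an equivalence on the locus where $y,u>0$, $y'\neq0$ and $\omega$ is nondegenerate.

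Almost everything here is bookkeeping, once one recognizes the two ``total derivative'' patterns; the only step that needs the right idea is the passage to $t=\ln u$, which is precisely what converts the unwieldy second-order equation for $y(u)$ into a chain of exact quadratures, and it is also the variable in which the Ricci-flat special case \eqref{eqn:Integral} is already written, so it is natural to try it. I would take care to transport the first integration constant $C_{1}$ correctly into the second stage (it is what produces the $C_{1}y$ term in the denominator, while $C_{2}$ is the constant of the final integration), and to note that the $R$ at issue is the one normalized by $\rho\wedge\omega^{n-1}=R\,\omega^{n}$, so that the statement characterizes exactly when that quantity is constant.
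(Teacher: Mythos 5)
Your proposal is correct and follows essentially the same route as the paper: it starts from the ODE $(u\Phi')'(y/u)^{n-1}+(n-1)\Phi'(y/u)^{n-2}y'=-nRy'(y/u)^{n-1}$ derived just before the statement and integrates it twice, which is exactly the step the paper summarizes as ``this equation can be integrated'' (deferring details to Duan--Zhou). Your two exact-derivative reductions (first in $u$, then in $t=\ln u$ after substituting $u\Phi'=\ddot y/\dot y+(n-1)\dot y/y-n$) check out and correctly produce the constants $C_1$ and $C_2$ in the stated denominator.
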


By the definition of $\Phi$,
\be
\Phi = \log \big[ y'(u) \big(\frac{y(u)}{u}\big)^{n-1} \big].
\ee
we get
\be
\Phi'(u) = \frac{y''(u)}{y'(u)} + (n-1) \frac{y'(u)}{y(u)} - \frac{n-1}{u}.
\ee
The right-hand side can be rewritten as a function in $y$:
\be
\Phi'(u) = - \frac{\frac{d^2u}{dy^2}}{\big(\frac{du}{dy}\bigr)^2}
+  \frac{n-1}{y \frac{du}{dy}} - \frac{n-1}{u}.
\ee
It follows that:
\be \label{eqn:R}
\begin{split}
R = &  - \frac{d u}{d y} \Phi'(u)
 - u \frac{d}{d y}  \Phi'(u)
 - (n-1) \frac{u \Phi'(u)}{y} \\
= & \frac{\frac{\pd^2u}{\pd y^2}}{\frac{du}{dy}}
+ \frac{(n-1)(n-2)}{y} + \frac{2(n-1)u\frac{\pd^2u}{\pd y^2}}{(\frac{du}{dy})^2y}
- \frac{(n-2)(n-1) u}{y^2 \frac{du}{dy}} \\
+ & \frac{u\frac{d^3u}{dy^3}}{\big(\frac{du}{dy}\big)^2}
- \frac{2u \big(\frac{d^2u}{dy^2}\big)^2}{\big(\frac{du}{dy}\big)^3}.
\end{split}
\ee

Now we generalize Abreu's formula \cite{Abreu,Donaldson}
in the compact case to
the situation of this Section:

\begin{prop}
For the K\"ahler metric considered in this Section,
the formulas for scalar curvature hold:
\be \label{eqn:Abreu}
R = - \sum_{i,j=1}^n \frac{\pd^2 G^{ij}}{\pd y_i\pd y_j}
= - \sum_{i,j=1}^n \frac{\pd^4\phi}{\pd y_i^2\pd y_j^2} .
\ee
\end{prop}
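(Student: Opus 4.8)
The plan is to prove the first equality by substituting the closed form of $G^{ij}$ from Theorem~\ref{thm:Metric} into its right-hand side and matching against the formula for the scalar curvature already obtained in \S\ref{sec:Scalar},
\be
R = -\frac{1}{n}\sum_{i=1}^n \frac{\pd}{\pd y_i}\biggl(\frac{u\Phi'(u)}{y}\,y_i\biggr);
\ee
the second equality is then a restatement via \eqref{eqn:G-upper}. Introduce the abbreviation $F(y) := \dfrac{u}{\,y^2\,(du/dy)\,} - \dfrac1y$, so that Theorem~\ref{thm:Metric} reads $G^{ij} = y_i\delta_{ij} + F(y)\,y_iy_j$ with $u=u(y)$. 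Since every occurrence of the $y_i$ in $G^{ij}$ is either through the total variable $y=\sum_k y_k$ or through the two linear prefactors, applying $\sum_{i,j=1}^n \pd^2/\pd y_i\pd y_j$ is a routine chain-rule computation: the $y_i\delta_{ij}$-part contributes nothing, and the $F(y)y_iy_j$-part collapses to the single function
\be
\sum_{i,j=1}^n \frac{\pd^2 G^{ij}}{\pd y_i\pd y_j}
= y^2\,\frac{d^2F}{dy^2} + 2(n+1)\,y\,\frac{dF}{dy} + n(n+1)\,F .
\ee

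The bridge to the curvature formula is the identity
\be
\frac{u\Phi'(u)}{y} = y\,\frac{dF}{dy} + (n+1)\,F ,
\ee
which I would derive from the expression for $u\Phi'(u)/y$ recorded in \S\ref{sec:Scalar} just after \eqref{eqn:Ricci-Symplectic}, together with the elementary relation
\be
-\,\frac{u\,\frac{d^2u}{dy^2}}{y\bigl(\frac{du}{dy}\bigr)^2} = y\,\frac{dF}{dy} + 2F ,
\ee
the latter obtained by differentiating the definition of $F$ and using $\dfrac{u}{y^2(du/dy)} = F + \dfrac1y$. Writing $g(y):= u\Phi'(u)/y$, one has $\sum_i \pd_{y_i}(g\,y_i) = y\,\frac{dg}{dy} + n\,g$, and expanding $\frac{dg}{dy}$ from $g = y\,\frac{dF}{dy}+(n+1)F$ reproduces exactly the function of $y$ displayed above for $\sum_{i,j}\pd^2 G^{ij}/\pd y_i\pd y_j$; comparing with the formula for $R$ above then yields the first equality. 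The second equality follows from \eqref{eqn:G-upper}, which identifies $G^{ij}$ with the Hessian $\pd^2\phi/\pd y_i^\vee\pd y_j^\vee$ of the K\"ahler potential in the dual symplectic coordinates.

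I expect the only real obstacle to be bookkeeping: formula \eqref{eqn:R} for $R$ carries derivatives of $u$ up to third order, whereas the right-hand side of the proposition is manifestly of lower order, so one must see that the higher $u$-derivative terms reorganize, and the single identity for $-u\frac{d^2u}{dy^2}/\bigl(y(\frac{du}{dy})^2\bigr)$ above is exactly what makes this happen; after it, everything reduces to a one-variable manipulation. A more streamlined packaging of the same calculation is to check directly from the explicit $G^{ij}$ that $\sum_j \pd_{y_j}G^{ij} = 1 + \dfrac{u\Phi'(u)}{y}\,y_i$, so that the Ricci form \eqref{eqn:Ricci-Symplectic} rewrites as $\rho = -\sum_{i,j,k}\pd_{y_j}\pd_{y_k}G^{ij}\,dy_k\wedge d\theta_i$ (the constant term dropping under $d$); extracting the coefficient of $\omega^n$ in $\rho\wedge\omega^{n-1}$ then gives the proposition at once, with the same identity doing the work.
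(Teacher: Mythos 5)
Your proposal follows essentially the same route as the paper's own proof: both substitute the explicit $G^{ij}$ from Theorem \ref{thm:Metric}, take the double divergence, and match the result against the one-variable expression for $R$ obtained in \S\ref{sec:Scalar}. Your packaging via $F(y)=\frac{u}{y^2\frac{du}{dy}}-\frac{1}{y}$, so that $G^{ij}=y_i\delta_{ij}+F(y)y_iy_j$, together with the bridge identity $\frac{u\Phi'(u)}{y}=y\frac{dF}{dy}+(n+1)F$ and the relation $-\frac{u\frac{d^2u}{dy^2}}{y(\frac{du}{dy})^2}=y\frac{dF}{dy}+2F$, is a cleaner organization of the same bookkeeping, and all three of these identities check out; in particular your intermediate formula $\sum_j\pd_{y_j}G^{ij}=1+\frac{u\Phi'(u)}{y}y_i$ is exactly the paper's first displayed computation in its proof. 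One point does need care, however: your chain of identities gives, on the nose,
\be
\sum_{i,j=1}^n\frac{\pd^2 G^{ij}}{\pd y_i\pd y_j}
= y^2\frac{d^2F}{dy^2}+2(n+1)y\frac{dF}{dy}+n(n+1)F
= \sum_{i=1}^n\frac{\pd}{\pd y_i}\biggl(\frac{u\Phi'(u)}{y}\,y_i\biggr),
\ee
with no factor of $n$, so pairing this with the formula $R=-\frac{1}{n}\sum_i\pd_{y_i}\bigl(\frac{u\Phi'(u)}{y}y_i\bigr)$ that you quote as your comparison point yields $R=-\frac{1}{n}\sum_{i,j}\pd^2G^{ij}/\pd y_i\pd y_j$, which differs from the Proposition by a factor of $n$. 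The paper's proof instead compares against \eqref{eqn:R}, in which that factor has already been absorbed (indeed the display ``$nR=\cdots$'' and equation \eqref{eqn:R} in \S\ref{sec:Scalar} differ by exactly this factor, so the normalization of $R$ is not consistent within that subsection); you should anchor the comparison to \eqref{eqn:R}, as the paper does, or else state explicitly which normalization of the scalar curvature the Proposition is using. The same remark applies to your alternative packaging via $\rho\wedge\omega^{n-1}$, which again produces the $\frac{1}{n}$. The second equality is, as you say, just a restatement via \eqref{eqn:G-upper}.
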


\begin{proof}
To prove the first equality, recall that
\ben
G^{ij} & = & y_i\delta_{ij}
+ \biggl(\frac{u}{y^2\frac{du}{dy}} - \frac{1}{y}\biggr) y_iy_j.
\een
Hence we have:
\ben
&& \sum_{i=1}^n \frac{\pd G^{ij}}{\pd y_i}
= \sum_{i=1}^n \biggl[ \delta_{ij}
+ \biggl(\frac{u}{y^2\frac{du}{dy}} - \frac{1}{y}\biggr) (y_j + \delta_{ij} y_i) \\
& + & \biggl(\frac{\frac{du}{dy}}{y^2\frac{du}{dy}}
-\frac{2u}{y^3\frac{du}{dy}} -\frac{u}{y^2(\frac{du}{dy})^2} \frac{d^2u}{d y^2}
+ \frac{1}{y^2}\biggr) y_iy_j \biggr] \\
& = & 1 + (n+1) \biggl(\frac{u}{y^2\frac{du}{dy}} - \frac{1}{y}\biggr) y_j
+ \biggl(\frac{2}{y} -\frac{2u}{y^2\frac{du}{dy}}
-\frac{u}{y(\frac{du}{dy})^2} \frac{d^2u}{d y^2} \biggr) y_j \\
& = & 1
+ \biggl(-\frac{n-1}{y}+\frac{(n-1)u}{y^2\frac{du}{dy}}
-\frac{u}{y(\frac{du}{dy})^2} \frac{d^2u}{d y^2} \biggr) y_j.
\een
Furthermore,
\ben
&&  - \sum_{i,\emph{}j=1}^n \frac{\pd^2 G^{ij}}{\pd y_i\pd y_j}
= - \sum_{j=1}^n \biggl[
\biggl(-\frac{n-1}{y}+\frac{(n-1)u}{y^2\frac{du}{dy}}
-\frac{u}{y(\frac{du}{dy})^2} \frac{d^2u}{d y^2} \biggr)  \\
& + & \biggl(\frac{n-1}{y^2}
+\frac{(n-1)\frac{du}{dy}}{y^2\frac{du}{dy}}
-\frac{2(n-1)u}{y^3\frac{du}{dy}}
-\frac{(n-1)u}{y^2(\frac{du}{dy})^2} \frac{d^2u}{d y^2}  \\
&- & \frac{\frac{du}{dy}}{y(\frac{du}{dy})^2} \frac{d^2u}{d y^2}
+\frac{u}{y^2(\frac{du}{dy})^2} \frac{d^2u}{d y^2}
+\frac{2u}{y(\frac{du}{dy})^3} \biggl(\frac{d^2u}{d y^2}\biggr)^2
- \frac{u}{y(\frac{du}{dy})^2} \frac{d^3u}{d y^3}
\biggr) y_j \biggr]\\
& = & \frac{(n-1)(n-2)}{y} - \frac{(n-1)(n-2)u}{y^2\frac{du}{dy}}
+\frac{2(n-1)u}{y(\frac{du}{dy})^2} \frac{d^2u}{d y^2} \\
& + & \frac{\frac{d^2u}{dy^2} }{(\frac{du}{dy})^2}
- \frac{2u}{(\frac{du}{dy})^3} \biggl(\frac{d^2u}{d y^2}\biggr)^2
+ \frac{u}{(\frac{du}{dy})^2} \frac{d^3u}{d y^3}.
\een
The right-hand side of the last equality equals to $R$
by \eqref{eqn:R}.
This proves the first equation in \eqref{eqn:Abreu},
and the second equation follows from \eqref{eqn:G-upper}.
\end{proof}

\subsection{Derivations of Ricci curvature and scalar curvature in Hessian geometry}

In \S \ref{sec:Ricci} and \S \ref{sec:Scalar}
we have derived formulas for Ricci curvature and the scalar curvature, respectively,
in the context of $U(n)$-symmetric K\"ahler metrics.
In this Subsection,
we will derive the corresponding formulas in the context of Hessian geometry.

Suppose that $\psi$ is a convex function
on a domain $\Omega \subset \bR^n$ endowed
with linear coordinates $\{y_i\}$.
Introduce the dual local coordinates $y_i^\vee$ by
\be
y_i^\vee = \frac{\pd \psi}{\pd y_i},
\ee
and introduce a dual potential function:
\be
\psi^\vee = \sum_{i=1}^n y_i \frac{\pd \psi}{\pd y_i} - \psi.
\ee
Define a Riemannian metric $g$ by:
\be
g = \frac{1}{2} \sum_{i,j=1}^n G_{ij} dy_i dy_j
+ 2 \sum_{i,j=1}^n G^{ij} d\theta_i d\theta_j.
\ee
where the coefficients are defined by:
\begin{align}
G_{ij} & = \frac{\pd^2\psi}{\pd y_i \pd y_j}, &
G^{ij} & = \frac{\pd^2\psi^\vee}{\pd y_i^\vee \pd y_j^\vee}.
\end{align}
One can check that the matrices $(G_{ij})$ and $(G^{ij})$ are inverse to each other,
and
\be
g= 2 \sum_{i,j=1}^n G^{ij} (\frac{1}{4} dy_i^\vee dy_j^\vee+d\theta_i d\theta_j).
\ee
Introduce complex coordinates $w_i$:
\be
w_i = \half y_i^\vee + \sqrt{-1} \theta_i.
\ee
Then $g$ is a K\"ahler metric with K\"ahler form:
\be \label{eqn:omega-Hessian}
\omega = \sqrt{-1} \sum_{i,j=1}^n G^{ij} d w_i \wedge d\bar{w}_j
= \sqrt{-1} \sum_{i,j=1}^n \frac{\pd^2\psi^\vee}{\pd y_i^\vee \pd y_j^\vee}
 d w_i \wedge d\bar{w}_j.
\ee
Since
\begin{align}
\frac{\pd}{\pd w_i} & = \half
\biggl(2\frac{\pd}{\pd y_i^\vee} - \sqrt{-1} \frac{\pd}{\pd \theta_i}\biggr), &
\frac{\pd}{\pd \bar{w}_i} &
= \half\biggl(2\frac{\pd}{\pd y_i^\vee} + \sqrt{-1} \frac{\pd}{\pd \theta_i}\biggr),
\end{align}
we have
\be
\omega
= \sqrt{-1} \sum_{i,j=1}^n \frac{\pd^2\psi^\vee}{\pd w_i \pd \bar{w}_j}
 d w_i \wedge d\bar{w}_j.
\ee
Therefore,
its Ricci form
\be \label{eqn:Rho-Hessian}
\begin{split}
\rho = & - \sqrt{-1} \sum_{i,j=1}^n \frac{\pd^2\log\det (G^{ij})}{\pd w_i \pd \bar{w}_j}
 d w_i \wedge d\bar{w}_j \\
 = & - \sqrt{-1} \sum_{i,j=1}^n \frac{\pd^2\log\det (G^{ij})}{\pd y^\vee_i \pd y^\vee_j}
 d w_i \wedge d\bar{w}_j.
 \end{split}
\ee
The scalar curvature $R$ is then
\be
R = - \sqrt{-1} \sum_{i,j=1}^n G_{ij}
\frac{\pd^2\log\det (G^{ij})}{\pd y^\vee_i \pd y^\vee_j}.
\ee
By the computations in Abreu \cite{Abreu},
\be
R = - \sum_{i,j=1}^n \frac{\pd^2G^{ij}}{\pd y_i\pd y_j}.
\ee

\subsection{K\"ahler-Ricci flow in Hessian geometry}

\label{sec:KRF-Hessian}\emph{}

One can consider K\"ahler-Ricci flow of the form \eqref{eqn:KRF}
with $\omega$ and $\rho$ given by \eqref{eqn:omega-Hessian} and \eqref{eqn:Rho-Hessian}.
This leads us to the following flow:
\be \label{eqn:KRF-Hessian}
\frac{d}{dt} \psi^\vee = -\log \det \biggl( \frac{\pd^2\psi^\vee}{\pd y_i^\vee \pd y_j^\vee}
\biggr) + \lambda \psi^\vee.
\ee
Dually,
one can consider the flow for the complex potential:
\be \label{eqn:CF-Hessian}
\frac{d}{dt} \psi = -\log \det \biggl( \frac{\pd^2\psi}{\pd y_i \pd y_j}
\biggr) + \lambda \psi.
\ee

\section{The Case of $U(n)$-Symmetric K\"aler Ricci-Flat Metrics}

\label{sec:U(n)-symmetric}

In this Section we apply the results developed in last Section to the case
of some $U(n)$-symmetric K\"ahler Ricci-flat metrics on $\cO_{\bP^{n-1}}(-n)$
(see e.g Duan-Zhou \cite{Duan-Zhou1}).
We will first work on $\bC^n - \{0\}$,
then consider the extension from $(\bC^n - \{0\})/\bZ_n$.
When $n=2$,
one recovers the Kepler metric on $K_2$ and the Eguchi-Hanson metric.

\subsection{$U(n)$-symmetric K\"ahler Ricci-flat metrics}

If $\lambda = 0$ i.e. $\rho = 0$ in (\ref{eqn:Integral}) then we get:
\be y(u)^n = C_3 u^n + C_4 \ee
for some constants $C_3, C_4$.
When $C_3 = 1$ and $C_4 > 0$, write $C_4 = b^n$ for
some $b > 0$, then we have $y(u) = (u^n+b^n)^{1/n}$ and so the K\"ahler
potential is given by:
\ben \phi(u) = \int \frac{(u^n + b^n)^{1/n}}{u} du. \een
It satisfies
\begin{align} \label{eqn:phi'}
\phi'(u) & = \frac{(u^n + b^n)^{1/n}}{u}, & \phi''(u) & =  -
\frac{b^n}{u^2 (u^n+b^n)^{(n-1)/n}},
\end{align}
and so the K\"ahler form is given by:
\ben
\omega & = & \sqrt{-1} (\phi'(u) \pd \dbar u + \phi''(u) \pd u \wedge \dbar u) \\
& = & \sqrt{-1}\biggl(
\frac{(u^n + b^n)^{1/n}}{u} (dz^1 \wedge d\bar{z}^1 + \cdots + dz^n\wedge d \bar{z}^n) \\
& - &  \frac{b^n}{u^2 (u^n+b^n)^{(n-1)/n}} (\bar{z}^1 dz^1 + \cdots
+ \bar{z}^n dz^n) \wedge (z^1d\bar{z}^1 + \cdots + z^n d \bar{z}^n)
\biggr).
\een
When $b \to 0$,
we recover the flat metric on $\bC^n$:
\ben
\omega = \sqrt{-1} \sum_{i=1}^n dz^i \wedge d\bar{z}^i.
\een

By \eqref{eqn:Moment} and \eqref{eqn:phi'},
the moment map of the diagonal torus subgroup of $U(n)$ is given by
\be
(y_1, \dots, y_n) =
\biggl(|z_1|^2 \frac{(u^n + b^n)^{1/n}}{u}, \dots,
|z_n|^2\frac{(u^n + b^n)^{1/n}}{u}\biggr).
\ee
The image of the moment map is the convex body:
\be
\{(y_1, \dots, y_n) \in \bR^n \;|\; y_j \geq 0, \;\;j=1, \dots, n, \;\;
y_1 + \cdots + y_n \geq b \}.
\ee
When $b> 0$,
the convex body has $n$ vertex points.
When $b=0$,
the convex body is a simplex.

By \eqref{eqn:y} and \eqref{eqn:phi'},
we now have
\be \label{eqn:y-}
y = (u^n+b^n)^{1/n},
\ee
and so
\be
u = (y^n-b^n)^{1/n}.
\ee
Hence by \eqref{eqn:phi-in-y},
\be
\phi = \int \frac{y^n}{y^n-b^n} d y\emph{}
= \frac{b}{n} \sum_{j=0}^{n-1} \xi_n^j \ln (y- \xi_n^jb) + C.
\ee
And by \eqref{def:psi},
the complex potential is:
\be
\psi = \sum_{i=1}^n y_i (\ln y_i - 1)
- y (\ln y -1) + \frac{1}{n} \sum_{j=0}^{n-1} (y-b\xi_n^j) (\log (y- b\xi_n^j) - 1)-C.
\ee
It is interesting to compare this formula 
with the formula of Guillemin \cite{Gui}
in the case of compact toric manifolds.

The dual local coordinates $y_i^\vee$ are then
\be
y_i^\vee = \frac{\pd \psi}{\pd y_i}
= \ln y_i - \ln y + \frac{1}{n} \sum_{j=0}^{n-1} \log (y- b\xi_n^j)
= \ln \frac{y_i (y^n-b^n)^{1/n}}{y},
\ee
and from this we find:
\be
y =
\biggl(\big(\sum_{j=1}^n e^{y_j^\vee}\big)^n + b^n \biggr)^{1/n}
\ee
and
\be
y_i = \frac{e^{y_i^\vee}}{\sum_{j=1}^n e^{y_j^\vee}}
\biggl(\big(\sum_{j=1}^n e^{y_j^\vee}\big)^n + b^n \biggr)^{1/n}.
\ee
It follows that $\psi^\vee = \phi$ can be written in terms of $y_i^\vee$ as follows:
\be
\psi^\vee
= \frac{b}{n} \sum_{j=0}^{n-1} \xi_n^j
\ln \biggl(\biggl(\big(\sum_{i=1}^n e^{y_i^\vee}\big)^n + b^n \biggr)^{1/n}- \xi_n^jb\biggr) + C.
\ee
One checks that
\ben
\frac{\pd \psi^\vee}{\pd y_k^\vee}
& = & \frac{b}{n} \sum_{j=0}^{n-1} \xi_n^j
\frac{\biggl(\big(\sum_{i=1}^n e^{y_i^\vee}\big)^n + b^n \biggr)^{1/n-1}
\big(\sum_{i=1}^n e^{y_i^\vee}\big)^{n-1} e^{y_k^\vee}}
{\biggl(\big(\sum_{i=1}^n e^{y_i^\vee}\big)^n + b^n \biggr)^{1/n}- \xi_n^jb} \\
& = & \frac{b}{n} \sum_{j=0}^{n-1} \xi_n^j
\frac{y^{-(n-1)}
\big(\sum_{i=1}^n e^{y_i^\vee}\big)^{n-1} e^{y_k^\vee}}
{y- \xi_n^jb} \\
& = & y^{-(n-1)} \big(\sum_{i=1}^n e^{y_i^\vee}\big)^{n-1} e^{y_k^\vee} \cdot \frac{y^n}{y^n-b^n} \\
& = & \frac{e^{y_k^\vee}}{\sum_{j=1}^n e^{y_j^\vee}}
\biggl(\big(\sum_{j=1}^n e^{y_j^\vee}\big)^n + b^n \biggr)^{1/n} \\
& = & y_k.
\een

By Theorem \ref{thm:Metric} and Theorem \ref{thm:Metric2},
the Riemannian metric $g$ takes the following form:
\ben
g & = & \sum_{i,j=1}^n (\frac{1}{2} G_{ij} dy_i dy_j +2 G^{ij}d\theta_id\theta_j) \\
& = & \frac{1}{2} \sum_{i,j=1}^n \frac{\pd^2\psi}{\pd y_i\pd y_j} dy_i dy_j
+ 2 \sum_{i,j=1}^n \frac{\pd^2\psi^\vee}{\pd y_i^\vee\pd y_j^\vee} d\theta_i d\theta_j,
\een
where the coefficients $G_{ij}$ and $G^{ij}$ are now given by:
\ben
G_{ij} & = & \frac{\delta_{ij}}{y_i} - \frac{1}{y} + \frac{y^{n-1}}{y^n-b^n}
= \frac{\delta_{ij}}{y_i} + \frac{b^n}{y(y^n-b^n)}, \\
G^{ij} & = & y_i\delta_{ij}
+ \biggl(\frac{y^n-b^n}{y^{n+1}} - \frac{1}{y}\biggr) y_iy_j
= y_i\delta_{ij} - \frac{b^n}{y^{n+1}}y_iy_j.
\een
Their determinants are
\ben
&& \det (G_{ij}) = \frac{y^n}{y_1 \cdots y_n (y^n-b^n)}, \\
&& \det (G^{ij}) = \frac{y_1 \cdots y_n (y^n-b^n)}{y^n}
= \exp (y_1^\vee + \cdots + y_n^\vee).
\een
By \eqref{eqn:Rho-Hessian},
the Ricci form for $g$ is
\ben
\rho = -\sqrt{-1} \sum_{i,j=1}^n \frac{\pd^2}{\pd y_i^\vee\pd y_j^\vee}
(y_1^\vee + \cdots + y_n^\vee) \cdot d w_i \wedge d \bar{w}_j = 0.
\een
Let $g^\vee$ be the metric defined by:
\ben
&& g^\vee = \sum_{i,j=1}^n \biggl(
\frac{\pd^2 \psi^\vee}{\pd y_i^\vee\pd y_j^\vee}dy_i^\vee dy_j^\vee
+  \frac{\pd^2\psi}{\pd y_i\pd y_j} d\theta_i^\vee d\theta_j^\vee \biggr).
\een
Then its Ricci form is
\ben
\rho^\vee & = & - \sqrt{-1} \sum_{i,j=1}^n \frac{\pd^2}{\pd y_i\pd y_j}
\log \frac{y^n}{y_1 \cdots y_n (y^n-b^n)}
\cdot d w_i \wedge d \bar{w}_j.
\een
Clearly,
$g^\vee$ is not Ricci-flat.
To make $g^\vee$ Ricci-flat in the literature of SYZ Conjecture
it was proposed to quantum correct the complex  structure $J^\vee$
(see e.g. \cite{Auroux, Chan-Lau-Leung}).

\subsection{Quotient by $\bZ_n$}
\label{sec:Quotient}

The above family of metrics on $\bC^n - \{0\}$ is invariant under the
$\bZ/n\bZ$-action:
$$(z^1, \dots, z^n) \mapsto ( e^{2\pi i/n}z^1,
\dots, e^{2\pi i/n} z^n).$$
Since the quotient space
$(\bC^n - \{0\})/(\bZ/n\bZ)$ can be identified with
$\cO_{\bP^{n-1}}(-n) - \bP^{n-1}$,
one obtains a family of K\"ahler Ricci-flat metrics on
the latter space.
To get explicit expressions,
make the following change of variables:
\begin{align} \label{eqn:Changes}
z^1 &= v^{1/n}, & z^2 & = v^{1/n}w^2, & \dots && z^n & = v^{1/n}w^n.
\end{align}
Equivalently,
\begin{align}
v & = (z^1)^n, & w^2 & = \frac{z^2}{z^1}, & \cdots &&
w^n & = \frac{z^n}{z^1}.
\end{align}
Then $\omega$ becomes
\ben
\hat{\omega}_b & = & \sqrt{-1} \biggl( \frac{(1+|w|^2)^n}{n^2(|v|^2(1+|w|^2)^n+b^n)^{(n-1)/n}} dv \wedge d\bar{v} \\
& + & \sum_{i=2}^n \frac{\bar{v} w^i (1+|w|^2)^{n-1}}{n(|v|^2(1+|w|^2)^n+b^n)^{(n-1)/n}} dv \wedge d\bar{w}^i \\
& - & \sum_{i=2}^n \frac{v \bar{w}^i (1+|w|^2)^{n-1}}{n(|v|^2(1+|w|^2)^n+b^n)^{(n-1)/n}} d\bar{v} \wedge dw^i \\
& + & \frac{(|v|^2(1+|w|^2)^n+b^n)^{1/n}}{(1+|w|^2)} \sum_{i=2}^n dw^i \wedge d\bar{w}^i \\
& - & \frac{b^n}{(1+|w|^2)^2(|v|^2(1+|w|^2)^n+b^n)^{(n-1)/n}}
\sum_{i=2}^n \bar{w}^idw^i \wedge \sum_{j=2}^nw^j d\bar{w}^j\biggr).
\een
When $b > 0$,
$\hat{\omega}_b$ defines a K\"ahler metric on the total space
of the bundle $\cO_{\bP^{n-1}}(-n)$.
This family of metrics is the Calabi metrics \cite{Calabi} on $\cO_{\bP^{n-1}}(-n)$.
It is well-known that when $n=2$, this is the Eguchi-Hanson metric \cite{Egu-Han}.
We will also explicitly check this in \S \ref{sec:E-H}.

When $b\to 0$, $\hat{\omega}_b$ becomes:
\ben
\hat{\omega}_0 & = & \sqrt{-1} \biggl(
\frac{(1+|w|^2)}{n^2|v|^{2(n-1)/n}} dv \wedge d\bar{v}
+ \sum_{i=2}^n \frac{\bar{v} w^i}{n|v|^{2(n-1)/n}} dv \wedge d\bar{w}^i \\
& - & \sum_{i=2}^n \frac{v \bar{w}^i }{n|v|^{2(n-1)/n}} d\bar{v}
\wedge dw^i + |v|^{2/n}\sum_{i=2}^n dw^i \wedge d\bar{w}^i \biggr).
\een
When $\cO_{\bP^{n-1}}(-n)$ is blown down to $\bC^n/\bZ_n$,
$\hat{\omega}_0$ is transformed to the orbifold flat K\"ahler metric.

Let $T \subset U(n)$ be the group of diagonal $n\times n$ unitary matrices.
It acts naturally on $\bC^n$:
\ben
(e^{i\theta_1}, \dots, e^{i\theta_n}) \cdot (z_1, \dots, z_n)
= (e^{i\theta_1}z_1, \dots, e^{i\theta_n}z_n).
\een
This action induces an action on $\cO_{\bP^{n-1}}(-n)$,
in local coordinates $(v, w^2, \dots, w^n)$,
it is given by:
\ben
(e^{i\theta_1}, \dots, e^{i\theta_n}) \cdot (v, w^2 \dots, w^n)
= (e^{n i\theta_1}v,e^{i(\theta_2-\theta_1)}w^2,\dots, e^{i(\theta_n-\theta_1)}w^n).
\een
Make the following change of coordinates:
\ben
\alpha_1 & = & n \theta_1, \\
\alpha_j & = & \theta_j - \theta_1, \;\;\;\; j=2, \dots, n.
\een
This suggests a different torus action,
defined in local coordinates $(v, w^2, \dots, w^n)$ by:
\ben
(e^{i\alpha_1}, \dots, e^{i\alpha_n}) \cdot (v, w^2 \dots, w^n)
= (e^{i\alpha_1}v,e^{i\alpha_2}w^2,\dots, e^{i\alpha_n}w^n).
\een
It has the following moment map:
\ben
(x_1, \dots, x_n) = (ny_1, y_2-y_1, \dots, y_n - y_n),
\een
whose image is
\be
x_1 \geq 0, \;\;\; x_j + \frac{1}{n} x_1 \geq 0, \;\;\; j =2, \dots, n, \;\;\;
x_1 + \cdots + x_n  \geq a.
\ee
Denote by $x_i^\vee$ the dual coordinates:
\be
x_i^\vee = \frac{\pd \psi}{\pd x_i}.
\ee
Then we have
\be
x_1^\vee = \frac{1}{n}\sum_{i=1}^n y_i^\vee, \;\;\;\;\;
x_j^\vee = y_j^\vee, \;\;\; j=2, \dots, n.
\ee


\section{Kepler Metric on $K_2$ and Eguchi-Hanson Metrics by Calabi Ansatz}

\label{sec:E-H}

In this Section we will rederive the Kepler metric on $K_2$
and the Eguchi-Hanson metric by applying Calabi Ansatz on $\cO_{\bP^1}(-2)$.
This will lead us to a discussion of the Kepler metric on $K_3$ in the next two Sections.

\subsection{Calabi Ansatz on $\cO_{\bP^1}(-1)$}
Consider local coordinates $(z, w)$, $(\tilde{z}, \tilde{w})$ on $\cO_{\bP^1}(-1)$ related to each other by:
\begin{align*}
\tilde{z} & = \frac{1}{z}, & \tilde{w} & = zw, \\
z & = \frac{1}{\tilde{z}}, & w & = \tilde{z}\tilde{w}.
\end{align*}
Define an Hermitian metric on $\cO_{\bP^1}(-1)$ by
\be
u = r^2=  |w|^2 (1+|z|^2).
\ee
Consider a K\"ahler metric of the form:
\be \label{eqn:CA1}
\omega = a \pi^*\omega_{\bP^1} + \sqrt{-1} \pd \dbar \phi(u)
= \sqrt{-1} a \pd \dbar \log (1+|z|^2) + \sqrt{-1} \pd \dbar \phi(u).
\ee
By a computation similar to that in \S \ref{sec:Hessian},
we have
\ben
\omega & = & \sqrt{-1} \biggl\{ \frac{1}{(1+|z|^2)^2}  (a + y + u|z|^2 y') dz \wedge d\bar{z}  \\
& + & y' \bar{z} w  \cdot dz  \wedge d\bar{w}
  + y' z \bar{w}  \cdot dw  \wedge d \bar{z}
+  y' \cdot (1+|z |^2)dw  \wedge d \bar{w}
\biggr\},
\een
where  $y(u) = u \phi'(u)$.
It follows that
\ben
\Phi & = & \frac{\omega^2/2!}{(\sqrt{-1})^2 dz \wedge d\bar{z} \wedge dw \wedge d\bar{w}} \\
& = & \frac{1}{1+|z|^2} \biggl(a +  y + u|z |^2 y'\biggr) \cdot y'
-  \frac{u|z|^2}{1+|z|^2} \big( y' \big)^2
= \frac{(a+y)y'}{1+|z|^2}.
\een
Since we have
\ben
\pd\dbar \log \Phi & = & - \frac{dz \wedge d\bar{z}}{(1+|z|^2)^2}
+ \frac{y''}{y'} \pd \dbar u + \frac{y'y'''-(y'')^2}{(y')^2} \pd u \wedge \dbar u\\
& + & \frac{y'}{a+y} \pd\dbar u + \frac{y''(a+y)- (y')^2}{(a+y)^2} \pd u \wedge \dbar u \\
& = & - \frac{dz \wedge d\bar{z}}{(1+|z|^2)^2} \\
& + & (\log((a+y)y'))'\cdot (|w|^2dz \wedge d \bar{z} + z\bar{w} dw \wedge d\bar{z}
+ \bar{z} w dz \wedge d \bar{w} + (1+|z|^2) dw \wedge d \bar{w}) \\
& + & (\log((a+y)y'))''\cdot (|w|^4|z|^2 d z \wedge d \bar{z} + u \bar{z}w dz \wedge d\bar{w}
+ u z \bar{w} dw \wedge d\bar{z} + u (1+|z|^2) dw \wedge d \bar{w}),
\een
to get a K\"ahler Ricci-flat metric,
we need all the coefficients of $\pd\dbar \Phi$ to vanish.
It suffices  to solve the following system of two equations:
\ben
&&  - \frac{1}{(1+|z|^2)^2} + (\log ((a+y)y'))' \cdot |w|^2
+ (\log ((a+y)y'))'' \cdot |w|^4 |z|^2 = 0, \\
&& (\log ((a+y)y'))'   + (\log ((a+y)y'))'' \cdot u =0.
\een
From the second equation
\be
(\log ((a+y)y'))'  \cdot u = c,
\ee
plug this into the first equation:
\ben
&&  - \frac{1}{(1+|z|^2)^2} + \frac{c|w|^2}{u} - \frac{c \cdot |w|^4 |z|^2}{u^2} = 0,  .
\een
From this one gets $c=1$,
and so
\be
(\log (y'(a+y)))' = \frac{1}{u},
\ee
hence we can follow the following steps to get the solutions:
\be
(a+y)y' = C_1 u,
\ee
\be
y = - a \pm \sqrt{C_1u^2 + C_2},
\ee

\be
\phi(u) = \int \frac{-a \pm \sqrt{C_1u^2+C_2}}{u} du.
\ee
We will take the plus sign and set $C_1 = 1$, $C_2 = b^2$.
Then the K\"ahler form becomes:
\ben
\omega
& = & \sqrt{-1} \biggl\{ \frac{1}{(1+|z|^2)^2}
\biggl( \sqrt{u^2+b^2} + u|z|^2 \cdot \frac{u}{\sqrt{u^2+b^2}}
\biggr) dz \wedge d \bar{z} \\
& + &  \frac{u}{\sqrt{u^2 + b^2}} \bar{z} w  \cdot dz  \wedge d\bar{w}
  +  \frac{u}{\sqrt{u^2 + b^2}} z \bar{w}  \cdot dw  \wedge d \bar{z} \\
& + &  \frac{u}{\sqrt{u^2 + b^2}} \cdot (1+|z |^2)dw  \wedge d \bar{w}
\biggr\}.
\een
This metric degenerate along the zero section on $\cO_{\bP^1}(-1)$,
so we shift to $\cO_{\bP^1}(-1)$.
This can be achieved by taking
\begin{align*}
v & = w^2, & \tilde{v} & = \tilde{w}^2.
\end{align*}
Then the K\"ahler form becomes:
\ben
\omega
& = & \sqrt{-1} \biggl\{ \frac{1}{(1+|z|^2)^2}
\biggl( \sqrt{|v|^2(1+|z|^2)^2+b^2} + \frac{|v|^2(1+|z|^2)^2|z|^2}{\sqrt{|v|^2(1+|z|^2)^2+b^2}}
\biggr) dz \wedge d \bar{z} \\
& + &  \frac{(1+|z|^2)}{\sqrt{|v|^2(1+|z|^2)^2 + b^2}} \bar{z} v  \cdot dz  \wedge d\bar{v}
  +  \frac{(1+|z|^2)}{\sqrt{|v|^2(1+|z|^2)^2 + b\emph{}^2}} z \bar{v}  \cdot dv  \wedge d \bar{z} \\
& + &  \frac{(1+|z |^2)^2}{4\sqrt{|v|^2(1+|z|^2)^2 + b^2}} \cdot dv  \wedge d \bar{v}
\biggr\}.
\een
This matches with the metric $\hat{\omega}_b$ for $n=2$ in \S \ref{sec:Quotient}.

\subsection{Polar coordinates} \label{sec:Polar}
Use the stereographic projection to get:
\begin{align}
x^0 & = \frac{1-|z|^2}{|z|^2+1}, & x^1+\sqrt{-1} x^2 = \frac{2z}{|z|^2+1},
\end{align}
and
\be
z= \frac{x^1+\sqrt{-1}x^2}{x^0+1}.
\ee
In terms of the Euler angles,
\begin{align}
x^0 & = \cos \theta, & x^1 & = \sin \theta \cos \varphi, & x^2 & = \sin \theta \sin \varphi.
\end{align}
It follows that
\be
z = e^{i \varphi} \tan (\frac{\theta}{2}),
\ee
and so we can find from
\be
1+|z|^2= \frac{1}{\cos^2(\frac{\theta}{2})},
\ee
and
\be
dz = i e^{i\phi} \tan(\frac{\theta}{2}) d \phi
+\half \frac{e^{i\phi}}{\cos^2 (\frac{\theta}{2})} d\theta
\ee
the following formula:
\be
\frac{|dz|^2}{(|z|^2+1)^2} = \frac{1}{4} (d\theta^2 + \sin^2\theta d\phi^2).
\ee
Let
\be
w = r \cos (\frac{\theta}{2})  \cdot e^{i\beta}.
\ee
Its differential is given by:
\be
dw = r \cos (\frac{\theta_1}{2}) \cdot e^{i\beta}
(\frac{dr}{r} - \frac{1}{2} \tan(\frac{\theta}{2})d\theta+ i d\beta).
\ee

Now the Riemannian metric associated to \eqref{eqn:CA1} is given by:
\ben
g & = & \frac{1}{(1+|z|^2)^2}
\biggl(a + y(u) +|z|^2uy'(u)\biggr) |dz|^2   \\
& + & 2(1+|z|^2) y'(u) \cdot \Re (\bar{w} z d w d\bar{z}) \\
& + & y'(u) \cdot \frac{u}{|w|^2} \cdot |dw|^2.
\een
It can be rewritten as:
\be
g = \frac{a + y(u)}{(1+|z|^2)^2} |dz|^2
+ u y'(u) \cdot |\gamma|^2,
\ee
where
\be
\gamma= \frac{\bar{z}}{1+|z|^2} dz
+ \frac{\bar{w}}{|w|^2} d w.
\ee
It is easy to find
\be
\gamma = \frac{dr}{r} + i (d\beta + \sin^2(\frac{\theta}{2}) d\varphi),
\ee
so we get:

\ben
g & = &  y'(u) dr^2 + \frac{1}{4} \biggl(a + y(u)\biggr)
(d\theta^2 + \sin^2\theta d\varphi^2)
+ r^2y'(u) \cdot (d\beta + \sin^2(\frac{\theta}{2}) d\varphi)^2.
\een
When
\be
y = - a + \sqrt{u^2 + b^2},
\ee
the metric is
\ben
g = \frac{r^2}{\sqrt{r^4+b^2}} dr^2
+ \frac{\sqrt{r^4+b^2}}{4}
(d\theta^2 + \sin^2\theta d\varphi^2)
+ \frac{r^4}{\sqrt{r^4+b^2}} \cdot (d\beta + \sin^2(\frac{\theta}{2}) d\varphi)^2.
\een
Let $\psi = 2\beta - \varphi$,
one can rewrite it as
\ben
g = \frac{r^2}{\sqrt{r^4+b^2}} dr^2
+ \frac{\sqrt{r^4+b^2}}{4}(d\theta^2 + \sin^2\theta d\varphi^2)
+ \frac{r^4}{4\sqrt{r^4+b^2}} \cdot (d\psi + \cos\theta d\varphi)^2.
\een
Finally,
letting $s= (r^4+b^2)^{1/4}$,
\be
g = \biggl(1- \frac{b^2}{s^4}\biggr)^{-1}
+ \frac{s^2}{4} \biggl(1- \frac{b^2}{s^4}\biggr)
(d\psi + \cos\theta d\varphi)^2
+ \frac{s^2}{4}(d\theta^2 + \sin^2\theta d\varphi^2).
\ee
This is the standard form of the Euguchi-Hanson metric \cite{Egu-Han}.

When $b=0$,
\be
g= dr^2 + \frac{r^2}{4} ((d\theta^2 + \sin^2\theta d\varphi^2)
+ (d\psi + \cos\theta d\varphi)^2).
\ee
This recovers the Kepler metric on $K_2$,
and its shows that the standard metric on $\bR\bP^3$
up to a constant is a Sasaki-Einstein metric.

\section{Kepler Metric on 3-Conifold and Related Metrics by Calabi Ansatz on  $\cO_{\bP^1}(-1) \oplus \cO_{\bP^1}(-1)$}

\label{sec:CA-1}

In this Section we will study the Kepler metric on $K_3$
and some related metrics from the point of view of both \S \ref{sec:E-H}
and \S \ref{sec:Hessian}.

\subsection{Calabi Ansatz on  $\cO_{\bP^1}(-1) \oplus \cO_{\bP^1}(-1)$}

Consider local coordinates $(z, w_1, w_2)$
and  $(\tilde{z}, \tilde{w}_1, \tilde{w}_2)$ on
the total space  of the rank two vector bundle $\cO_{\bP^1}(-1) \oplus \cO_{\bP^1}(-1)$
related to each other by the following  formulas:
\begin{align*}
\tilde{z} & = \frac{1}{z}, & \tilde{w}_j & = zw_j, \\
z & = \frac{1}{\tilde{z}}, & w_j & = \tilde{z}\tilde{w}_j.
\end{align*}
Define an Hermitian metric by setting the Hermitian norm square
of the element in the vector bundle with local coordinate $(z, w_1, w_2)$ to be:
\be
u = r^2=  |w|^2 (1+|z|^2),
\ee
where $|w|^2= \sum_{j=1}^2 |w_j|^2$.
As a special case of a construction of Calabi \cite{Calabi},
we consider K\"ahler metrics of the form:
\be \label{eqn:CA-Conifold}
\begin{split}
\omega_a & = a \pi^*\omega_{\bP^1} + \sqrt{-1} \pd \dbar \phi(u)\\
& = \sqrt{-1} \frac{a dz  \wedge d\bar{z} }{(1+|z |^2)^2}
+ \sqrt{-1} \phi'(u) \cdot  \pd\dbar u
+ \sqrt{-1} \phi''(u) \cdot \pd u \wedge \dbar u,
\end{split}
\ee
where $\phi$ is a suitable function in $u$.
Since we have
\ben
&& \pd u = |w|^2 \bar{z} dz + (1+|z|^2)\sum_{j=1}^2 \bar{w}_j dw_j, \\
&& \dbar u = |w|^2 z d\bar{z} + (1+|z|^2) \sum_{j=1}^2 w_j d\bar{w}_j, \\
\pd u \wedge \dbar u 
& = & |w|^4|z|^2 d z \wedge d \bar{z} + u \sum_{j=1}^2 \bar{z}w_j dz \wedge d\bar{w}_j
+ u \sum_{j=1}^2 z \bar{w}_j dw_j \wedge d\bar{z} \\
& + & (1+|z|^2)^2 \sum_{j,k} \bar{w}_j w_k dw_j \wedge d \bar{w}_k, \\
\pd \dbar u & = & |w|^2dz \wedge d \bar{z}
+ \sum_{j=1}^2 z\bar{w}_j dw_j \wedge d\bar{z}
+ \sum_{j=1}^2 \bar{z} w_j dz \wedge d \bar{w}_j  \\
& + & \sum_{j=1}^2 (1+|z|^2) dw_j \wedge d \bar{w}_j,
\een
one has:
\be
\begin{split}
\omega_a = & \sqrt{-1} \biggl\{ \biggl(\frac{a}{(1+|z |^2)^2}
+ \phi'(u) \cdot |w |^2 + \phi''(u) \cdot |w |^4|z|^2\biggr) dz  \wedge d\bar{z}  \\
+ & \sum_{j=1}^2 \biggl( \phi'(u) + u \phi''(u) \biggr)
\biggl( \bar{z} w_j  \cdot dz  \wedge d\bar{w}_j
+ z \bar{w}_j  \cdot dw_j  \wedge d \bar{z}  \biggr) \\
+ & \sum_{j,k=1}^2\biggl(\phi'(u) \cdot (1+|z |^2)\delta_{jk}
 + \phi''(u) \cdot (1+|z |^2)^2\bar{w}_jw_k\biggr)
dw_j  \wedge d \bar{w}_k \biggr\}.
\end{split}
\ee

\subsection{Symplectic coordinates}

Similar to \eqref{eqn:Kahler-symplectic} we have:

\begin{prop}
Let $z = r_0 e^{\sqrt{-1} \theta_0}$, $w_i = r_i e^{\sqrt{-1} \theta_i}$,
then
\be
\omega_a = \sum_{j=0}^2 dy_j \wedge d \theta_j,
\ee
where $y_0, y_1, y_2$ are defined by:
\bea
&& y_0 = -\frac{a}{1+r_0^2} + r_0^2 (r_1^2+r_2^2) \cdot \phi'((1+r_0^2)(r_1^2+r_2^2)),
\label{eqn:y0} \\
&&
y_j 
= r_j^2 (1+r_0^2) \cdot \phi'((1+r_0^2)(r_1^2+r_2^2)),
\;\;\; j=1,2. \label{eqn:yj}
\eea
\end{prop}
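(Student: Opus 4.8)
The plan is to compute $\omega_a$ directly in the polar-coordinate description of the local chart $(z, w_1, w_2)$ and recognize the result as $\sum_{j=0}^2 dy_j \wedge d\theta_j$ for the functions $y_j$ given in \eqref{eqn:y0}--\eqref{eqn:yj}. First I would substitute $z = r_0 e^{\sqrt{-1}\theta_0}$ and $w_i = r_i e^{\sqrt{-1}\theta_i}$ into the expression for $\omega_a$ already obtained in the previous subsection, using $dz = e^{\sqrt{-1}\theta_0}(dr_0 + \sqrt{-1} r_0 d\theta_0)$ and similarly for $dw_i$. The key bookkeeping fact is that $u = (1+r_0^2)(r_1^2+r_2^2)$ depends only on the radial variables, so $du = \sum_j \frac{\partial u}{\partial r_j} dr_j$ involves only the $dr_j$'s; this is what will let all the $\phi''(u)$ terms collapse.

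The computation splits naturally into the three types of terms in the displayed formula for $\omega_a$: the $dz \wedge d\bar z$ term, the mixed $dz \wedge d\bar w_j$ (plus conjugate) terms, and the $dw_j \wedge d\bar w_k$ terms. For each, I would expand $d(\text{holomorphic}) \wedge d(\text{antiholomorphic})$ in terms of $dr_a \wedge dr_b$, $dr_a \wedge d\theta_b$, and $d\theta_a \wedge d\theta_b$. The crucial observation is that a real $(1,1)$-form $\sqrt{-1} h_{a\bar b}\, d\zeta_a \wedge d\bar\zeta_b$ with $\zeta_a = r_a e^{\sqrt{-1}\theta_a}$ contributes, to the $dr \wedge d\theta$ part, exactly the exterior derivative of the moment-map one-form; more concretely, since $\omega_a$ is closed and $T^3$-invariant with the $\theta_j$ as angle coordinates, it must have the form $\sum_j dy_j \wedge d\theta_j + (\text{a two-form in the } dr\text{'s and cross terms})$, and closedness forces the $dr_a\wedge dr_b$ part to vanish and the $y_j$ to be functions of the $r$'s alone. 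So the real content is to read off $y_j$ as the coefficient structure: $y_j = \sum_{a} (\text{something}) \, r_a\, \partial_{r_a}(\cdots)$, and then simplify using $u\phi'(u) = y(u)$ and the chain rule. I expect $y_0$ to pick up the $-a/(1+r_0^2)$ from the Fubini--Study piece $a\pi^*\omega_{\bP^1}$ (whose moment map for the $\theta_0$-circle is exactly $-a/(1+r_0^2)$ up to an additive constant) plus $r_0^2 \cdot \partial_{(r_1^2+r_2^2)}\phi \cdot \dots$, matching \eqref{eqn:y0}, and $y_j$ for $j=1,2$ to be $r_j^2(1+r_0^2)\phi'(u)$ as in \eqref{eqn:yj}, which is just $|w_j|^2 \phi'(u)$ as in the general moment-map formula \eqref{eqn:Moment} applied to the fiber directions.

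The main obstacle is purely organizational: keeping track of the cross terms $dr_a \wedge d\theta_b$ with $a \neq b$ and verifying they assemble into $d y_j \wedge d\theta_j$ with no leftover. Concretely, one must check that $\frac{\partial y_0}{\partial r_i} d r_i \wedge d\theta_0$ combines correctly with the off-diagonal contributions coming from the $\phi''(u)\, \partial u \wedge \bar\partial u$ and the mixed $dz\wedge d\bar w_j$ terms, and symmetrically for $y_1, y_2$. The cleanest way to dispatch this is to note that $\omega_a = a\pi^*\omega_{\bP^1} + \sqrt{-1}\partial\bar\partial \phi(u)$ is manifestly closed, and that both summands are invariant under the torus $T^3$ acting by rotating $\theta_0,\theta_1,\theta_2$; an invariant closed two-form on $(\bC^*)^3 / $ (or the relevant chart) in which the $d\theta_j$ appear linearly is automatically of the form $\sum_j dy_j(\vec r)\wedge d\theta_j$, so it remains only to contract $\omega_a$ with $\partial_{\theta_j}$ to extract $-dy_j$, i.e. $\iota_{\partial_{\theta_j}}\omega_a = -dy_j$, and then integrate. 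Performing that contraction: $\iota_{\partial_{\theta_j}}\omega_a$ is computed from $\iota_{\partial_{\theta_j}}(\sqrt{-1}\, dz\wedge d\bar z) = \sqrt{-1}(\iota_{\partial_{\theta_j}} dz)\, d\bar z - \sqrt{-1} dz\, (\iota_{\partial_{\theta_j}} d\bar z)$ and the analogous fiber formulas, and one reads off $y_j$ as the potential, finishing the proof. I would present the contraction computation rather than the brute-force wedge expansion, since it isolates exactly the identity $u\phi'(u) = y(u)$ and the Fubini--Study moment map and avoids the combinatorial sprawl.
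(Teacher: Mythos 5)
Your proposal is correct and follows essentially the same route as the paper: the paper's proof simply expands $\omega_a$ in the polar coordinates $z=r_0e^{\sqrt{-1}\theta_0}$, $w_j=r_je^{\sqrt{-1}\theta_j}$, observes that only $dr_i\wedge d\theta_j$ terms survive, and then matches the radial one-form multiplying each $d\theta_j$ with $dy_j$ --- your contraction $\iota_{\pd_{\theta_j}}\omega_a=-dy_j$ is the same computation repackaged as a moment-map calculation. One small caveat: the absence of $dr_a\wedge dr_b$ and $d\theta_a\wedge d\theta_b$ terms is not a consequence of closedness (e.g.\ $dr_1\wedge dr_2$ is closed and $T^3$-invariant); it follows instead from the fact that $\omega_a$ is locally $\sqrt{-1}\pd\dbar$ of a $T^3$-invariant function, so the phases in $dz_a\wedge d\bar z_b$ cancel against those in the Hessian coefficients and the antisymmetric radial and angular parts cancel in the sum over $a,b$ --- a fact your direct expansion would of course also establish.
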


\begin{proof}
We first check that:
\ben
\omega_a & = & \biggl[\frac{2a r_0 dr_0 }{(1+r_0^2)^2}
+  2 \phi'(u) \cdot \biggl( (r_1^2+r_2^2) r_0 dr_0
+ \sum_{j=1}^2 r_0^2 r_j dr_j  \biggr) \\
& + & 2 \phi''(u) \cdot \biggl( (r_1^2+r_2^2)^2 r_0^3 dr_0
+ (1+r_0^2)(r_1^2+r_2^2) r_0^2 \sum_{j=1}^2 r_j dr_j \biggr)  \biggr] \wedge d \theta_0 \\
& + & \sum_{j=1}^2  \biggl[ 2 \phi'(u) \cdot \biggl(
r_j^2 r_0 dr_0 + (1+r_0^2) r_j dr_j \biggr) \\
& + & 2 \phi''(u) \cdot \biggl(
(1+r_0^2)(r_1^2+r_2^2) r_j^2r_0 dr_0
+ (1+r_0^2)^2\sum_{k=1}^2 r_j^2 r_k d r_k \biggr) \biggr]
\wedge d \theta_j.
\een
From this the Proposition can be easily checked.
\end{proof}

Define a $T^3$-action by
\ben
&& (t_0, t_1, t_2) \cdot (z, w_1, w_2) = (t_0 z, t_1w_1, t_2w_2), \\
&& (t_0, t_1, t_2) \cdot (\tilde{z}, \tilde{w}_1, \tilde{w}_2)
= (t_0^{-1} \tilde{z}, t_0t_1\tilde{w}_1, t_0t_2\tilde{w}_2).
\een
They are generated by the following holomorphic vector fields:
\ben
&& X_0 = z \frac{\pd}{\pd z}
= -\tilde{z} \frac{\pd}{\pd \tilde{z}}
+ w_1 \frac{\pd}{\pd w_1} + w_2 \frac{\pd}{\pd w_2}, \\
&& X_1 = w_1 \frac{\pd}{\pd w_1}
= \tilde{w}_1 \frac{\pd}{\pd \tilde{w}_1}, \\
&& X_2 =  w_2 \frac{\pd}{\pd w_2}
= \tilde{w}_2 \frac{\pd}{\pd \tilde{w}_2}.
\een
One can check that
\be
i_{X_j} \omega_a = \sqrt{-1} \dbar  y_j,
\ee
for $j=0,1,2$.

\subsection{K\"ahler potential in symplectic coordinates}

Let
\be
y:= y_1 + y_2.
\ee
By \eqref{eqn:yj} it is clear that:
\be \label{eqn:y-3}
y = u \cdot \phi'(u).
\ee
Then one can generalize \S \ref{sec:Kahler-potential}
almost verbatim.

\subsection{Riemannian metric in symplectic coordinates}

The Riemannian metric $g_a$ associated with $\omega_a$
can be explicitly written down as follows:
\ben
g_a & = & \frac{2a (dr_0^2 +r_0^2d \theta_0^2) }{(1+r_0^2)^2}
+  2 \phi'(u) \cdot \biggl( (r_1^2+r_2^2) (dr_0^2+r_0^2d\theta_0^2) \\
& + & 2 \sum_{j=1}^2 r_0 r_j (dr_0dr_j + r_0d\theta_0 r_j d\theta_j)
+ \sum_{j=1}^2 (1+r_0^2) (dr_j^2 + r_j^2 d \theta_j^2) \biggr) \\
& + & 2 \phi''(u) \cdot \biggl( (r_1^2+r_2^2)^2 r_0^2 (dr_0^2+ r_0^2 d \theta_0^2) \\
& + & 2 (1+r_0^2)(r_1^2+r_2^2) \sum_{j=1}^2 r_0 r_j(dr_j dr_0 + r_jd\theta_j r_0 d \theta_0) \\
& + & (1+r_0^2)^2\sum_{j,k=1}^2 r_j r_k
(dr_j d r_k + r_jr_k d \theta_j d \theta_k) \biggr).
\een

\begin{theorem}
In symplectic coordinates the Riemannian metric $g_a$ takes the following form:
\be
g_a = \sum_{i,j=1}^n (\frac{1}{2} G_{ij} dy_i dy_j +2 G^{ij}d\theta_id\theta_j),
\ee
where the coefficients $G_{ij}$ and $G^{ij}$ will be given in the proof.
Furthermore, the matrices $(G_{ij})$ and $(G^{ij})$ are inverse to each other.
\end{theorem}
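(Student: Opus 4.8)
The plan is to follow the same strategy as in the proof of Theorem~\ref{thm:Metric}, now adapted to the extra base coordinate $(r_0,\theta_0)$ and the twist coming from the term $a\,\pi^*\omega_{\bP^1}$. From \eqref{eqn:yj} one has $r_j^2=y_j/\phi'(u)$ for $j=1,2$, hence $r_1^2+r_2^2=y/\phi'(u)$ with $y=y_1+y_2$; combining this with $u=(1+r_0^2)(r_1^2+r_2^2)$ and \eqref{eqn:y-3} and inverting \eqref{eqn:y0} gives $r_0^2=(y_0+a)/(y-y_0)$ and $1+r_0^2=(y+a)/(y-y_0)$. Thus every $r_i$, as well as the combination $1+r_0^2$, becomes an explicit function of $y_0$ and $y$ (and of $u=u(y)$, which is determined by $y=u\phi'(u)$ exactly as before). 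I would use this to eliminate the $r_i$ from the displayed formula for $g_a$.

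Next I would substitute and compute the differentials: for $j=1,2$, $dr_j=\tfrac12 r_j\big(\tfrac{dy_j}{y_j}-\tfrac{\phi''(u)}{\phi'(u)}du\big)$, while $d(r_0^2)$ and $d\log(1+r_0^2)$ are read off from the formulas above in terms of $dy_0$ and $dy$. The auxiliary identities $\tfrac{\phi''(u)}{\phi'(u)}du=\big(\tfrac1y-\tfrac1u\tfrac{du}{dy}\big)dy$ and $\tfrac{\phi''(u)}{\phi'(u)^2}=\tfrac{u}{y^2\,du/dy}-\tfrac1y$ derived in the proof of Theorem~\ref{thm:Metric} carry over verbatim, since they use only $y=u\phi'(u)$ (here with $y=y_1+y_2$). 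Collecting the $dy_idy_j$ terms and, separately, the $d\theta_id\theta_j$ terms — the cross terms $dy_id\theta_j$ cancel because we already know $\omega_a=\sum dy_j\wedge d\theta_j$ — one reads off $(G_{ij})$ and $(G^{ij})$. I expect a $1\times1$ piece in the $y_0$-direction together with a rank-one correction coupling $y_0$ to $y$, plus the familiar diagonal-plus-rank-one block in the $(y_1,y_2)$ variables, with the new denominators $y-y_0$ and $y+a$ replacing the role of $u,y$ in the $\bC^n$ case.

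For the assertion that $(G_{ij})$ and $(G^{ij})$ are mutually inverse, I would give the Legendre-transform argument instead of a brute-force product: exhibit a convex $\psi(y_0,y_1,y_2)$ with $G_{ij}=\partial^2\psi/\partial y_i\partial y_j$, set $y_i^\vee=\partial\psi/\partial y_i$, $\psi^\vee=\sum y_iy_i^\vee-\psi$, verify $G^{ij}=\partial^2\psi^\vee/\partial y_i^\vee\partial y_j^\vee$, and then invoke the general fact (the computation in the proof of Theorem~\ref{thm:Metric2}) that the Hessians of a function and of its Legendre transform are inverse matrices. Concretely $\psi$ should be a base contribution of the type $(y_0+a)\big(\ln(y_0+a)-1\big)-(y-y_0)\big(\ln(y-y_0)-1\big)$ plus a fiber contribution $\sum_{j=1}^2 y_j(\ln y_j-1)-y(\ln y-1)+\int\log u(y)\,dy$ of the kind in \eqref{def:psi}; one then checks $y_i^\vee=2\log r_i$ and $\psi^\vee=\phi$, paralleling the Legendre computation of \S\ref{sec:Kahler-potential}. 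Alternatively, one verifies $\sum_j G_{ij}G^{jk}=\delta_{ik}$ directly, splitting into the cases $i,k\in\{1,2\}$, exactly one of $i,k$ equal to $0$, and $i=k=0$.

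The main obstacle is the bookkeeping in the $y_0$-direction: unlike the $\bC^n$ case of Theorem~\ref{thm:Metric}, $r_1$ and $r_2$ depend on $r_0$, so $u$, $y$, and $r_0$ are entangled, and one must track carefully which of $dy_0$ and $dy=dy_1+dy_2$ enters each term; pinning down the base potential $\psi$ (equivalently, the inversion of \eqref{eqn:y0}) together with the nondegeneracy/positivity conditions $y-y_0>0$, $y_0+a>0$, $du/dy>0$ that make $\psi$ convex is the real content. Once the substitution is organized along the $(y_0)\oplus(y_1,y_2)$ block decomposition, the remaining manipulations are the same algebra as in Theorems~\ref{thm:Metric} and~\ref{thm:Metric2}.
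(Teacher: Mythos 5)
Your overall strategy is the paper's: invert the moment map to express the $r_i$ in terms of the $y_j$, split $g_a$ into its $dr\,dr$ and $d\theta\,d\theta$ blocks, read off $(G_{ij})$ and $(G^{ij})$, and establish inverseness via the Legendre/Hessian structure (the paper instead defines $(G_{jk})$ as the inverse of the computed $(G^{jk})$ and then verifies $g_a^r=\tfrac12\sum G_{jk}dy_jdy_k$ directly, recording the potential $\psi$ only in the following subsection — your Legendre packaging is a harmless reorganization of the same content). However, three concrete points in your write-up would fail if executed literally. First, from \eqref{eqn:yj} one has $r_j^2=y_j/\big((1+r_0^2)\phi'(u)\big)$, not $y_j/\phi'(u)$; the factor $(1+r_0^2)=(y+a)/(y-y_0)$ is exactly what produces the new denominators $y-y_0$ and $y+a$ in the answer, so dropping it at the start would corrupt every subsequent coefficient (your final formulas for $r_0^2$ and $1+r_0^2$ are nevertheless correct). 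Second, your candidate potential is wrong in two places: the correct $\psi$ is \eqref{def:Psi-3}, which carries $+(y-y_0)\big(\ln(y-y_0)-1\big)$ (you wrote a minus sign, which would give $G_{00}=\tfrac{1}{y_0+a}-\tfrac{1}{y-y_0}$ instead of $\tfrac{1}{y_0+a}+\tfrac{1}{y-y_0}$) and contains the additional term $-(y+a)\big(\ln(y+a)-1\big)$, without which the $-\tfrac{1}{y+a}$ entries in $G_{11},G_{12},G_{22}$ do not appear. With your $\psi$ the Legendre verification simply would not match the $(G_{ij})$ coming from $g_a^r$.

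Third, the assertion that the cross terms $dy_i\,d\theta_j$ cancel \emph{because} $\omega_a=\sum dy_j\wedge d\theta_j$ is not a valid inference: knowledge of the symplectic form in action--angle coordinates says nothing about the off-diagonal blocks of the Riemannian metric. The correct (and elementary) reason is that each $y_i$ is a function of $r_0,r_1,r_2$ alone and the explicit expression for $g_a$ contains no $dr_i\,d\theta_j$ terms, so no such cross terms can arise. None of these issues requires a new idea — they are all repaired by carrying out the substitution carefully — but as stated the proposal would not compile into a correct proof, and the genuinely laborious step (verifying that $g_a^r=\tfrac12\sum G_{jk}dy_jdy_k$ with the \emph{correct} $G_{jk}$, which the paper describes as a long and tedious computation) is precisely the part you have deferred.
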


\begin{proof}
We separate the terms with $dr_i$'s and those with $d\theta_j$'s
in the above expression for $g_a$ to get:
\be
g_a = g_a^r + g_a^\theta,
\ee
where $g_a^r$ and $g_a^\theta$ are given by:
\ben
g_a^r & = & \frac{2a dr_0^2}{(1+r_0^2)^2}
+  2 \phi'(u) \cdot \biggl( (r_1^2+r_2^2) dr_0^2
+ 2 \sum_{j=1}^2 r_0 r_j dr_0dr_j
+ \sum_{j=1}^2 (1+r_0^2) dr_j^2 \biggr) \\
& + & 2 \phi''(u) \cdot \biggl( (r_1^2+r_2^2) r_0 dr_0
+ (1+r_0^2) \sum_{j=1}^2 r_j dr_j \biggr)^2, \\
g_a^\theta & = & \frac{2a r_0^2d \theta_0^2 }{(1+r_0^2)^2}
+  2 \phi'(u) \cdot \biggl( (r_1^2+r_2^2)r_0^2d\theta_0^2
+ 2 \sum_{j=1}^2 r_0^2d\theta_0 r_j^2 d\theta_j
+ \sum_{j=1}^2 (1+r_0^2) r_j^2 d \theta_j^2 \biggr) \\
& + & 2 \phi''(u) \cdot \biggl( (r_1^2+r_2^2)r_0^2 d \theta_0
+ (1+r_0^2)\sum_{j=1}^2 r_j^2d\theta_j  \biggr)^2.
\een
From \eqref{eqn:y0} and \eqref{eqn:yj} one can get:
\bea
&& r_0 = \sqrt{\frac{y_0+a}{y_1+y_2-y_0}}, \\
&& r_j = \sqrt{\frac{y_j (y_1+y_2-y_0)}{\phi'(u)\cdot (y_1+y_2+a)}}, \;\;\;\; j=1,2.
\eea
With these identities one can compute $g_a^r$ and $g_a^\theta$.
For $g_a^\theta$ we first get:
\ben
g_a^\theta & = & 2 \biggl(\frac{a (y-y_0)(y_0+a)}{(y+a)^2}
+ \frac{(y_0+a)y}{y+a}
+ \frac{\phi''(u)}{\phi'(u)^2} \cdot \biggl( \frac{(y_0+a)y}{y+a}
\biggr)^2 \biggr) d \theta_0^2 \\
& + &  4 \sum_{j=1}^2 \biggl( \frac{(y_0+a)y_j}{y+a}
+ \frac{\phi''(u)}{\phi'(u)^2} \cdot  \frac{(y_0+a)yy_j}{y+a}
\biggr)d\theta_0 d\theta_j   \\
& + & 2 \sum_{j=1}^2 y_j d \theta_j^2
+ 2 \frac{\phi''(u)}{\phi'(u)^2} \cdot \biggl( \sum_{j=1}^2 y_j d\theta_j  \biggr)^2.
\een
We also have
\bea
&& \phi'(u) = \frac{y}{u}, \\
&& \frac{\phi''(u)}{\phi'(u)^2} = \frac{u}{y^2\frac{du}{dy}} - \frac{1}{y},
\eea
so we get:
\ben
g_a^\theta & = & 2 \biggl(\frac{(y-y_0)(y_0+a)}{y+a}
+ \frac{(y_0+a)^2}{(y+a)^2} \frac{u}{\frac{du}{dy}}\biggr) d \theta_0^2
+ 4 \sum_{j=1}^2 \frac{(y_0+a)y_j}{y(y+a)} \frac{u}{\frac{du}{dy}}
d\theta_0 d\theta_j   \\
& + & 2 \sum_{j,k=1}^2 (-1)^{j+k} \frac{y_1y_2}{y}  d \theta_jd\theta_k
+ 2 \frac{u}{y^2\frac{du}{dy}} \cdot \biggl( \sum_{j=1}^2 y_j d\theta_j  \biggr)^2 \\
& = & 2 \sum_{j,k} G^{jk} d\theta_jd\theta_k.
\een
Denote by $(G_{jk})$ the inverse matrix of $(G^{jk})$.
By a computation we have found the following
remarkably simple expressions for $G_{jk}$:
\ben
&& G_{00} = \frac{1}{y_0+a} +\frac{1}{y-y_0}
= \frac{y+a}{(y_0+a)(y-y_0)}, \\
&& G_{01} = G_{10} = G_{02} = G_{20} = - \frac{1}{y-y_0}, \\
&& G_{11} = \frac{\frac{du}{dy}}{u}
+ \frac{1}{y-y_0}-\frac{1}{y} -\frac{1}{y+a}+\frac{1}{y_1}, \\
&& G_{22} = \frac{\frac{du}{dy}}{u}
+ \frac{1}{y-y_0}-\frac{1}{y} -\frac{1}{y+a}+\frac{1}{y_2}, \\
&& G_{12} = G_{21} =\frac{\frac{du}{dy}}{u}
+ \frac{1}{y-y_0}-\frac{1}{y}-\frac{1}{y+a}.
\een
It is clear that
\be
g+a^r = 2 \sum_{i,j=0}^2 G^{ij} d\log r_i \cdot d \log r_j.
\ee
By a long and tedious computation we
check that:
\be
g_a^r = \half \sum_{j,k=0}^2 G_{jk} dy_j dy_k.
\ee
This completes the proof.
\end{proof}

\subsection{Hessian geometry}

Now as in \S \ref{sec:Hessian}
we are in the regime of Hessian geometry
and many results in that Section can be applied.
It is easy to see that
\be
G_{ij} = \frac{\pd^2\psi}{\pd y_i\pd y_j}
\ee
for the function $\psi$ defined by:
\be \label{def:Psi-3}
\begin{split}
\psi = & (y-y_0) (\ln (y-y_0) -1) + (y_0+a) (\ln (y_0+a)-1)
+ \sum_{i=1}^2 y_i (\ln y_i - 1) \\
- & y (\ln y -1) - (y+a) (\ln (y+a)-1) +  \int \log u(y) dy.
\end{split}
\ee
One can check that for $j=0,1,2$,
\ben
\frac{dw_j}{w_j}
& = &\frac{dr_j}{r_j} + \sqrt{-1} d\theta_j
= \half \sum_{k=0}^2 \frac{\pd^2\psi}{\pd y_j \pd y_k} dy_k
+ \sqrt{-1} d\theta_j,
\een
where $w_0 = z$.

The dual local coordinates $y_i^\vee$ are given by
\bea
&& y_0^\vee = \ln(y_0+a)-\ln(y-y_0) = 2\ln r_0, \\
&& y_i^\vee
= \ln(y-y_0)+\ln(y_i)-\ln(y)-\ln(y+a)+\ln(u(y))
= 2 \log r_i,
\eea
for $i=1,2$, and dual potential function:
\be
\psi^\vee = y \ln u(y)  -  \int \ln u(y) dy  - a \ln \frac{y_0+a}{y+a}
= \phi - a \ln \frac{r_0^2}{1+r_0^2}.
\ee
And the proof of Theorem \ref{thm:Metric2} works
in this case too.

\subsection{Kepler metric on $K_3$ lifted to $\cO_{\bP^1}(-1) \oplus \cO_{\bP^1}(-1)$}
By \eqref{eqn:Rawnsley} and Proposition \ref{prop:Norm},
the Kepler metric on $K_3$ is of the type \eqref{eqn:CA-Conifold} with
$a=0$ and $\phi(u) = u^{1/2}$.
Since $y = \half u^{1/2}$, we have $u = \frac{1}{4}y^2$.
In this case the moment map is given by:
\ben
&& y_0 = \frac{r_0^2 (r_1^2+r_2^2)}{2\sqrt{(1+r_0^2)(r_1^2+r_2^2)}},
 \\
&&
y_j 
= \frac{r_j^2 (1+r_0^2)}{2\sqrt{(1+r_0^2)(r_1^2+r_2^2)}},
\;\;\; j=1,2.
\een
It has the following inverse map:
\ben
&& r_0 = \sqrt{\frac{y_0}{y_1+y_2-y_0}}, \\
&& r_j = 2\sqrt{y_j (y_1+y_2-y_0)}, \;\;\;\; j=1,2.
\een
The image of the moment map is the convex cone
generated by $(1,1,0)$, $(1,0,1)$, $(0,1,0)$, $(0,0,1)$.
This cone is bounded by four planes:
\be
y_j = 0, \;\;\; j=0,1,2, \;\;\;\; \;\;\;
y_1 + y_2 - y_0 = 0.
\ee
One can check that
\be
\psi = (y_1+y_2-y_0) (\ln (y_1+y_2-y_0) -1) + y_0 (\ln y_0-1)
+ \sum_{i=1}^2 y_i (\ln y_i - 1).
\ee
The dual local coordinates $y_i^\vee$ are given by
\bea
&& y_0^\vee = \ln(y_0)-\ln(y_1+y_2-y_0), \\
&& y_i^\vee
= \ln(y_1+y_2-y_0)+\ln(y_i),\;\;\;\;\;i=1,2,
\eea
and the dual potential function is:
\be
\psi^\vee = \phi =2y.
\ee

\subsection{K\"ahler Ricci-flat metric on $K_3$}

In this case, $a=0$, $\phi(u) = \frac{3}{2} u^{2/3}$.
Since $y =   u^{2/3}$, we have $u = y^{3/2}$.
In this case the moment map is given by:
\ben
&& y_0 = \frac{r_0^2 (r_1^2+r_2^2)}{\sqrt[3]{(1+r_0^2)(r_1^2+r_2^2)}}, \\
&&
y_j  = \frac{r_j^2 (1+r_0^2)}{\sqrt[3]{(1+r_0^2)(r_1^2+r_2^2)}},
\;\;\; j=1,2.
\een
It has the following inverse map:
\ben
&& r_0 = \sqrt{\frac{y_0}{y_1+y_2-y_0}}, \\
&& r_j = \sqrt{\frac{y_j (y_1+y_2-y_0)}{\sqrt{(y_1+y_2)}}}, \;\;\;\; j=1,2.
\een
The image of the moment map is the convex cone
generated by $(y_0,y_1,y_2)= (1,1,0)$, 
$(1,0,1)$, $(0,1,0)$, $(0,0,1)$.
This cone is bounded by four planes:
\be
y_j \geq 0, \;\;\; j=0,1,2, \;\;\;\; \;\;\;
y_1 + y_2 - y_0 \geq 0.
\ee
One can check that
\be
\begin{split}
\psi = & (y_1+y_2-y_0) (\ln (y_1+y_2-y_0) -1) + y_0 (\ln (y_0)-1)
+ \sum_{i=1}^2 y_i (\ln y_i - 1) \\
- & \half (y_1+y_2) (\ln (y_1+y_2) -1).
\end{split}
\ee\emph{}
The dual local coordinates $y_i^\vee$ are given by
\bea
&& y_0^\vee = \ln(y_0)-\ln(y_1+y_2-y_0), \\
&& y_i^\vee
= \ln(y_1+y_2-y_0)+\ln(y_i)-\half \ln (y_1+y_2),\;\;\;\;\;i=1,2,
\eea
and dual potential function:
\be
\psi^\vee = \phi =\frac{3}{2} y.
\ee

\subsection{K\"ahler Ricci-flat metric on the resolved $3$-conifold}

\label{sec:Resolved}

Write $\omega_a = \sqrt{-1} \sum_{j,k=0}^2 \omega_{j\bar{k}} dw_j \wedge d \bar{w}_k$,
where $w_0 = z$.
Let $\Phi = \det (\omega_{j\bar{k}})_{j,k=0..2}$.
It is easy to find:
\ben
\Phi
& = & \frac{(a + u \phi'(u))(u\phi'(u))'}{1+|z|^2}
 \cdot \phi'(u) \cdot (1+|z|^2) \\
& = & (a+y) y' y (1+|z|^2)^{-1}|w|^{-2}.
\een
It is easy to see that
\ben
\pd\dbar \log \Phi & = & -\frac{dz \wedge d\bar{z}}{(1+|z|^2)^2}
- (n-1) \biggl( \frac{dw_j\wedge d \bar{w}_j}{|w|^2}
- \frac{\sum_{j,k} \bar{w}_jw_k dw_j \wedge d \bar{w}_k}{|w|^4} \biggr) \\
& + & (\log((a+y)y'y^{n-1}))'\cdot \pd \dbar u
+ (\log((a+y)y'y^{n-1}))''\cdot \pd u \wedge \dbar u.
\een
when $\rho = -\sqrt{-1} \pd\dbar \log \Phi = 0$,
the coefficients of $v$ all vanish:
\ben
&&  \frac{-1}{(1+|z|^2)^2} + (\log ((a+y)y'y))' \cdot |w|^2
+ (\log ((a+y)y'y))'' \cdot |w|^4 |z|^2 = 0, \\
&& (\log ((a+y)y'y))'   + (\log ((a+y)y'))'' \cdot u =0, \\
&& (\log ((a+y)y'y))' (1+|z|^2) \delta_{jk}
+ (\log ((a+y)y'y))'' \cdot (1+|z|^2)^2 \bar{w_j} w_k \\
& = & \frac{\delta_{jk}}{|w|^2} - \frac{\bar{w_j}w_k}{|w|^4}.
\een
In the third equation we take $j \neq k$,
then we get:
\be
(\log ((a+y)y'y))'' = -\frac{1}{u^2}.
\ee
and so from the second equation
\be
(\log ((a+y)y'y))'  = \frac{1}{u},
\ee
It follows that
\be \label{eqn:Resolved-Con1}
(a+y)yy' = cu,
\ee
After integration:
\be
\frac{a}{2} y^2 + \frac{y^3}{3} = \frac{c}{2} u^2 + c_1.
\ee
Take $c=\frac{2}{3}$ and $c_1 = 0$,
and one can find by solving a cubic equation:
\be \label{eqn:Resolved-Con}
\frac{3a}{2} y^2 + y^3 = u^2 .
\ee
Since $a>0$,
by the Inverse Function Theorem,
when $y > 0$,
$y$ is a smooth function in $u$.
Let us now analyse the behavior of $y$ and $y'(u)$ as $u \to 0$.
Clearly from \eqref{eqn:Resolved-Con},
\be
y(u) \sim \sqrt{\frac{2}{3a}} u -\frac{2u^2}{9a^2},
\ee
and so from \eqref{eqn:Resolved-Con1},
\be
y'(u) \sim \sqrt{\frac{2}{3a}}-\frac{4u}{9a^2}.
\ee
It follows that
\begin{align}
\phi'(u) & \sim \sqrt{\frac{2}{3a}}-\frac{2u}{9a^2}, & 
\phi''(u) & \sim -\frac{2}{9a^2}.
\end{align}
Therefore, as $u \to 0$,
\be
\begin{split}
\omega_a \sim & \sqrt{-1} \biggl\{ \biggl(\frac{a}{(1+|z |^2)^2}
+ \sqrt{\frac{2}{3a}} \cdot |w |^2 
-\frac{2}{9a^2} \cdot |w |^4|z|^2\biggr) dz  \wedge d\bar{z}  \\
+ & \sum_{j=1}^2 \sqrt{\frac{2}{3a}}
\biggl( \bar{z} w_j  \cdot dz  \wedge d\bar{w}_j
+ z \bar{w}_j  \cdot dw_j  \wedge d \bar{z}  \biggr) \\
+ & \sum_{j,k=1}^2\biggl(\sqrt{\frac{2}{3a}} \cdot (1+|z |^2)\delta_{jk}
-\frac{2}{9a^2} \cdot (1+|z |^2)^2\bar{w}_jw_k\biggr)
dw_j  \wedge d \bar{w}_k \biggr\}.
\end{split}
\ee
Clearly it can be extended smoothly over the zero section 
given by $w=0$.

It is actually possible in this case to find explicit expression
of $\phi$ in terms of $y$.
By \eqref{eqn:Resolved-Con},
\be \label{eqn:u-in-y-con}
u = \biggl(\frac{3a}{2} y^2 + y^3 \biggr)^{1/2}
= y \sqrt{y+\frac{3a}{2}},
\ee
therefore,
\be \label{eqn:phi-in-y-con}
\phi = y \ln (u(y)) - \int \ln (u(y)) dy
= = \frac{3y}{2} - \frac{3a}{4} \ln (y+ \frac{3a}{2}) +C.
\ee
This can be also derived as follows.
We get from \eqref{eqn:u-in-y-con} the following two identities:
\bea
&& \frac{du}{dy} = \sqrt{y+\frac{3a}{2}} + \frac{y}{2\sqrt{y+\frac{3a}{2}}}, \\
&& \frac{d\phi}{du} = \phi'(u) = \frac{y}{u} = \frac{1}{\sqrt{y + \frac{3a}{2}}}.
\eea
It follows that
\be
\frac{d\phi}{dy} = \frac{d\phi}{du} \frac{du}{dy}
= 1+ \frac{y}{2(y+\frac{3a}{2})}.
\ee
Therefore,
\eqref{eqn:phi-in-y-con} can be derived after integration.

By \eqref{eqn:y0} and \eqref{eqn:yj},
the moment map is:
\ben
&& y_0 = -\frac{a}{1+r_0^2} + \frac{r_0^2}{1+r_0^2} y, \\
&& y_j = \frac{r_j^2}{r_1^2+r_2^2} y,
\;\;\; j=1,2. 
\een
We have:
\ben
&& r_0 = \sqrt{\frac{y_0+a}{y_1+y_2-y_0}}, \\
&& r_j = \sqrt{\frac{y_j (y_1+y_2-y_0)\sqrt{y + \frac{3a}{2}}}{(y_1+y_2+a)}}, \;\;\;\; j=1,2.
\een
The image of the moment map is
the convex domain  bounded by four planes:
\be
y_j \geq 0, \;\;\; j=1,2, \;\;\;\; \;\;\; y_0 \geq -a, 
\;\;\;\;\;\;\;
y_1 + y_2 - y_0 \geq 0.
\ee
This convex domain has five edges,
the four rays
$\bR_{\geq 0}(1,1,0)$,
$\bR_{\geq 0}(1,0,1)$, $\bR_{\geq 0}(-a,1,0)$, 
$\bR_{\geq 0}(-a,0,1)$,
and an interval $\{(-at,0,0)\;|\; 1 \leq t \leq 1\}$.
By \eqref{def:Psi-3},
\ben
\psi & = & (y-y_0) (\ln (y-y_0) -1) + (y_0+a) (\ln (y_0+a)-1)
+ \sum_{i=1}^2 y_i (\ln y_i - 1) \\
& - & (y+a) (\ln (y+a)-1) 
+  \half (y+\frac{3a}{2}) (\ln (y+\frac{3a}{2}) - 1).
\een
The dual local coordinates $y_i^\vee$ are given by
\bea
&& y_0^\vee = \ln(y_0+a)-\ln(y-y_0) = 2\ln r_0, \\
&& y_i^\vee
= \ln(y-y_0)+\ln(y_i)-\ln(y+a)-\half\ln(y+\frac{3a}{2})
= 2 \log r_i,
\eea
for $i=1,2$, and one can check that:
\be
\psi^\vee  = \phi - a \ln \frac{r_0^2}{1+r_0^2}.
\ee

\section{Kepler Metric on 3-Conifold and Resolved 3-Conifold by Calabi Ansatz
on $\cO_{\bP^1 \times \bP^1}(-1,-1)$}

\label{sec:CA-2}

In this Section we repeat the steps in last Section
for another way to understand $K_3$.
It turns out to give us some natural way to understand the flop transformation
of the resolved conifold,
and it also leads to K\"ahler Ricci-flat metrics
on $\cO_{\bP^1\times \bP^1}(-2, -2) = \kappa_{\bP^1 \times \bP^1}$, 
the total space of the canonical line bundle of $\bP^1 \times \bP^1$.

\subsection{Calabi Ansatz on $\cO_{\bP^1 \times \bP^1}(-1,-1)$}

We will work with the local coordinates $(z_1, z_2, w)$
on $\cO_{\bP^1 \times \bP^1}(-1,-1)$.
They are related to other local coordinates,
denoted by $(\tilde{z}_1, \tilde{z}_2, \tilde{w})$,
$(\hat{z}_1, \hat{z}_2, \hat{w})$,
$(\hat{\tilde{z}}_1, \hat{\tilde{z}}_2, \hat{\tilde{w}})$
in the following fashion:
\begin{align*}
\tilde{z}_1 & = \frac{1}{z_1}, &
\tilde{z}_2 & = z_2, & \tilde{w} & = z_1w, \\
\hat{z}_1 & = z_1, &
\hat{z}_2 & = \frac{1}{z_2},
& \hat{w} & = z_2w, \\
\hat{\tilde{z}}_1 & = \frac{1}{z_1}, &
\hat{\tilde{z}}_2 & = \frac{1}{z_2},
& \hat{\tilde{w}} & = z_1z_2w.
\end{align*}
Define an Hermitian metric by
\be
u = r^2=  |w|^2 (1+|z_1|^2)(1+|z_2|^2).
\ee
We consider K\"ahler metrics of the form:
\be \label{eqn:CA-Conifold2}
\begin{split}
\omega_{a_1,a_2} = & a_1 \pi_1^*\omega_{\bP^1}
+ a_2 \pi_2^*\omega_{\bP^1} + \sqrt{-1} \pd \dbar \phi(u) \\
= & \sqrt{-1} a_1 \pd \dbar \log (1+|z_1|^2)
+ \sqrt{-1} a_2 \pd \dbar \log (1+|z_2|^2)
+ \sqrt{-1} \pd \dbar \phi(u).
\end{split}
\ee

\subsection{Symplectic coordinates}
One finds
\ben
\pd u & = & |w|^2(1+|z_2|^2) \bar{z}_1 dz_1
+ |w|^2 (1+|z_1|^2) \bar{z}_2 dz_2
+ (1+|z_1|^2)(1+|z_2|^2)\bar{w} dw, \\
\dbar u & = & |w|^2(1+|z_2|^2) z_1 d\bar{z}_1
+ |w|^2 (1+|z_1|^2) z_2 d\bar{z}_2
+ (1+|z_1|^2)(1+|z_2|^2) w d\bar{w},
\een
\ben
\pd \dbar u & = & |w|^2(1+|z_2|^2)dz_1 \wedge d \bar{z}_1
+ |w|^2(1+|z_1|^2)dz_2 \wedge d \bar{z}_2
+ \frac{u}{|w|^2} d w \wedge d\bar{w} \\
& + & |w|^2 (\bar{z}_1z_2 dz_1 \wedge d \bar{z}_2
+ \bar{z}_2z_1 dz_2 \wedge d \bar{z}_1 ) \\
& + & (1+|z_2|^2) (\bar{w} z_1 dw \wedge d\bar{z}_1
+ w\bar{z}_1 dz_1 \wedge d \bar{w}) \\
& + & (1+|z_1|^2) (\bar{w} z_2 d w \wedge d\bar{z}_2
+ w \bar{z}_2 dz_2 \wedge d \bar{w}),
\een
\ben
\pd u \wedge \dbar u
& = & |w|^4(1+|z_2|^2)^2 |z_1|^2 dz_1 \wedge d\bar{z}_1
+ |w|^4 (1+|z_1|^2)^2 |z_2|^2 dz_2 \wedge d\bar{z}_2 \\
& + & |w|^2 u (\bar{z}_1z_2 dz_1 \wedge d\bar{z}_2
+ \bar{z}_2 z_1 d z_2 \wedge d \bar{z}_1) \\
& + & u(1+|z_2|^2) (\bar{z}_1w dz_1 \wedge d \bar{w}
+ z_1\bar{w} dw \wedge d\bar{z}_1) \\
& + & u(1+|z_1|^2) (\bar{z}_2w dz_2 \wedge d \bar{w}
+ z_2\bar{w} dw \wedge d\bar{z}_2)
+  \frac{u^2}{|w|^2} dw \wedge d \bar{w}.
\een
Therefore,
since we have
\ben
\omega_{a_1,a_2} & = &
\sqrt{-1} \frac{a_1 dz_1  \wedge d\bar{z}_1}{(1+|z_1|^2)^2}
+ \sqrt{-1} \frac{a_2 dz_2  \wedge d\bar{z}_2}{(1+|z_2|^2)^2} \\
& + & \sqrt{-1} \phi'(u) \cdot  \pd\dbar u
+ \sqrt{-1} \phi''(u) \cdot \pd u \wedge \dbar u,
\een

\ben
\omega_{a_1,a_2} & = & \sqrt{-1} \biggl\{
\sum_{j=1}^2 \frac{1}{(1+|z_j|^2)^2}
\biggl(a_j
+ \phi'(u) \cdot u (1+|z_j|^2) + \phi''(u) \cdot u^2 |z_j|^2\biggr)
dz_j  \wedge d\bar{z}_j  \\
& + & |w|^2 (\phi'(u) + u \phi''(u)) \cdot (\bar{z}_1z_2 dz_1 \wedge d \bar{z}_2
+ \bar{z}_2z_1 dz_2 \wedge d \bar{z}_1 ) \\
& + & (1+|z_2|^2) (\phi'(u) + u \phi''(u)) (\bar{w} z_1 dw \wedge d\bar{z}_1
+ w\bar{z}_1 dz_1 \wedge d \bar{w}) \\
& + & (1+|z_1|^2) (\phi'(u) + u \phi''(u)) (\bar{w} z_2 d w \wedge d\bar{z}_2
+ w \bar{z}_2 dz_2 \wedge d \bar{w}) \\
& + & \biggl(\phi'(u) \cdot \frac{u}{|w|^2}
 + \phi''(u) \cdot \frac{u^2}{|w|^2} \biggr)
dw \wedge d \bar{w} \biggr\}.
\een
Write $z_j = r_j e^{i\theta_j}$, $j=1,2$, $w = r_3e^{i\theta_3}$.
\ben
\omega_{a_1,a_2} & = & 2 \biggl\{
\biggl( \frac{a_1}{(1+r_1^2)^2}
   + r_3^2(1+r_2^2) \phi'(u) +r_1^2(1+r_2^2)^2r_3^4\phi''(u)\biggr)
r_1dr_1  \wedge d\theta_1  \\
& + & \biggl( \frac{a_2}{(1+r_2^2)^2}
   + r_3^2(1+r_1^2) \phi'(u) +r_2^2(1+r_1^2)^2r_3^4\phi''(u)\biggr)
r_2dr_2  \wedge d\theta_2  \\
& + & r_3^2 (\phi'(u) + u \phi''(u)) \cdot r_1r_2(dr_1 \wedge r_2d \theta_2
+ dr_2 \wedge r_1d \theta_1 ) \\
& + & (1+r_2^2) (\phi'(u) + u \phi''(u)) \cdot r_1r_3 (dr_1 \wedge r_3 d\theta_3
+ dr_3 \wedge r_1 d \theta_1) \\
& + & (1+r_1^2) (\phi'(u) + u \phi''(u)) \cdot r_2r_3 (dr_2 \wedge r_3 d\theta_3
+ dr_3 \wedge r_2 d \theta_2) \\
& + & (\phi'(u) + u \phi''(u)) \cdot (1+r_1^2)(1+r_2^2) \cdot r_3 dr_3 \wedge d \theta_3 \biggr\}.
\een
From this one can easily check that
\be
\omega = dy_1 \wedge d\theta_1 + dy_2 \wedge d\theta_2 + dy_3 \wedge d \theta_3,
\ee
where $y_1, y_2, y_3$ are defined by:
\bea
&& y_1 = -\frac{a_1}{1+r_1^2} + r_1^2r_3^2 (1+r_2^2) \cdot \phi'((1+r_1^2)(1+r_2^2)r_3^2),
\label{eqn:y1} \\
&& y_2 = -\frac{a_2}{1+r_2^2} + r_2^2r_3^2 (1+r_1^2) \cdot \phi'((1+r_1^2)(1+r_2^2)r_3^2),
\label{eqn:y2} \\
&& y_3 = (1+r_1^2)(1+r_2^2)r_3^2\cdot \phi'((1+r_1^2)(1+r_2^2)r_3^2). \label{eqn:y3}
\eea
These functions generate $T^3$-action on $\cO_{\bP^1\times \bP^1}(-1,-1)$ defined
in local coordinates $(z_1, z_2, z_3)$ by:
\be
(e^{i\alpha_1}, e^{i\alpha_2}, e^{i\alpha_3})\cdot
(z_1, z_2, w)
= (e^{i\alpha_1}z_1, e^{i\alpha_2}z_2, e^{i\alpha_3}w).
\ee

Let $y:= y_3$.
Then
\be
y = u \cdot \phi'(u)
\ee
and one can express $\phi$ as a function in $y$.

\subsection{Riemannian metric in symplectic coordinates and Hessian geometry}

The Riemannian metric $g_a$ associated with $\omega_a$
can be explicitly written down as follows:
\ben
g_{a_1,a_2} & = & 2 \biggl\{
\biggl( \frac{a_1}{(1+r_1^2)^2}
   + r_3^2(1+r_2^2) \phi'(u) +r_1^2(1+r_2^2)^2r_3^4\phi''(u)\biggr)
(dr_1^2 +r_1^2 d\theta_1^2)  \\
& + & \biggl( \frac{a_2}{(1+r_2^2)^2}
   + r_3^2(1+r_1^2) \phi'(u) +r_2^2(1+r_1^2)^2r_3^4\phi''(u)\biggr)
 (dr_2^2+ r_2^2 d\theta_2^2)  \\
& + & 2 r_3^2 (\phi'(u) + u \phi''(u)) \cdot r_1r_2(dr_1 dr_2 +
+  r_1d \theta_1 r_2d \theta_2) \\
& + & 2(1+r_2^2) (\phi'(u) + u \phi''(u)) \cdot r_1r_3 (dr_1dr_3  +
 r_1 d \theta_1 r_3 d\theta_3 ) \\
& + & 2 (1+r_1^2) (\phi'(u) + u \phi''(u)) \cdot r_2r_3
(dr_2 dr_3 + r_2 d \theta_2 r_3 d\theta_3) \\
& + & (\phi'(u) + u \phi''(u)) \cdot (1+r_1^2)(1+r_2^2)
\cdot (dr_3^2 + r_3 ^2d \theta_3^2) \biggr\}.
\een

\begin{theorem}
In symplectic coordinates the Riemannian metric $g$ takes the following form:
\be
g_{a_1,a_2} = \sum_{i,j=1}^n (\frac{1}{2} G_{ij} dy_i dy_j +2 G^{ij}d\theta_id\theta_j),
\ee
where the coefficients $G_{ij}$ and $G^{ij}$ will be given in the proof.
Furthermore, the matrices $(G_{ij})$ and $(G^{ij})$ are inverse to each other.
\end{theorem}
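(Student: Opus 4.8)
The plan is to proceed exactly as in the proof of the corresponding theorem for $\cO_{\bP^1}(-1)\oplus\cO_{\bP^1}(-1)$ in the previous section. First I would separate $g_{a_1,a_2}$ into a radial part $g^r$ collecting all terms with $dr_i$'s and an angular part $g^\theta$ collecting all terms with $d\theta_j$'s; the displayed expression for $g_{a_1,a_2}$ shows that there are no mixed $dr_i\, d\theta_j$ contributions, so $g_{a_1,a_2} = g^r + g^\theta$. Then I would treat $g^\theta$ first: invert the moment map \eqref{eqn:y1}--\eqref{eqn:y3} to express $r_1^2,r_2^2,r_3^2$ (equivalently the $r_j$ themselves as explicit radicals) as functions of $y_1,y_2,y_3$, substitute $\phi'(u)=y/u$ together with $\phi''(u)/\phi'(u)^2 = \frac{u}{y^2\,du/dy}-\frac1y$ as in \S\ref{sec:Hessian}, and collect terms to write $g^\theta = 2\sum_{j,k} G^{jk} d\theta_j d\theta_k$ with explicit coefficients $G^{jk}$ in the variables $y_1,y_2,y_3$.

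Next I would compute the inverse matrix $(G_{jk})=(G^{jk})^{-1}$. Guided by the pattern of the preceding section, I expect $(G_{jk})$ to simplify dramatically, taking the form of a diagonal $1/y_i$-type term plus a small number of rank-one corrections built from $du/dy$ and the factors $y_1+a_1$, $y_2+a_2$, $y_3-y_1$, $y_3-y_2$ (and $y_3$, $y_3+a_1+a_2$, or their analogues) that naturally appear in the inversion of \eqref{eqn:y1}--\eqref{eqn:y3}, with $y=y_3$ playing the role that $y=y_1+y_2$ played in \S\ref{sec:CA-1}. With this ansatz in hand one checks $\sum_k G_{ik}G^{kj}=\delta_i^j$ directly. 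The remaining and most laborious step is to verify that the radial part matches, i.e. $g^r = \half\sum_{j,k} G_{jk}\,dy_j dy_k$. As in the previous section this is done by first rewriting $g^r$ in terms of the logarithmic differentials $d\log r_i$ (where, up to the flat $\bP^1\times\bP^1$ pieces carrying $a_1,a_2$, one has $g^r = 2\sum_{i,j} G^{ij}\,d\log r_i\,d\log r_j$), then changing variables to $y_1,y_2,y_3$ using the Jacobian $\partial y_i/\partial(\log r_j)$ obtained by logarithmic differentiation of \eqref{eqn:y1}--\eqref{eqn:y3}, and simplifying with $y=u\phi'(u)$ and $\phi'(u)=y/u$.

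The main obstacle is the bookkeeping in this final verification: unlike Theorem \ref{thm:Metric}, there is no full $U(n)$-symmetry here that would allow one to restrict to a single coordinate axis, and the situation is even less symmetric than in \S\ref{sec:CA-1} because the two base directions $z_1,z_2$ carry distinct parameters $a_1,a_2$. Thus the $3\times 3$ Jacobian relating $(y_1,y_2,y_3)$ to $(\log r_1,\log r_2,\log r_3)$ must be handled head-on. Once the correct closed form for $(G_{jk})$ has been identified, however, the identity $g^r = \half\sum_{j,k} G_{jk}\,dy_j dy_k$ follows from a direct, if lengthy, algebraic simplification, and the claim that $(G_{ij})$ and $(G^{ij})$ are mutually inverse is then immediate by construction.
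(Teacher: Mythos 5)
Your plan follows the paper's proof essentially step for step: split $g_{a_1,a_2}$ into radial and angular parts, invert the moment map \eqref{eqn:y1}--\eqref{eqn:y3} to get $r_j=\sqrt{(y_j+a_j)/(y-y_j)}$ and the corresponding expression for $r_3$, read off $G^{jk}$ from $g^\theta$ using $\phi'(u)=y/u$ and $\phi''/(\phi')^2=\tfrac{u}{y^2\,du/dy}-\tfrac1y$, exhibit the simple closed form of the inverse matrix $(G_{jk})$, and verify $g^r=\tfrac12\sum G_{jk}\,dy_j\,dy_k$ via the identity $g^r=2\sum G^{ij}\,d\log r_i\,d\log r_j$ (which, contrary to your hedge, holds exactly including the $a_1,a_2$ pieces, since the coefficient of $dr_i\,dr_j/(r_ir_j)$ in a Hermitian metric equals that of $d\theta_i\,d\theta_j$). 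The only shortfall is that the theorem promises the explicit coefficients, and your write-up describes how to obtain them without actually displaying $G_{jk}$ (diagonal entries $\tfrac{1}{y_j+a_j}+\tfrac{1}{y-y_j}$, off-diagonal entries $G_{12}=0$, $G_{j3}=-\tfrac{1}{y-y_j}$, and $G_{33}=\tfrac{du/dy}{u}+\sum_{j=1}^2(\tfrac{1}{y-y_j}-\tfrac{1}{y+a_j})$), so the computational core remains to be carried out.
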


\begin{proof}
We separate the terms with $dr_i$'s and those with $d\theta_j$'s
in the above expression for $g_a$ to get:
\be
g_{a_1,a_2} = g_{a_1,a_2}^r + g_{a_1,a_2}^\theta,
\ee
where $g_a^r$ and $g_a^\theta$ are given by:
\ben
g_{a_1,a_2}^r & = & 2 \biggl\{
\biggl( \frac{a_1}{(1+r_1^2)^2}
   + r_3^2(1+r_2^2) \phi'(u) +r_1^2(1+r_2^2)^2r_3^4\phi''(u)\biggr) dr_1^2   \\
& + & \biggl( \frac{a_2}{(1+r_2^2)^2}
   + r_3^2(1+r_1^2) \phi'(u) +r_2^2(1+r_1^2)^2r_3^4\phi''(u)\biggr) dr_2^2 \\
& + & 2 r_3^2 (\phi'(u) + u \phi''(u)) \cdot r_1r_2dr_1 dr_2 \\
& + & 2(1+r_2^2) (\phi'(u) + u \phi''(u)) \cdot r_1r_3 dr_1dr_3  \\
& + & 2 (1+r_1^2) (\phi'(u) + u \phi''(u)) \cdot r_2r_3 dr_2 dr_3  \\
& + & (\phi'(u) + u \phi''(u)) \cdot (1+r_1^2)(1+r_2^2)\cdot dr_3^2 \biggr\}.
\een
\ben
g_{a_1,a_2}^\theta & = & 2 \biggl\{
\biggl( \frac{a_1}{(1+r_1^2)^2}
   + r_3^2(1+r_2^2) \phi'(u) +r_1^2(1+r_2^2)^2r_3^4\phi''(u)\biggr) r_1^2 d\theta_1^2  \\
& + & \biggl( \frac{a_2}{(1+r_2^2)^2}
   + r_3^2(1+r_1^2) \phi'(u) +r_2^2(1+r_1^2)^2r_3^4\phi''(u)\biggr) r_2^2 d\theta_2^2 \\
& + & 2 r_3^2 (\phi'(u) + u \phi''(u)) \cdot r_1^2r_2^2  d \theta_1 d \theta_2 \\
& + & 2 (1+r_2^2) (\phi'(u) + u \phi''(u)) \cdot r_1^2r_3^2 d \theta_1 d\theta_3  \\
& + & 2 (1+r_1^2) (\phi'(u) + u \phi''(u)) \cdot r_2^2r_3^2 d \theta_2d\theta_3 \\
& + & (\phi'(u) + u \phi''(u)) \cdot (1+r_1^2)(1+r_2^2) r_3 ^2d \theta_3^2 \biggr\}.
\een
From \eqref{eqn:y1} to \eqref{eqn:y3} one can get:
\bea
&& r_j = \sqrt{\frac{y_j+a_j}{y-y_j}}, \;\;\;\; j=1,2, \\
&& r_3 = \sqrt{\frac{y (y-y_1)(y-y_2)}{\phi'(u)\cdot (y+a_1)(y+a_2)}}.
\eea
With these identities one can compute $g_a^r$ and $g_a^\theta$.
For $g_a^\theta$ we get:
\ben
g_{a_1,a_2}^\theta & = & 2\sum_{j=1}^2 \frac{(y-y_j)(y_j+a_j)}{y+a_j} d \theta_j^2
+2 \frac{u}{\frac{du}{dy}} \biggl( \sum_{j=1}^3 \frac{y_j+a_j}{y+a_j}
d \theta_j \biggr)^2.
\een

Denote by $(G_{jk})$ the inverse matrix of $(G^{jk})$.
By a computation we have found the following
remarkably simple expressions for $G_{jk}$:
\ben
&& G_{11} = \frac{1}{y_1+a_1} +\frac{1}{y-y_1}, \;\;\;\;\;
 G_{22} = \frac{1}{y_2+a_2} +\frac{1}{y-y_2}, \\
&& G_{12} = G_{21} = 0, \\
&& G_{13} = G_{31}  = - \frac{1}{y-y_1}, \;\;\;\;\;
 G_{23} = G_{32}  = - \frac{1}{y-y_2},  \\
&& G_{33} = \frac{\frac{du}{dy}}{u}
+ \frac{1}{y-y_1}-\frac{1}{y+a_1} +\frac{1}{y-y_2}-\frac{1}{y+a_2}.
\een
It is clear that
\ben
g_{a_1,a_2}^r = 2 \sum_{i,j=1}^3 G^{ij} \frac{dr_i}{r_i} \frac{dr_j}{r_j}.
\een
So we have:
\ben
g_{a_1,a_2}^r & = & 2\sum_{j=1}^2 \frac{(y-y_j)(y_j+a_j)}{y+a_j} (d \log r_j)^2
+ 2 \frac{u}{\frac{du}{dy}} \biggl( \sum_{j=1}^3 \frac{y_j+a_j}{y+a_j}
d \log r_j \biggr)^2.
\een
By a straightforward computation we
check that:
\be
g_{a_1,a_2}^r = \half \sum_{j,k=0}^2 G_{jk} dy_j dy_k.
\ee
This completes the proof.
\end{proof}

Again as in \S \ref{sec:Hessian}
we are in the regime of Hessian geometry.
It is easy to see that
\be
G_{ij} = \frac{\pd^2\psi}{\pd y_i\pd y_j}
\ee
for the function $\psi$ defined by:
\be \label{def:Psi-4}
\begin{split}
\psi = & \sum_{j=1}^2 \biggl( (y-y_j) (\ln (y-y_j) -1) + (y_j+a_j) (\ln (y_j+a_j)-1) \\
- & (y+a_j) (\ln (y+a_j)-1) \biggr) +  \int \log u(y) dy.
\end{split}
\ee
One can check that for $i=1,2,3$,
\ben
\frac{dz_i}{z_i}
& = & \half \sum_{j=1}^3 \frac{\pd^2\psi}{\pd y_i \pd y_j} dy_j
+ \sqrt{-1} d\theta_i,
\een
where $z_3 = w$.

The dual local coordinates $y_i^\vee$ are given by
\bea
&& y_j^\vee = \ln(y_j+a_j)-\ln(y-y_j) = 2\ln r_j, \;\;\;\; j=1,2, \\
&& y_3^\vee =
\sum_{j=1}^2 \biggl( \ln(y-y_j)-\ln(y+a_j) \biggr)+\ln(u(y))
= 2 \log r_3,
\eea
and dual potential function:
\be
\psi^\vee = y \ln u(y)  -  \int \ln u(y) dy  - \sum_{j=1}^2 a_j \ln \frac{y_j+a_j}{y+a_j}
= \phi - \sum_{j=1}^2 a_j \ln \frac{r_j^2}{1+r_j^2}.
\ee

\subsection{Riemannian metric in polar coordinates}
As in \S \ref{sec:Polar} we introduce polar coordinates,
one for each of $z_j$ as follows.
For convenience, we will repeat some of the computations there.
Let
\begin{align}
x_j^0 & = \frac{1-|z_j|^2}{|z_j|^2+1}, & x_j^1+\sqrt{-1} x_j^2 = \frac{2z}{|z_j|^2+1},
\end{align}
Then one has
\be
z_j= \frac{x_j^1+\sqrt{-1}x_j^2}{x_j^0+1}.
\ee
Let $\theta_j$ and $\phi_j$ be Euler angles such that
\begin{align}
x_j^0 & = \cos \theta_j, & x_j^1 & = \sin \theta_j \cos \phi_j, & 
x_j^2 & = \sin \theta_j \sin \phi_j.
\end{align}
Then one has
\be
z_j = e^{i \phi_j} \tan (\frac{\theta_j}{2}),
\ee
and so we have
\be
1+|z_j|^2= \frac{1}{\cos^2(\frac{\theta_j}{2})}
\ee
and
\be
dz_j = i e^{i\phi_j} \tan(\frac{\theta_j}{2}) d \phi_j
+\half \frac{e^{i\phi_j}}{\cos^2 (\frac{\theta_j}{2})} d\theta_j.
\ee
Therefore,
\be
\frac{|dz_j|^2}{(|z_j|^2+1)^2} = \frac{1}{4} (d\theta_j^2 + \sin^2\theta_j d\phi^2).
\ee
Now let
\be
w = r \cos (\frac{\theta_1}{2}) \cos (\frac{\theta_2}{2}) \cdot e^{i\beta}.
\ee
Its differential is given by:
\be
dw = r \cos (\frac{\theta_1}{2}) \cos (\frac{\theta_2}{2}) \cdot e^{i\beta}
(\frac{dr}{r} - \frac{1}{2} \tan(\frac{\theta_1}{2})d\theta_1
- \frac{1}{2} \tan(\frac{\theta_2}{2}) d\theta_2 + i d\beta).
\ee
It is easy to see that $u = r^2$.

The Riemannian metric is given by:
\ben
g_{a_1,a_2} & = &
\sum_{j=1}^2 \frac{1}{(1+|z_j|^2)^2}
\biggl(a_j + y(u) +|z_j|^2uy'(u)\biggr)
|dz_j|^2   \\
& + & 2|w|^2 y'(u) \cdot \Re (\bar{z}_1z_2 dz_1 d \bar{z}_2) \\
& + & 2(1+|z_2|^2) y'(u) \cdot \Re (\bar{w} z_1 d w  d\bar{z}_1) \\
& + & 2(1+|z_1|^2) y'(u) \cdot \Re (\bar{w} z_2 d w d\bar{z}_2) \\
& + & y'(u) \cdot \frac{u}{|w|^2} \cdot |dw|^2.
\een
It can be rewritten as:
\be
g_{a_1,a_2} =
\sum_{j=1}^2 \frac{a_j + y(u)}{(1+|z_j|^2)^2} |dz_j|^2
+ u y'(u) \cdot |\gamma|^2,
\ee
where
\be
\gamma= \frac{\bar{z}_1}{1+|z_1|^2} dz_1+ \frac{\bar{z}_2}{1+|z_2|^2} dz_2
+ \frac{\bar{w}}{|w|^2} d w.
\ee
It is easy to find
\be
\gamma = \frac{dr}{r} + i (d\beta + \sin^2(\frac{\theta_1}{2}) d\phi_1
+ \sin^2(\frac{\theta_2}{2}) d \phi_2),
\ee
so we get:
\ben
g_{a_1,a_2} & = &  y'(u) dr^2 +
\sum_{j=1}^2 \frac{1}{4} \biggl(a_j + y(u)\biggr)
(d\theta_j^2 + \sin^2\theta_j d\phi_j^2)  \\
& + & uy'(u) \cdot (d\beta + \sin^2(\frac{\theta_1}{2}) d\phi_1
+ \sin^2(\frac{\theta_2}{2}) d \phi_2)^2.
\een
Let $\psi = \theta_1+\theta_2-2\beta$,
\be
\begin{split}
g_{a_1,a_2}  = &  y'(u) dr^2 +
\sum_{j=1}^2 \frac{1}{4} \biggl(a_j + y(u)\biggr)
(d\theta_j^2 + \sin^2\theta_j d\phi_j^2)  \\
+ & \frac{1}{4}uy'(u) \cdot (d\psi + \cos(\theta_1) d\phi_1
+ \cos(\theta_2) d \phi_2)^2.
\end{split}
\ee

\subsection{Kepler metric on $K_3$ lifted to $\cO_{\bP^1 \times \bP^1}(-1,-1)$}

By \eqref{eqn:Rawnsley} and Proposition \ref{prop:Norm},
the Kepler metric on $K_3$ when lifted to $\cO_{\bP^1 \times \bP^1}(-1,-1)$
is of the type \eqref{eqn:CA-Conifold2} with
$a_1=a_2=0$ and $\phi(u) = u^{1/2}$.
Again we have $y = \half u^{1/2}$ and $u = \frac{1}{4}y^2$.
In this case the moment map is given by:
\ben
&& y_1 = \frac{r_1^2r_3}{2}\sqrt{\frac{1+r_2^2}{1+r_1^2}}, \;\;\;\;\;\;
y_2 = \frac{r_2^2r_3}{2}\sqrt{\frac{1+r_1^2}{1+r_2^2}}, \\
&& y_3 = \frac{r_3}{2} \sqrt{(1+r_1^2)(1+r_2^2)}.
\een
It has the following inverse map:
\ben
&& r_j = \sqrt{\frac{y_j}{y-y_j}}, \;\;\;\; j=1,2, \\
&& r_3 = 2\sqrt{(y-y_1) (y-y_2)}.
\een
The image of the moment map is the convex cone
 bounded by four planes:
\be
y_j \geq 0,  \;\;\;\; \;\;\; y \geq y_j,  \;\;\; j=1,2,
\ee
generated by $(1,1,0)$, $(1,0,1)$, $(1,1,1)$, $(0,0,1)$.

One can check that
\ben 
\psi = \sum_{j=1}^2 \biggl( (y-y_j) (\ln (y-y_j) -1) + y_j(\ln (y_j)-1)  \biggr).
\een
The dual local coordinates $y_i^\vee$ are given by
\ben
&& y_j^\vee = \ln(y_j+a_j)-\ln(y-y_j), \;\;\;\; j=1,2, \\
&& y_3^\vee = \sum_{j=1}^2 \ln(y-y_j),
\een
and dual potential function:
\ben
\psi^\vee = 2y = \phi.
\een

\subsection{K\"ahler Ricci-flat metrics}
Rewrite $\omega$ in the following form:
\ben
\omega_{a_1,a_2} & = & \sqrt{-1} \biggl\{
\sum_{j=1}^2 \frac{1}{(1+|z_j|^2)^2}
\biggl(a_j + y(u) +|z_j|^2uy'(u)\biggr)
dz_j  \wedge d\bar{z}_j  \\
& + & |w|^2 y'(u) \cdot (\bar{z}_1z_2 dz_1 \wedge d \bar{z}_2
+ \bar{z}_2z_1 dz_2 \wedge d \bar{z}_1 ) \\
& + & (1+|z_2|^2) y'(u) \cdot (\bar{w} z_1 dw \wedge d\bar{z}_1
+ w\bar{z}_1 dz_1 \wedge d \bar{w}) \\
& + & (1+|z_1|^2) y'(u) \cdot (\bar{w} z_2 d w \wedge d\bar{z}_2
+ w \bar{z}_2 dz_2 \wedge d \bar{w}) \\
& + & y'(u) \cdot \frac{u}{|w|^2} \cdot dw \wedge d \bar{w} \biggr\}.
\een
Write $\omega = \sqrt{-1} h_{a\bar{b}}dz_a d\bar{z}_b$, where $z_3 =w$,
and let $\Phi = \det (h_{a\bar{b}})$.
It is straightforward to find:
\ben
\Phi & = & \frac{(a_1+y) (a_2+y) y'}{(1+|z_1|^2)(1+|z_2|^2)}.
\een
The Ricci form is $\rho = - \sqrt{-1} \pd \dbar \Phi$, and since
\ben
\pd\dbar \log \Phi & = & -\frac{dz_1 \wedge d\bar{z}_1}{(1+|z_1|^2)^2}
-\frac{dz_2 \wedge d\bar{z}_2}{(1+|z_2|^2)^2}  \\
& + & (\log((a_1+y)(a_2+y)y'))'\cdot \pd \dbar u
+ (\log((a_1+y)(a_2+y)y'))''\cdot \pd u \wedge \dbar u,
\een
for all the coefficients to vanish we need:
\ben
&& -1 + \alpha'(u)  \cdot  u (1+|z_j|^2) + \alpha''(u) \cdot u^2 |z_j|^2  = 0,
\;\;\;\; j = 1, 2, \\
&& \alpha'(u)   + u\alpha''(u) =0,
\een
where $\alpha(u) = \log ((a_1+y)(a_2+y)y')$.
One easily finds that
\be
u \alpha'(u) = 1,
\ee
It follows that
\be
(a_1+y)(a_2+y) y'  = c u.
\ee
After integration:
\be
\frac{y^3}{3} + \frac{a_1+a_2}{2} y^2 + a_1a_2y -\frac{c}{2} u^2 - c_1 = 0.
\ee
Take $c=\frac{2}{3}$ and $c_1 = 0$,
one gets a cubic equation
\be \label{eqn:a1a2}
\frac{y^3}{3} + \frac{a_1+a_2}{2} y^2 + a_1a_2y -\frac{1}{3} u^2 = 0.
\ee

\subsubsection{The case of $a_1=a_2=0$}

In this case $y^3 = u^2$, $y = u^{2/3}$,
$\phi'(u) = u^{-1/3}= y^{-1/2}$.
The moment map is given by:
\ben
&& y_1 =  r_1^2r_3^2 (1+r_2^2) \cdot ((1+r_1^2)(1+r_2^2)r_3^2)^{-1/3},  \\
&& y_2 =  r_2^2r_3^2 (1+r_1^2) \cdot ((1+r_1^2)(1+r_2^2)r_3^2)^{-1/3}, \\
&& y_3 =  ((1+r_1^2)(1+r_2^2)r_3^2)^{2/3}.  
\een
We have
\ben
&& r_j = \sqrt{\frac{y_j}{y-y_j}}, \;\;\;\; j=1,2, \\
&& r_3 = \sqrt{\frac{(y-y_1)(y-y_2)}{y^{1/2} }}.
\een
The image of the moment map is the same as in the case of 
Kepler metric on $K_3$.
The complex potential function $\psi$ is defined by:
\ben 
\psi & = & \sum_{j=1}^2 \biggl( (y-y_j) (\ln (y-y_j) -1) 
+ y_j (\ln (y_j)-1) \biggr) - \half y (\ln (y)-1).
\een
The dual local coordinates $y_i^\vee$ are given by
\ben
&& y_j^\vee = \ln(y_j)-\ln(y-y_j) = 2\ln r_j, \;\;\;\; j=1,2, \\
&& y_3^\vee =
\sum_{j=1}^2 \ln(y-y_j)- \half \ln(y) 
= 2 \log r_3,
\een
and dual potential function:
\ben
&& \psi^\vee = \frac{3}{2} y    
= \phi .
\een

\subsubsection{The case of one of $a_1, a_2$ is nonzero}

Suppose that $a_1 = a$, $a_2 = 0$.
We blow down $\cO_{\bP^1\times \bP^1}(-1,-1)$ along
the copy of $\bP^1$ parameterized by $z_1$.
In local coordinates $(z_1, z_2, w)$,
we make the following change of variables:
\begin{align}
z & = z_1, & w_1 & = w, & w_2 = wz.
\end{align}
Since 
$$(|z|^2+1)(|w_1|^2+|w_2|^2) = (|z_1|^2+1)(|z_2|^2+1)|w|^2,$$
after blowing down we are in the situation of \S \ref{sec:Resolved}.

\subsubsection{The case of $a_1 >0$ and $a_2 > 0$}
By \eqref{eqn:a1a2},
\be 
u = \biggl(y^3 + \frac{3(a_1+a_2)}{2} y^2 + 3a_1a_2y\biggr)^{1/2}.
\ee
We will rewrite as
\be
u =(y(y-\beta_+)(y-\beta_-))^{1/2},
\ee
where $\beta_\pm$  are given by the root formula:
\be
\beta_\pm = - \frac{3(a_1+a_2)}{4} + \frac{1}{2}\sqrt{(3a_1-a_2)(a_1-3a_2)}.
\ee
Hence one finds the  potential functions $\phi$ and $\psi$ explicitly as
functions in $y$:
\be
\phi = \frac{3y}{2} + \half \beta_+ \ln(y-\beta_+) + \half \beta_- \ln(y-\beta_-),
\ee
\be 
\begin{split}
\psi = & \sum_{j=1}^2 \biggl( (y-y_j) (\ln (y-y_j) -1) + (y_j+a_j) (\ln (y_j+a_j)-1) \\
- & (y+a_j) (\ln (y+a_j)-1) \biggr) +  \half y (\ln(y)-1) \\
+ & \half (y-\beta_+) (\ln(y-\beta_+) -1)
+ \half (y-\beta_-) (\ln(y-\beta_-) -1).
\end{split}
\ee

By \eqref{eqn:y1}-\eqref{eqn:y3},
the moment map is given by:
\ben
&& y_1 = -\frac{a_1}{1+r_1^2} + \frac{r_1^2}{1+r_1^2} y,\\
&& y_2 = -\frac{a_2}{1+r_2^2} + \frac{r_2^2}{1+r_2^2} y, \\
&& y_3 = y,
\een
and we have
\ben
&& r_j = \sqrt{\frac{y_j+a_j}{y-y_j}}, \;\;\;\; j=1,2, \\
&& r_3 = \sqrt{\frac{\biggl(y^3 + \frac{3(a_1+a_2)}{2} y^2 + 3a_1a_2y\biggr)^{1/2} (y-y_1)(y-y_2)}{(y+a_1)(y+a_2)}}.
\een
The image of the moment map is the convex domain in $\bR^3$ bounded by:
\begin{align}
y_j & \geq -a_j, & y & \geq y_j, \;\; j=1,2, & y & \geq 0.
\end{align}
Examining the above formula for $\phi$ and $\psi$,
we see that they are not directly given by the boundary
of the convex domain.

By the Inverse Function Theorem,
when $y > 0$,
$y$ is a smooth function in $u$.
So far we have obtained smooth K\"ahler Ricci-flat metrics
on $\cO_{\bP^1\times \bP^1}(-1,-1)$ away from the zero section.
To examine the behavior of the metric along the zero section,
as in \S \ref{sec:Resolved},
we  analyse the behavior of $y$ and $y'(u)$ as $u \to 0$.
From \eqref{eqn:a1a2},
\be
y(u) \sim \frac{u^2}{3a_1a_2} 
-\frac{1}{18}\frac{a_1+a_2}{a_1^3a_2^3}u^4,
\ee
and so from \eqref{eqn:Resolved-Con1},
\be
y'(u) \sim \frac{2u}{3a_1a_2}
-\frac{2}{9}\frac{a_1+a_2}{a_1^3a_2^3}u^3.
\ee
Therefore, as $u \to 0$,
\ben
\omega_{a_1,a_2} & \sim & \sqrt{-1} \biggl\{
\sum_{j=1}^2 \frac{1}{(1+|z_j|^2)^2}
\biggl(a_j +  \frac{u^2}{3a_1a_2}  +|z_j|^2u\cdot \frac{2u}{3a_1a_2}\biggr)
dz_j  \wedge d\bar{z}_j  \\
& + & |w|^2 \frac{2u}{3a_1a_2} \cdot (\bar{z}_1z_2 dz_1 \wedge d \bar{z}_2
+ \bar{z}_2z_1 dz_2 \wedge d \bar{z}_1 ) \\
& + & (1+|z_2|^2)\frac{2u}{3a_1a_2} \cdot (\bar{w} z_1 dw \wedge d\bar{z}_1
+ w\bar{z}_1 dz_1 \wedge d \bar{w}) \\
& + & (1+|z_1|^2) \frac{2u}{3a_1a_2} \cdot (\bar{w} z_2 d w \wedge d\bar{z}_2
+ w \bar{z}_2 dz_2 \wedge d \bar{w}) \\
& + & \frac{2u}{3a_1a_2} \cdot \frac{u}{|w|^2} \cdot dw \wedge d \bar{w} \biggr\}.
\een

Along the zero section of $\cO_{\bP^1\times \bP^1}(-1,-1)$
given in local coordinates by $w = 0$,
$$\omega_{a_1,a_2} 
\sim  \sqrt{-1} \biggl\{
\sum_{j=1}^2 \frac{1}{(1+|z_j|^2)^2}
\biggl(a_j +  \frac{u^2}{3a_1a_2}  +|z_j|^2u\cdot \frac{2u}{3a_1a_2}\biggr)
dz_j  \wedge d\bar{z}_j  \biggr\}
$$ 
is {\em degenerate}.
To fix this problem,
recall there is a $2:1$ map from $\cO_{\bP^1\times \bP^1}(-1,-1)$
to $\cO_{\bP^1\times \bP^1}(-2,-2)$.
We will work with local coordinates $(z_1, z_2, v)$,
$(\tilde{z}_1, \tilde{z}_2, \tilde{v})$,
$(\hat{z}_1, \hat{z}_2, \hat{v})$,
$(\hat{\tilde{z}}_1, \hat{\tilde{z}}_2, \hat{\tilde{v}})$
on $\cO_{\bP^1 \times \bP^1}(-2,-2)$. 
They are related in the following way:
\begin{align*}
\tilde{z}_1 & = \frac{1}{z_1}, &
\tilde{z}_2 & = z_2, & \tilde{v} & = z_1^2v, \\
\hat{z}_1 & = z_1, &
\hat{z}_2 & = \frac{1}{z_2},
& \hat{v} & = z_2^2v, \\
\hat{\tilde{z}}_1 & = \frac{1}{z_1}, &
\hat{\tilde{z}}_2 & = \frac{1}{z_2},
& \hat{\tilde{v}} & = z_1^2z_2^2v.
\end{align*}
The $2:1$ map is given locally by:
\be
(z_1, z_2, w) \mapsto (z_1, z_2, v= w^2).
\ee
Since $w = \pm v^{1/2}$,
$\omega_{a_1, a_2}$ is the pullback of a 
K\"ahler form $\hat{\omega}_{a_1,a_2}$
on $\cO_{\bP^1\times \bP^1}(-2,-2)-\bP^1\times \bP^1$,
and as $v \to 0$,
we have
\ben
\hat{\omega}_{a_1,a_2} & \sim & \sqrt{-1} \biggl\{
\sum_{j=1}^2 \frac{1}{(1+|z_j|^2)^2}
\biggl(a_j +  \frac{|v|^2\prod_{k=1}^2(1+|z_k|^2)^2}{3a_1a_2}  \\
& + & |z_j|^2\cdot \frac{2|v|^2\prod_{k=1}^2(1+|z_k|^2)^2}{3a_1a_2}\biggr)
dz_j  \wedge d\bar{z}_j  \\
& + & \frac{2|v|^2\prod_{k=1}^2(1+|z_k|^2)}{3a_1a_2} \cdot (\bar{z}_1z_2 dz_1 \wedge d \bar{z}_2
+ \bar{z}_2z_1 dz_2 \wedge d \bar{z}_1 ) \\
& + & (1+|z_2|^2)\frac{\prod_{k=1}^2(1+|z_k|^2)}{3a_1a_2} \cdot 
(\bar{v} z_1 dv \wedge d\bar{z}_1
+ v\bar{z}_1 dz_1 \wedge d \bar{v}) \\
& + & (1+|z_1|^2) \frac{\prod_{k=1}^2(1+|z_k|^2)}{3a_1a_2} 
\cdot (\bar{v} z_2 d v \wedge d\bar{z}_2
+ v \bar{z}_2 dz_2 \wedge d \bar{v}) \\
& + & \frac{\prod_{k=1}^2(1+|z_k|^2)}{6a_1a_2} 
\cdot dv \wedge d \bar{v} \biggr\}.
\een
Clearly it can be extended smoothly over the zero section
given by $v=0$.
Therefore,
$\hat{\omega}_{a_1,a_2}$ gives a family of K\"ahler Ricci-flat
metric on the total space of $\cO_{\bP^1\times \bP^1}(-2,-2)$.

Now that we are working on $\cO_{\bP^1\times \bP^1}(-2,-2)$,
we need to modify the symplectic coordinates.
Write $z_j = \hat{r}_j e^{i\hat{\theta}_j}$, $j=1,2$, $v = \hat{r}_3e^{i\hat{\theta}_3}$.
Obviously, $\hat{\theta}_1 = \theta_1$, $\hat{\theta}-2 = \theta_2$, $\hat{\theta}_3 = 2\theta_3$,
Then
\be
\hat{\omega}_{a_1,a_2} = d\hat{y}_1 \wedge d\hat{\theta}_1 
+ d\hat{y}_2 \wedge d\hat{\theta}_2 
+ d\hat{y}_3 \wedge d\hat{\theta}_3,
\ee
where $\hat{y}_1, \hat{y}_2, \hat{y}_3$ are defined by:
\begin{align}
\hat{y}_1 & = y_1, & \hat{y}_2 & = y_2, & \hat{y}_3 & = \half y_3.
\end{align} 
Therefore,
the image of the moment map on $\cO_{\bP^1\times \bP^1}(-2,-2)$
is the convex domain in $\bR^3$ bounded by:
\begin{align}
\hat{y}_j & \geq -a_j, & \hat{y}_3 & \geq \half \hat{y}_j, \;\; j=1,2, & \hat{y}_3 & \geq 0.
\end{align}

\section{Summary}

\label{sec:Summary}

In this paper we have considered the Kepler manifolds $K_n$.
The following facts are well-known in the lieterature. 
As symplectic manifolds,
$K_n \cong T^*S^n - S^n$, with their natural symplectic structures;
as complex manifolds,
$K_n \cong \{ z_1^2 + \cdots + z_n^2 =0\} - \{0\}$,
and when $n=2$, $K_2 \cong (\bC^2 - \{0\})/\bZ_2 \cong \cO_{\bP^1}(-2) - \bP^1$ is a hypercomplex manifold,
and when $n=3$, $K_3 \cong (\cO_{\bP^1}\oplus \cO_{\bP^1}(-1))- \bP^1 \cong
\cO_{\bP^1 \times \bP^1}(-1,-1) -\bP^1 \times \bP^1$;
the complex structures and the symplectic structures on $K_n$ are compatible,
giving rise to K\"ahler structures called Kepler metrics on $K_n$.
The Kepler metric has K\"ahler potential $\half (|z_1|^2+ \cdots + |z_n|^2)^{1/2}$,
and this can be also used to define K\"ahler metrics on deformed conifolds.

Based on these facts,
in this paper we have presented the following results.
Kepler metrics are Sasaki metrics on the conormal bundles of the round spheres.
The Kepler metrics are related to Sasaki metrics
on  the unit  conormal bundles of the round spheres,
which are Sasakian.
When $n=2$ and $3$,
the Kepler metrics can be studied using the Hessian geometry,
arising in the setting of K\"ahler metrics with toric symmetries.
When $n=3$ we have the following diagram of related spaces:
$$
\xymatrix{
&  \cO_{\bP^1\times \bP^1}(-2,-2) \ar[d] & \\
                & \cO_{\bP^1\times \bP^1}(-1,-1) \ar[dl]_{\pi_1} \ar[dr]^{\pi_2}     \\
 \cO_{\bP^1}(-1) \oplus  \cO_{\bP^1}(-1)
 \ar[dr]  & &    \cO_{\bP^1}(-1) \oplus  \cO_{\bP^1}(-1)  \ar[dl]     \\
& \{z_1^2+\cdots + z_4^2=0\} \ar[d] & \\ 
& \{z_1^2+\cdots + z_4^2=a\} &
}
$$
and for $n=2$ we have a similar diagram.
We have explicit constructions of the pullback of Kepler metrics
or some K\"ahler Ricci-flat metrics.
We conjecture that these metrics are related to each other 
by special K\"ahler Ricci flow in a fashion similar to the one described 
in \S \ref{sec:KRF-Hessian}.
Surprisingly,
in all of our examples,
the complex potential functions take the following form:
\be
\psi = \sum_{i=1}^k l_i \ln(l_i),
\ee
where $l_i$ are some linear functions,
some of which are defining functions of the boundary
hyperplanes of the convex bodies arising as the image of the moment maps. 
This generalizes the formula in Guillemin \cite{Gui}, Abreu \cite{Abreu}
and Donaldson \cite{Donaldson}.
We will verify this formula for more examples in subsequent work.

The motivation for our investigations is to establish a link 
between the Kepler problem in classical gravity
and the modern geometric studies of string theory.
The basic principle is to apply the intrinsic symmetry of the problem.
In this paper we have related the geometry of Kepler problem
to Sasakian geometry and Hessian geometry
which have play important roles in the study of ADS/CFT 
correspondence (see e.g. Martelli-Sparks-Yau \cite{Mar-Spa-Yau}).
A big difference from  string theory is that we are using Hessian geometry 
to directly 
produce the phase space of the dynamical system of the Kepler problem,
while in string theory
one first obtains some spaces in extra dimensions for compactifications.
We will continue this line of research in later parts of this series
of papers.

\vspace{.2in}
{\em Acknowledgements}.
The research in this work is partially supported by NSFC grant 11661131005
for Russia-China Collaborations on Integrable Systems in Mathematical Physics
and Differential Geometry.
The author's interest in the geometric aspects of classical dynamical systems 
was stimulated at Dynamics in Siberia 2017.
He thanks the organizers and participants for the hospitality enjoyed 
at this conference.
Some of the materials in this paper were lectured on in a graduate course 
in Tsinghua University in the spring semester of 2017.
The author thanks the attendants of this course for sharing their enthusiasms.   
Finally, special thanks are due to Guowu Meng for introducing the 
author to the Kepler problem many years ago.

\end{document}